\definecolor{winered}{rgb}{0.6,0,0}
\definecolor{lessblue}{rgb}{0,0,0.7}
\newcommand{\myitem}[3]{\item[#2]\def\@currentlabel{#3}\label{#1}}
\def\@tocline#1#2#3#4#5#6#7{
\begingroup
  \par
    \parindent\z@ \leftskip#3 \relax \advance\leftskip\@tempdima\relax
                  \rightskip\@pnumwidth plus 4em \parfillskip-\@pnumwidth
    \ifcase #1 
       \vskip 0.6em \hskip 0em 
       \or
       \or \hskip 0em 
       \or \hskip 1em 
    \fi%
    %
    #6
    %
    \nobreak\relax{\leavevmode\leaders\hbox{\,.}\hfill}
    \hbox to\@pnumwidth {\@tocpagenum{#7}}
  \par
\endgroup
}
 \def\l@section{\@tocline{0}{0pt}{0pc}{}{}}
\renewcommand{\tocsection}[3]{%
  \indentlabel{\@ifnotempty{#2}{ 
    \ignorespaces\bfseries{#2. #3}}}
  \indentlabel{\@ifempty{#2}{\ignorespaces\bfseries{#3}}{}} 
    \vspace{1.5pt}}
\renewcommand{\tocsubsection}[3]{%
  \indentlabel{\@ifnotempty{#2}{
    \ignorespaces#2. #3}}
  \indentlabel{\@ifempty{#2}{\ignorespaces #3}{}}
    \vspace{1.5pt}}
\renewcommand{\tocsubsubsection}[3]{%
  \indentlabel{\@ifnotempty{#2}{
    \ignorespaces#2. #3}}
  \indentlabel{\@ifempty{#2}{\ignorespaces #3}{}}
    \vspace{1.5pt}}
\def\@nomenstarted{0}
\newlength{\@nomenoldtabcolsep}
\newcommand{\nomenstart}
  {%
    \def\@nomenstarted{1}%
    \setlength{\@nomenoldtabcolsep}{\tabcolsep}%
    \setlength{\tabcolsep}{3.5pt}%
    \begin{longtable}{p{0.11\textwidth} p{0.86\textwidth}}
  }
\newcommand{\nomenitem}[2]{%
    \ifcase\@nomenstarted%
      \or 
      \or \\ 
    \fi%
    #1\,{\leavevmode\leaders\hbox{\,.}\hfill} & #2%
    \def\@nomenstarted{2}%
  }%
\newcommand{\nomenend}
  {\\%
      \end{longtable}%
      \setlength{\tabcolsep}{\@nomenoldtabcolsep}%
      \def\@nomenstarted{0}%
  }
\newcommand{\vast}{\bBigg@{4}}
\newcommand{\Vast}{\bBigg@{5}}
\numberwithin{equation}{section}
\numberwithin{figure}{section}
\newtheorem{thm}{Theorem}[section]
\newtheorem{prop}[thm]{Proposition}
\newtheorem{lemma}[thm]{Lemma}
\newtheorem{prob}[thm]{Problem}
\newtheorem*{thm*}{Theorem}
\newtheorem*{prop*}{Proposition}
\newtheorem*{cor*}{Corollary}
\newtheorem*{conj*}{Conjecture}
\theoremstyle{definition}
\newtheorem{definition}[thm]{Definition}
\theoremstyle{remark}
\newtheorem{rmk}[thm]{Remark}
\newcommand{\mc}{\mathcal}
\newcommand{\cA}{\mc A}
\newcommand{\cC}{\mc C}
\newcommand{\cF}{\mc F}
\newcommand{\cH}{\mc H}
\newcommand{\cO}{\mc O}
\newcommand{\cX}{\mc X}
\newcommand{\ms}{\mathscr}
\newcommand{\sC}{\ms C}
\newcommand{\TT}{\mathbb{T}}
\newcommand{\C}{\mathbb{C}}
\newcommand{\N}{\mathbb{N}}
\newcommand{\R}{\mathbb{R}}
\newcommand{\Z}{\mathbb{Z}}
\newcommand{\Sph}{\mathbb{S}}
\newcommand{\fm}{\mathfrak{m}}
\newcommand{\ran}{\operatorname{ran}}
\renewcommand{\Re}{\operatorname{Re}}
\renewcommand{\Im}{\operatorname{Im}}
\newcommand{\mathspan}{\operatorname{span}}
\newcommand{\supp}{\operatorname{supp}}
\newcommand{\dS}{{\mathrm{dS}}}
\newcommand{\QNM}{{\mathrm{QNM}}}
\newcommand{\eps}{\epsilon}
\newcommand{\hra}{\hookrightarrow}
\newcommand{\la}{\langle}
\newcommand{\ol}{\overline}
\newcommand{\pa}{\partial}
\newcommand{\ra}{\rangle}
\newcommand{\weakto}{\rightharpoonup}
\newcommand{\wt}{\widetilde}
\newcommand{\pfstep}[1]{$\bullet$\ \textbf{#1}}
\newcommand{\dd}{{\mathrm{d}}}
\newcommand{\bop}{{\mathrm{b}}}
\newcommand{\cp}{{\mathrm{c}}}
\newcommand{\half}{{\tfrac{1}{2}}}
\newcommand{\CI}{\cC^\infty}
\newcommand{\CIc}{\cC^\infty_\cp}
\newcommand{\Hb}{H_{\bop}}
\newcommand{\bhm}{\fm}
\newcommand{\openbigpmatrix}[1]
  {%
    \def\@bigpmatrixsize{#1}%
    \addtolength{\arraycolsep}{-#1}%
    \begin{pmatrix}%
  }
\newcommand{\closebigpmatrix}
  {%
    \end{pmatrix}%
    \addtolength{\arraycolsep}{\@bigpmatrixsize}%
  }
\newlength{\enummargin}\setlength{\enummargin}{1.8em}
\newcommand{\usref}[1]{{\upshape\ref{#1}}}
\newcounter{@enumsave}
\newcommand{\ctrset}{\setcounter{enumi}{\the@enumsave}}
\newcommand*{\fwbw}[1]{\expandafter\@fwbw\csname c@#1\endcsname}
\newcommand*{\@fwbw}[1]{\ifcase #1 \or {\rm fw}\or {\rm bw}\fi}
\AddEnumerateCounter{\fwbw}{\@fwbw}
\begin{document}

\title{Quasinormal modes of small Schwarzschild--de~Sitter black holes}

\begin{abstract}
  We study the behavior of the quasinormal modes (QNMs) of massless and massive linear waves on Schwarzschild--de~Sitter black holes as the black hole mass tends to $0$. Via uniform estimates for a degenerating family of ODEs, we show that in bounded subsets of the complex plane and for fixed angular momenta, the QNMs converge to those of the static model of de~Sitter space. Detailed numerics illustrate our results and suggest a number of open problems.
\end{abstract}

\date{\today}

\author{Peter Hintz}
\address{Department of Mathematics, Massachusetts Institute of Technology, Cambridge, Massachusetts 02139-4307, USA}
\email{phintz@mit.edu}

\author{YuQing Xie}
\address{Department of Physics, Massachusetts Institute of Technology, Cambridge, Massachusetts 02139-4307, USA}
\email{xyuqing@mit.edu}

\maketitle

\section{Introduction}
\label{SI}

We present a study of the quasinormal mode (QNM) spectrum of Schwarzschild--de~Sitter (SdS) black holes with very small mass $\bhm$, that is, the radius of the event horizon is very small compared to the radius of the cosmological horizon. The limit $\bhm\searrow 0$ is a singular one: while naively one obtains (the static model of) de~Sitter space as the limiting spacetime for $\bhm=0$, the topology changes in a discontinuous manner as the black hole disappears.

We supplement our results with numerics. Building on code developed by Jansen \cite{JansenMathematica}, we demonstrate how to accurately compute QNMs and mode solutions for black holes with extremely small mass using a carefully chosen discretization scheme. We also numerically compute the \emph{dual resonant states} (sometimes called \emph{co-resonant states}) which appear in formulas for computing numerical coefficients in quasinormal mode expansions, see~\S\ref{SsIW}; they are typically rather singular distributions. To our knowledge, this is the first numerical study of these in the context of black hole perturbation theory.

In order to explain our results in more detail, fix the cosmological constant $\Lambda>0$. In static coordinates, the metric on an SdS black hole with mass $0<\bhm<(9\Lambda)^{-1/2}$ is
\begin{equation}
\label{EqISdS}
  g_{\bhm,\Lambda} = -\mu_{\bhm,\Lambda}(r)\,\dd t^2 + \mu_{\bhm,\Lambda}(r)^{-1}\,\dd r^2 + r^2 g_{\Sph^2},\qquad
  \mu_{\bhm,\Lambda}(r) := 1-\frac{2\bhm}{r}-\frac{\Lambda r^2}{3}.
\end{equation}
Here $t\in\R$, while $r$ lies in the interval $(r_+(\bhm),r_\cC(\bhm))\subset(0,\infty)$ where $r_+(\bhm)<r_\cC(\bhm)$ are the positive roots of $\mu_{\bhm,\Lambda}$; lastly, $g_{\Sph^2}$ denotes the standard metric on the unit sphere $\Sph^2$. For small $\bhm>0$, one has $r_+(\bhm)/(2\bhm)\approx 1$ and $r_\cC(\bhm)\approx\sqrt{3/\Lambda}$ (see Lemma~\ref{LemmaCMRoots}). Let $\nu\in\C$. A \emph{quasinormal mode} or \emph{resonance} of the Klein--Gordon operator $\Box_{g_{\bhm,\Lambda}}-\nu$ is then a complex number $\omega\in\C$ for which there exists a nonzero \emph{mode solution} or \emph{resonant state} $u(r,\theta,\phi)$, i.e.\ a solution of the equation\footnote{We use the sign convention $\Box_g=|g|^{-1/2}\pa_i|g|^{1/2}g^{i j}\pa_j$.}
\[
  (\Box_{g_{\bhm,\Lambda}}-\nu)(e^{-i\omega t}u)=0
\]
with outgoing boundary conditions at $r=r_+(\bhm),r_\cC(\bhm)$. This means that near $r=r_\bullet(\bhm)$, $\bullet=+,\cC$, the function $u$ takes the form
\begin{equation}
\label{EqIOutgoing}
  u(r,\theta,\phi) = |r-r_\bullet(\bhm)|^{-i\omega/2\kappa_\bullet(\bhm)}u_\bullet(r,\theta,\phi),\quad
  \kappa_\bullet(\bhm):=\half|\mu_{\bhm,\Lambda}'(r_\bullet)|,
\end{equation}
with $u_\bullet$ smooth down to $r=r_\bullet$; here $\kappa_+(\bhm),\kappa_\cC(\bhm)$ are the surface gravities of the event and cosmological horizon, respectively.

It is well-known \cite{SaBarretoZworskiResonances,BonyHaefnerDecay,MelroseSaBarretoVasyResolvent,VasyMicroKerrdS} that the set
\[
  \QNM(\bhm,\Lambda,\nu)\subset\C
\]
of QNMs is discrete. We are interested in its behavior as $\bhm\searrow 0$ while $\Lambda,\nu$ are held fixed. Exploiting the spherical symmetry of $g_{\bhm,\Lambda}$, we shall study, for fixed $l\in\N_0$, the set
\[
  \QNM_l(\bhm,\Lambda,\nu) \subset \QNM(\bhm,\Lambda,\nu)
\]
of QNMs for which there exists a nonzero mode solution $u$ with angular momentum $l$, i.e.\ $u$ is of the separated form $u(r,\theta,\phi)=u_0(r)Y(\theta,\phi)$ where $Y$ is a degree $l$ spherical harmonic (that is, $\Delta_{\Sph^2}Y=-l(l+1)Y$).

Writing $B_0(R)=\{x\in\R^3\colon|x|<R\}$ and $r_\cC(0):=\sqrt{3/\Lambda}$, the \emph{static model of de~Sitter space} is the spacetime $\R_t\times B_0(r_\cC(0))$ equipped with the metric
\begin{equation}
\label{EqIdS}
  g_{\dS,\Lambda} := -\mu_{0,\Lambda}(r)\,\dd t^2 + \mu_{0,\Lambda}(r)^{-1}\,\dd r^2+r^2 g_{\Sph^2}
\end{equation}
in polar coordinates $x=r\omega$; here $\mu_{0,\Lambda}(r)=1-\frac{\Lambda r^2}{3}$. A complex number $\omega\in\C$ is a QNM of $\Box_{g_{\dS,\Lambda}}-\nu$ if there exists a mode solution $u(x)$ which is smooth in $|x|<r_\cC(0)$ and takes the form~\eqref{EqIOutgoing} near $r=r_\cC(0)$, with $\kappa_\cC(0)=\kappa_\dS=\sqrt{\Lambda/3}$. We compute the dS QNM spectrum in~\S\ref{SdS} using methods based on \cite{VasyWaveOndS} and \cite[Appendix~C]{HintzVasyKdSStability}; see also \cite{HintzXiedS}.

Our main result, illustrated in Figure~\ref{FigIQNM}, is the following.

\begin{thm}
\label{ThmIMain}
  Fix $\Lambda>0$, $\nu\in\C$, and $l\in\N_0$, and set $\lambda_\pm:=\tfrac32\pm\sqrt{\tfrac94-\nu}$, $\kappa_\dS:=\sqrt{\Lambda/3}$.
  \begin{enumerate}
  \item\label{ItIMainQNM}{\rm (Convergence of QNMs).} Then the set $\QNM_l(\bhm,\Lambda,\nu)$ converges locally uniformly to the set
  \[
    \QNM_{\dS,l}(\Lambda,\nu) := \bigcup_\pm\,\bigl\{ -i(\lambda_\pm+l+2 j)\kappa_\dS \colon j\in \N_0 \bigr\}.
  \]
  This is the set of QNMs, with mode solutions of angular momentum $l$, for the operator $\Box_{g_{\dS,\Lambda}}-\nu$.
  \item\label{ItIMainMS}{\rm (Convergence of mode solutions.)} Let $\omega_0\in\QNM_{\dS,l}(\Lambda,\nu)$ and fix a degree $l$ spherical harmonic $Y(\theta,\phi)$. Let $\omega_\bhm\in\QNM_l(\bhm,\Lambda,\nu)$ with $\lim_{\bhm\searrow 0}\omega_\bhm=\omega_0$. Then the space of mode solutions of the form $e^{-i\omega t}Y(\theta,\phi)u_0(r)$ on de~Sitter space is 1-dimensional, and the same holds true for the space of outgoing mode solutions $e^{-i\omega_\bhm t}Y(\theta,\phi)u_\bhm(r)$ of $\Box_{g_{\bhm,\Lambda}}-\nu$ for all sufficiently small $\bhm>0$. After rescaling the $u_\bhm$ by suitable constants, we then have convergence $u_\bhm(r)\to u_0(r)$ locally uniformly, together with all derivatives, for $r\in(0,\sqrt{3/\Lambda})$.
  \end{enumerate}
\end{thm}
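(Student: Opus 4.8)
The plan is to separate variables, reduce to a second order radial ODE depending on the parameter $\bhm$, encode the QNMs as the zero set of a holomorphic Wronskian, and then prove that this Wronskian converges locally uniformly as $\bhm\searrow 0$ by matching a de~Sitter model in the region where $r$ stays away from $0$ to a degenerate (Legendre) model near the shrinking event horizon; parts~\usref{ItIMainQNM} and~\usref{ItIMainMS} then follow from Hurwitz's theorem and from joint continuity of the relevant solutions in $(\bhm,\omega)$. Concretely, writing $u=e^{-i\omega t}u_0(r)Y(\theta,\phi)$ with $\Delta_{\Sph^2}Y=-l(l+1)Y$ turns $(\Box_{g_{\bhm,\Lambda}}-\nu)u=0$ into $\cL_{\bhm,l,\omega}u_0:=(r^2\mu_{\bhm,\Lambda}u_0')'+\bigl(\tfrac{\omega^2 r^2}{\mu_{\bhm,\Lambda}}-l(l+1)-\nu r^2\bigr)u_0=0$ on $(r_+(\bhm),r_\cC(\bhm))$, with regular singular points at the two horizons whose indicial roots are $\pm i\omega/2\kappa_\bullet(\bhm)$; the outgoing condition~\eqref{EqIOutgoing} selects the root $-i\omega/2\kappa_\bullet(\bhm)$. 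For $\omega$ outside a discrete set let $u^+_{\bhm,\omega}$, resp.\ $u^\cC_{\bhm,\omega}$, be the solution---unique up to scaling, normalized so as to depend holomorphically on $\omega$---that is outgoing at $r_+(\bhm)$, resp.\ at $r_\cC(\bhm)$. The ($r$-independent) Wronskian $\cW(\bhm,\omega):=r^2\mu_{\bhm,\Lambda}\bigl(u^+_{\bhm,\omega}(u^\cC_{\bhm,\omega})'-(u^+_{\bhm,\omega})'u^\cC_{\bhm,\omega}\bigr)$ is then holomorphic in $\omega$ and vanishes exactly on $\QNM_l(\bhm,\Lambda,\nu)$; since the latter is discrete, $\cW(\bhm,\cdot)\not\equiv0$. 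Similarly, $\S\ref{SdS}$ provides a holomorphic de~Sitter Wronskian $\cW_\dS(\omega)$ built from the solution $u^{\mathrm{reg}}_{0,\omega}$ of $\cL_{0,l,\omega}$ that is smooth (equivalently $\sim r^l$) at $r=0$ and from the outgoing solution $u^\cC_{0,\omega}$ at $r_\cC(0)=\sqrt{3/\Lambda}$, with zero set precisely $\QNM_{\dS,l}(\Lambda,\nu)$, so $\cW_\dS\not\equiv0$.

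\textbf{The cosmological end is regular; the event horizon degenerates.} By Lemma~\ref{LemmaCMRoots}, $r_\cC(\bhm)\to\sqrt{3/\Lambda}$ and $\kappa_\cC(\bhm)\to\kappa_\dS$, and the coefficients of $\cL_{\bhm,l,\omega}$ depend smoothly on $(\bhm,r)$ near $r_\cC(0)$ down to $\bhm=0$; hence, after normalization, $u^\cC_{\bhm,\omega}\to u^\cC_{0,\omega}$ locally uniformly on $(0,\sqrt{3/\Lambda})$ together with all $r$-derivatives, and locally uniformly in $\omega$. The event horizon is the singular regime: $r_+(\bhm)\to0$ with $r_+(\bhm)/(2\bhm)\to1$, and $\kappa_+(\bhm)\to\infty$, so the outgoing exponent $-i\omega/2\kappa_+(\bhm)\to0$ and the outgoing condition degenerates to smoothness at the horizon. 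In the rescaled variable $\rho=r/r_+(\bhm)\in(1,\infty)$ the equation $\cL_{\bhm,l,\omega}u_0=0$ becomes, to leading order in $\bhm$ (using $2\bhm/r_+(\bhm)\to1$ so that $\mu_{\bhm,\Lambda}\to1-\rho^{-1}$), the $\omega$- and $\nu$-independent model $\partial_\rho\bigl(\rho(\rho-1)\partial_\rho v\bigr)=l(l+1)v$, i.e.\ the Legendre equation in $z=2\rho-1$; its solution regular at $\rho=1$ is the Legendre polynomial $P_l(2\rho-1)$, which behaves like $\rho^l$ as $\rho\to\infty$---crucially, with no $\rho^{-l-1}$ component. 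One checks that the error relative to this model is $o(1)$ not merely on fixed compacta $\rho\in(1,\rho_1]$ but on a range of intermediate scales $r\sim\bhm^{1/2}$ connecting the two regimes, uniformly in $\omega$ on compact sets.

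\textbf{Matching, and the two conclusions.} In the overlap $r_+(\bhm)\ll r\ll1$ both the near-horizon model and the de~Sitter radial equation reduce to the Euler equation $(r^2 u_0')'=l(l+1)u_0$ with solutions $r^l$ and $r^{-l-1}$. The outgoing-at-$r_+$ solution, being the regular one near the horizon, is thus $\approx 2^l r_+(\bhm)^{-l}r^l$ in the overlap---any $\rho^{-l-1}$ admixture from lower-order corrections contributes only $O(r_+(\bhm)^{2l+1})$ to the far field and drops out---so rescaling $u^+_{\bhm,\omega}$ by a constant of size $\sim r_+(\bhm)^l$ yields $u^+_{\bhm,\omega}\to u^{\mathrm{reg}}_{0,\omega}$ locally uniformly on $(0,\sqrt{3/\Lambda})$ with all $r$-derivatives, locally uniformly in $\omega$. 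Evaluating $\cW$ on a fixed compact subinterval then gives $\cW(\bhm,\omega)\to\cW_\dS(\omega)$ locally uniformly in $\omega\in\C$. Part~\usref{ItIMainQNM} follows from Hurwitz's theorem: for any compact $K\subset\C$ with $\cW_\dS\neq0$ on $\partial K$, the function $\cW(\bhm,\cdot)$ has, for small $\bhm$, the same zeros in $K$ counted with multiplicity as $\cW_\dS$---which is exactly the asserted locally uniform set convergence. For part~\usref{ItIMainMS}: at any QNM the mode-solution space is the $1$-dimensional kernel of $\cL$ with both outgoing conditions imposed (at each horizon the outgoing behavior is a genuine codimension-one condition on the two-dimensional solution space, so ``outgoing at both ends'' is always a one-dimensional condition there); and since $\omega_\bhm\to\omega_0$ while $(\bhm,\omega)\mapsto u^+_{\bhm,\omega}$ is jointly continuous, the rescaled mode solutions $u_\bhm=u^+_{\bhm,\omega_\bhm}$ converge to $u^{\mathrm{reg}}_{0,\omega_0}=u_0$ locally uniformly with all derivatives on $(0,\sqrt{3/\Lambda})$.

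\textbf{Main obstacle.} The technical heart is the uniform estimate in the degenerating near-horizon regime: one must control simultaneously the rescaled equation, the degenerating outgoing prefactor $|r-r_+(\bhm)|^{-i\omega/2\kappa_+(\bhm)}$, and the transition to the de~Sitter regime, all the way down to the Legendre model and with errors uniform in $\omega$ on compact sets; and one must track the $\bhm$-dependent normalizations (the factor $\sim r_+(\bhm)^l$) so that $\cW(\bhm,\cdot)$ possesses a genuine nonzero holomorphic limit rather than degenerating to $0$ or $\infty$. This bookkeeping, together with the splicing across the shrinking overlap, is what makes the Hurwitz argument applicable and is where I expect the bulk of the work to lie.
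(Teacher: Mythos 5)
Your strategy---separate variables, encode QNMs as zeros of a Wronskian of two outgoing Jost-type solutions, control the two solutions by matched asymptotics (a regular limit at the cosmological end, a Legendre/zero-energy-Schwarzschild model at the shrinking event horizon, spliced across an overlap $r_+(\bhm)\ll r\ll 1$ where both reduce to the Euler equation), and conclude by Hurwitz---is genuinely different from the paper's route. The paper never constructs individual solutions: it proves uniform Fredholm estimates for $P_{l,\bhm,\nu}(\omega)$ on $\bhm$-dependent b-Sobolev spaces $\Hb^{k,\alpha}(I_\bhm)$ (elliptic estimates away from the horizons, red-shift/positive-commutator estimates at both horizons), improves the error term by inverting the two model operators, and at a dS QNM replaces your Wronskian by the Schur complement $D_\bhm(\omega)$ of a Grushin problem, to which Rouch\'e's theorem is applied. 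Your two models are exactly the paper's: your Legendre operator $\pa_\rho(\rho(\rho-1)\pa_\rho)-l(l+1)$ is the zero-energy Schwarzschild operator $\hat P$ of \eqref{EqCSymbSchw}, and your key fact that the regular solution $P_l(2\rho-1)$ grows like $\rho^l$ with no $\rho^{-l-1}$ admixture is the injectivity half of Lemma~\ref{LemmaCSchw}. What the estimate-based approach buys is robustness (it is what permits dropping the fixed-$l$ restriction and generalizing beyond separable settings, cf.\ Remark~\ref{RmkIStronger}); what yours would buy, if completed, is a more explicit and elementary argument, including explicit leading-order matching coefficients.

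There is, however, one concrete gap beyond the (honestly flagged) deferral of the uniform matching estimates: the holomorphy and even the well-definedness of $\cW(\bhm,\cdot)$ near the frequencies you actually care about. In the $t_*$ picture the indicial roots at the cosmological horizon are $0$ and $i\omega/\kappa_\cC(\bhm)$, so the Frobenius construction of the outgoing (exponent-$0$) solution degenerates precisely when $i\omega/\kappa_\cC(\bhm)\in\N$, i.e.\ on $-i\kappa_\cC(\bhm)\N$; the naively normalized $u^\cC_{\bhm,\omega}$ acquires poles there (or logarithmic terms appear), and ``outgoing $=$ smooth across the horizon'' can drop to a codimension-zero condition, so your parenthetical claim that outgoing behavior is always a genuine codimension-one condition fails at exactly these points. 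For $\nu=0$ the dS QNMs are $-i(l+2 j)$ and $-i(3+l+2 j)$, which lie in $-i\N_0=\lim_{\bhm\searrow 0}(-i\kappa_\cC(\bhm)\N_0)$: the exceptional set collides with the limit points $\omega_0$ at which Hurwitz's theorem must be applied. (The analogous exceptional set at the event horizon, $-i\kappa_+(\bhm)\N$, escapes to $-i\infty$ and is harmless.) This must be repaired---either by a renormalization of the Frobenius solutions making the Wronskian extend holomorphically across these points with the correct zero set and multiplicities, or by replacing the Wronskian with a Fredholm-determinant/Schur-complement type function as in the paper's Grushin construction, whose holomorphy is automatic. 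As written, the Hurwitz step is not justified in the physically central massless case.
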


For massless scalar waves ($\nu=0$), a rescaling of this result provides control of QNMs when the black hole mass is fixed but the cosmological constant tends to $0$; see Remark~\ref{RmkINSmallLambda}.

As we recall in~\S\ref{SsIW}, QNMs and the corresponding mode solutions control the late-time asymptotics of linear waves; in this context, part~\eqref{ItIMainQNM} confirms the numerical results obtained by Brady--Chambers--Laarakeers--Poisson \cite[\S{III.B}]{BradyChambersLaarakeersPoissonSdSFalloff}. Thus, the significance of Theorem~\ref{ThmIMain} is that it demonstrates (for bounded frequencies, though see Remark~\ref{RmkIStronger}) that the late-time behavior is dominated by the large scale, de~Sitter, structure of the spacetime, and the exponential decay rate is essentially given by the surface gravity of the de~Sitter (cosmological) horizon. The presence of the black hole is barely visible at this level; according to our numerics, $\omega_\bhm/\sqrt\Lambda$ differs from the dS value $\omega_0/\sqrt\Lambda$ only by terms of size $\bhm^2\Lambda$ when $\nu=0$, and by terms of size $\bhm\sqrt\Lambda$ for generic $\nu\in\C$, see \S\ref{SsNR}. This provides corrections to the results of \cite{BradyChambersKrivanLagunaCosmConst} obtained by direct integration of the wave equation. The QNMs associated with the black hole and its \emph{ringdown} are not visible; see also the discussion around~\eqref{EqIPhotonSphere} below. For a detailed review of the role of QNMs in gravitational wave physics, we refer the reader to \cite{BertiCardosoStarinetsQNM}, see also the more recent \cite{IsiGieslerFarrScheelTeukolskyNoHair}. The QNM spectrum of small black holes in \emph{anti}-de~Sitter spacetimes was studied by Horowitz--Hubeny \cite{HorowitzHubenyAdSQNM} and Konoplya \cite{KonoplyaSmallSAdS}, motivated by the AdS-CFT correspondence. A detailed numerical analysis of the QNM spectrum of Reissner--Nordstr\"om--de~Sitter black holes (the charged generalization of SdS black holes) was performed in \cite{CardosoCostaDestounisHintzJansenSCC}.

\begin{figure}[!ht]
\centering
\includegraphics[width=\textwidth]{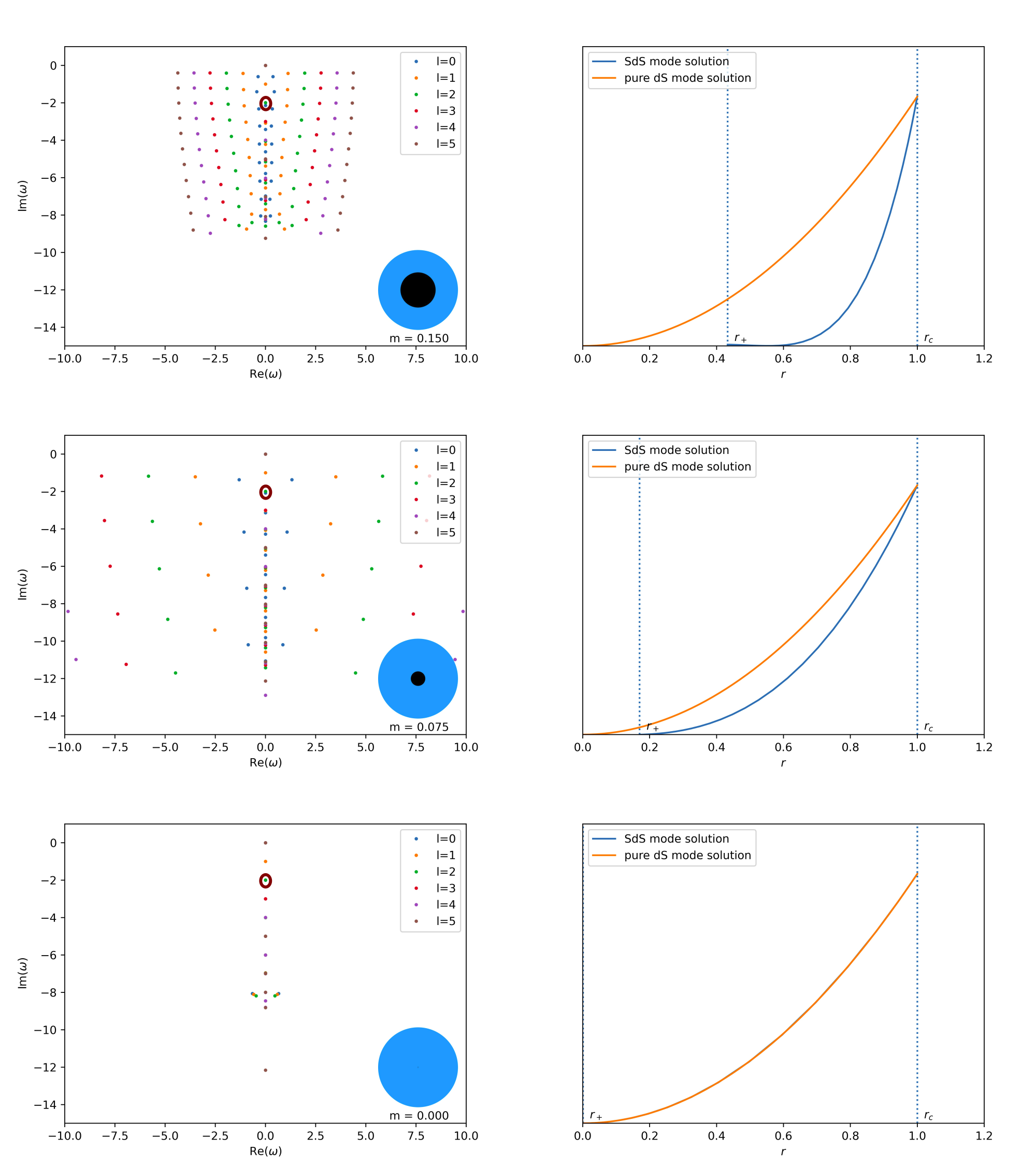}
\caption{\textit{Left:} QNMs of a SdS black hole for $\Lambda=3$, $\nu=0$, and $\bhm=0.150,0.075,0.000$ (from top to bottom). \textit{Right:} mode solution with angular momentum $l=2$ corresponding to the highlighted QNM on the left. The domain is rescaled by $r_\cC(\bhm)$ so that the cosmological horizon always lies at $r_c=1$.}
\label{FigIQNM}
\end{figure}

\begin{rmk}[Stronger statements]
\label{RmkIStronger}
  The simple setting of Theorem~\ref{ThmIMain} allows for a proof by means of elementary ODE estimates, which we will however phrase in a readily generalizable manner. In particular, the restriction to fixed angular momenta can easily be removed by using the red-shift estimates from \cite{WarnickQNMs} in the SdS and \cite{VasyMicroKerrdS} in the Kerr--de~Sitter (KdS) setting. More strongly still, we expect that the QNMs in fact converge in any half space $\Im\omega>-C$; this however requires significantly more work, and will be discussed elsewhere.
\end{rmk}

The QNM spectrum of SdS black holes with fixed mass $\bhm\in(0,(9\Lambda)^{-1/2})$ has been studied in detail in the high energy regime, i.e.\ when $|\Re\omega|\gg 1$, where geometric optics approximations and semiclassical methods near the photon sphere apply. QNMs in any strip $\Im\omega\geq -C$ and for large angular momenta $l$ are well approximated by
\begin{equation}
\label{EqIPhotonSphere}
  \Bigl(\pm l\pm\frac12-\frac{i}{2}\Bigl(n+\frac12\Bigr)\Bigr)\frac{(1-9\Lambda\bhm^2)^{1/2}}{3\bhm\sqrt{3}},\qquad n=0,1,2,\ldots,
\end{equation}
as proved by S\'a Barreto--Zworski \cite{SaBarretoZworskiResonances}; see also \cite{ZhidenkoQNMofSdS}. Dyatlov \cite{DyatlovAsymptoticDistribution} gave a full asymptotic expansion of high energy QNMs for KdS black holes, see also \cite{NovaesMarinhoLencsesKdSQNM}. The Schwarzschild case was previously studied by Bachelot and Motet-Bachelot \cite{BachelotSchwarzschildScattering,BachelotMotetBachelotSchwarzschild}. As $\bhm\searrow 0$, the QNMs approximated by~\eqref{EqIPhotonSphere} scale like $\bhm^{-1}$, explaining their absence in the statement of Theorem~\ref{ThmIMain}.

We proceed to explain Theorem~\ref{ThmIMain} and this scaling by describing the two different limits of $g_{\bhm,\Lambda}$ as $\bhm\searrow 0$. For the first limit, fix $r_0>0$; then for $r>r_0$, the coefficients of $g_{\bhm,\Lambda}$ converge to those of $g_{\dS,\Lambda}$ as $\bhm\searrow 0$. On the other hand, for any small $\bhm>0$, the metric $g_{\bhm,\Lambda}$ differs from $g_{\dS,\Lambda}$ by terms of unit size (and so do e.g.\ curvature scalars) close to the black hole. In terms of the rescaled time and radius functions
\begin{equation}
\label{EqIResc}
  \hat t := \frac{t}{\bhm},\quad
  \hat r := \frac{r}{\bhm};
\end{equation}
we have
\[
  \bhm^{-2}g_{\bhm,\Lambda} = -\hat\mu_{\bhm,\Lambda}(\hat r)\,\dd\hat t^2 + \hat\mu_{\bhm,\Lambda}(\hat r)^{-1}\,\dd\hat r + \hat r^2 g_{\Sph^2},\qquad
  \hat\mu_{\bhm,\Lambda}(\hat r) := 1-\frac{2}{\hat r}-\bhm^2\frac{\Lambda\hat r^2}{3},
\]
which for bounded $\hat r$ converges to the metric
\[
  \hat g_1 := -\Bigl(1-\frac{2}{\hat r}\Bigr)\,\dd\hat t^2 + \Bigl(1-\frac{2}{\hat r}\Bigr)^{-1}\,\dd\hat r^2 + \hat r^2 g_{\Sph^2}
\]
of a Schwarzschild black hole with mass $1$.

Corresponding to these two regimes then, the analysis of the spectral family
\begin{equation}
\label{EqISdSSpec}
  P_{\bhm,\Lambda,\nu}(\omega) := e^{i\omega t}(\Box_{g_{\bhm,\Lambda}}-\nu)e^{-i\omega t}
\end{equation}
of the Klein--Gordon operator (i.e.\ formally replacing $\pa_t$ by $-i\omega$ in the expression for $\Box_{g_{\bhm,\Lambda}}$) for fixed $\omega$ and small $\bhm$ requires the analysis of two limiting models. The first model is the spectral family of $\Box_{g_{\dS,\Lambda}}-\nu$ at frequency $\omega$. The second model arises by considering
\[
  \bhm^2 e^{i\omega t}(\Box_{g_{\bhm,\Lambda}}-\nu)e^{-i\omega t}\approx e^{i\bhm\omega\hat t}(\Box_{\hat g_1}-\bhm^2\nu)e^{-i\bhm\omega\hat t}
\]
and formally taking $\bhm=0$ on the right hand side; this gives the restriction of the \emph{massless} wave operator $\Box_{\hat g_1}$ to \emph{zero energy} (i.e.\ $\hat t$-independent) functions. Note that a QNM $\hat\omega$ of the mass $1$ Schwarzschild metric should roughly correspond to a QNM
\begin{equation}
\label{EqIOmegaResc}
  \omega=\bhm^{-1}\hat\omega
\end{equation}
of the mass $\bhm$ SdS metric, in alignment with~\eqref{EqIPhotonSphere}.

The proof of Theorem~\ref{ThmIMain} combines a priori estimates for the ordinary differential operators arising from the spherical harmonic decomposition of $e^{i\omega_0 t}(\Box_{g_{\dS,\Lambda}}-\nu)e^{-i\omega_0 t}$ and of the zero energy operator of $\Box_{\hat g_1}$. If \emph{both} operators are invertible on suitable outgoing function spaces, then so is the SdS operator~\eqref{EqISdSSpec} for $\omega$ near $\omega_0$ and $\bhm>0$ sufficiently small. See Figure~\ref{FigIBlowup} for an illustration. Note that the de~Sitter model is still singular in that $r=0$ is removed, which on an analytic level amounts to working with function spaces which encode suitable boundary conditions at $r=0$ (roughly, restricting the growth as $r\to 0$); we show that for the correct choice of boundary condition, the resulting set of QNMs (and mode solutions) agrees with the usual one as defined  after~\eqref{EqIdS}. We furthermore remark that the outgoing condition at the cosmological horizon of SdS limits to that on dS, whereas the outgoing condition at the event horizon survives in the Schwarzschild model problem.

The invertibility of the zero energy operator on Schwarzschild is proved in~\S\ref{SsCPf}, and that of the de~Sitter model holds if and only if $\omega_0$ is not a dS QNM. If however $\omega_0$ \emph{is} a dS QNM, then an argument based on Rouch\'e's theorem proves the existence of a QNM $\omega_\bhm$ of SdS with mass $\bhm$ with $\omega_\bhm\to\omega_0$ as $\bhm\searrow 0$.

\begin{figure}[!ht]
\centering
\includegraphics{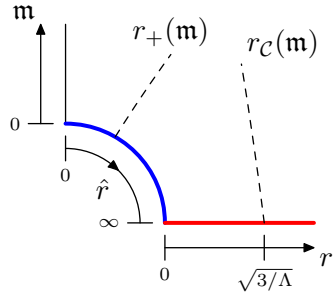}
\caption{Illustration of the way in which the two limiting regimes---$\bhm\searrow 0$ in $r\gtrsim 1$ (red) and $\hat r=\frac{r}{\bhm}\sim 1$ (blue)---fit together, and the location of the event horizon and the cosmological horizon (dashed).}
\label{FigIBlowup}
\end{figure}

\begin{rmk}[Modified time function and the outgoing condition]
\label{RmkItstar}
  In our proofs, we will work with a time function $t_*=t-F(r)$ for a suitably chosen function $F$ (with logarithmic singularities at $r_+(\bhm),r_\cC(\bhm)$) so that the level sets of $t_*$ are smooth and transversal to the future event and cosmological horizon; see Figure~\ref{FigIPenrose} and~\S\ref{SsCM}. QNMs are then characterized as those $\omega\in\C$ so that $(\Box_{g_{\bhm,\Lambda}}-\nu)(e^{-i\omega t_*}u)=0$ for some $u\neq 0$ which is smooth on an enlarged interval $[\half r_+(\bhm),2 r_\cC(\bhm)]$, i.e.\ $u$ is smooth across both horizons. We shall prove smooth convergence of mode solutions on such an enlarged interval, thus strengthening part~\eqref{ItIMainMS} of Theorem~\ref{ThmIMain}.
\end{rmk}

\begin{figure}[!ht]
\centering
\includegraphics{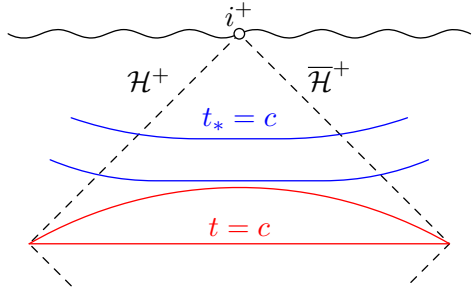}
\caption{Penrose diagram of a SdS black hole spacetime, including the future event horizon $\cH^+=\{r=r_+(\bhm)\}$, the future cosmological horizon $\ol\cH^+=\{r=r_\cC(\bhm)\}$, future timelike infinity $i^+$, and the future conformal boundary and black hole singularity (squiggly). Also shown are two level sets each of the static time coordinate $t$ (red) and the modified time coordinate $t_*$ (blue).}
\label{FigIPenrose}
\end{figure}

\subsection{Klein--Gordon equations and dual resonant states}
\label{SsIW}

Knowledge of the QNM spectrum in a half space $\Im\omega>-C$ is of critical importance for understanding the long-time dynamics of solutions of the Klein--Gordon equation. Denoting QNMs and mode solutions by $\omega_j,u_j$, $j=0,1,2,\ldots$, and ignoring the possibility of multiplicity for simplicity, the solution of the initial value problem
\begin{equation}
\label{EqIVP}
  (\Box_{g_{\bhm,\Lambda}}-\nu)u(t_*,r,\theta,\phi) = 0,\qquad(u,\pa_{t_*} u)|_{t_*=0}=(u^0,u^1),
\end{equation}
with initial data $u^0,u^1$ smooth across the horizons, can then be expanded as
\begin{equation}
\label{EqIExp}
  u(t_*,r,\theta,\phi) = \sum_{\Im\omega_j>-C} e^{-i\omega_j t_*} c_j u_j(r,\theta,\phi) + \cO(e^{-(C-\eps)t_*}),\qquad t_*\to\infty,
\end{equation}
for any $C\in\R$ and $\eps>0$ \cite{BonyHaefnerDecay,DyatlovQNMExtended}; the coefficients $c_j$ are complex numbers which are computable from the initial data $(u^0,u^1)$, as discussed further below. The relationship between exponential decay in the exterior, specifically the quantity $\min(-\Im\omega_j)$, and the blue-shift effect in the interior of charged or rotating black holes\footnote{This excludes SdS black holes; but even in the realm of spherical symmetry, our results admit straightforward extensions to the case of charged Reissner--Nordstr\"om--de~Sitter black holes.} has an immediate bearing on Penrose's Strong Cosmic Censorship conjecture \cite{PenroseSCC} in the context of cosmological black hole spacetimes, as discussed in detail in \cite{HintzVasyCauchyHorizon,CardosoCostaDestounisHintzJansenSCC,DiasEperonReallSantosSCC} following the earlier works \cite{MellorMossStability,BradyMossMyersCosmicCensorship,BradyChambersLaarakeersPoissonSdSFalloff}. Adopting the terminology of \cite{CardosoCostaDestounisHintzJansenSCC}, the modes appearing in Theorem~\ref{ThmIMain} are the \emph{de~Sitter modes}, whereas the \emph{photon sphere modes} approximated by~\eqref{EqIPhotonSphere} leave every compact subset of $\C$ as $\bhm\searrow 0$. We also note that control of $\min(-\Im\omega_j)$ and of the resonant states $u_j$ is important in nonlinear applications, see in particular \cite{HintzVasyKdSStability}.

The coefficient $c_j$ in the expansion~\eqref{EqIExp} of the solution $u$ of a forcing problem $(\Box_{g_{\bhm,\Lambda}}-\nu)u=f$, $(u,\pa_t u)|_{t_*=0}=(0,0)$, and with $f$ of compact support in $t_*$, is given explicitly by
\begin{equation}
\label{EqICoeffs}
  c_j = \frac{i\int_\R \la f(t_*,-), e^{-i\ol{\omega_j}t_*}u_j^*\ra\,\dd t_*}{\la\pa_\omega P_{\bhm,\Lambda,\nu}(\omega_j)u_j,u_j^*\ra},
\end{equation}
where $\la-,-\ra$ denotes the $L^2$ inner product on $t_*=const.$ with integration measure $r^2\sin\theta\,\dd r\,\dd\phi\,\dd\theta$, and $P_{\bhm,\Lambda,\nu}(\omega)$ is defined as in~\eqref{EqISdSSpec} but with $t_*$ in place of $t$. (For the initial value problem~\eqref{EqIVP}, one uses $f:=[\Box,H(t_*)](u_0+t_* u_1)$ in~\eqref{EqICoeffs}.) The crucial ingredient here is the \emph{dual resonant state} $u_j^*=u_j^*(r,\theta,\phi)$: this is a distributional solution of the adjoint equation $P_{\bhm,\Lambda,\nu}(\omega_j)^*u_j^*=0$, or equivalently
\[
  (\Box_{g_{\bhm,\Lambda}}-\bar\nu)(e^{-i\ol{\omega_j}t_*}u_j^*)=0.
\]
See \cite{HintzXiedS} for a qualitative discussion of dual resonant states. The distribution $u_j^*$ vanishes for $r<r_+(\bhm)$ and $r>r_\cC(\bhm)$ and is singular at $r=r_+(\bhm)$ and $r=r_\cC(\bhm)$, see e.g.\ \cite[\S4.2]{HintzVasyKdsFormResonances} and the proof of Proposition~\ref{PropdSFred} below. In~\S\ref{SsNR}, we compute the dual resonant states $u_j^*$ numerically and investigate their convergence to the dual resonant states on de~Sitter space as $\bhm\searrow 0$.

\subsection{Normalization}
\label{SsIN}

Let $s>0$, then the pullback of $r\mapsto\mu_{\bhm,\Lambda}(r)$ under the dilation map $M_s\colon(t,r)\mapsto(s t,s r)$ is $r\mapsto\mu_{\bhm/s,\Lambda s^2}(r)$, thus $M_s^*g_{\bhm,\Lambda}=s^2 g_{\bhm/s,\Lambda s^2}$ and correspondingly
\[
  (\Box_{g_{\bhm,\Lambda}}-\nu)(e^{-i\omega t}a(r,\theta,\phi))=0
  \ \Longleftrightarrow\ 
  (\Box_{g_{\bhm/s,\Lambda s^2}}-s^2\nu)(e^{-i(s\omega)t}a(s r,\theta,\phi))=0.
\]
Therefore, $\omega$ is a QNM for the Klein--Gordon equation with parameters $(\bhm,\Lambda,\nu)$ if and only if $s\omega$ is a QNM for the parameters $(\bhm/s,\Lambda s^2,s^2\nu)$. For $s=\sqrt{3/\Lambda}$, this gives
\begin{equation}
\label{EqIN}
  \QNM(\bhm,\Lambda,\nu)=\sqrt{\frac{\Lambda}{3}} \QNM\Bigl(\bhm\sqrt{\frac{\Lambda}{3}},3,\frac{3\nu}{\Lambda}\Bigr).
\end{equation}
We may thus fix the value
\[
  \Lambda=3
\]
throughout the remainder of the paper, and simply write
\[
  g_\bhm := g_{\bhm,3},\qquad
  g_\dS := g_{\dS,3},\qquad
  \mu_\bhm := \mu_{\bhm,3}=1-\frac{2\bhm}{r}-r^2.
\]
Thus, $\mu_0(r)=1-r^2$ and $\kappa_\cC(0)=\kappa_\dS=1$.

\begin{rmk}[Small $\Lambda$ for massless waves]
\label{RmkINSmallLambda}
  When $\nu=0$, the relationship~\eqref{EqIN} becomes $\QNM(\bhm,\Lambda,0)=\sqrt{\frac{\Lambda}{3}}\QNM(\bhm\sqrt{\frac{\Lambda}{3}},3,0)$. Thus, if we fix $\bhm>0$ but let $\Lambda\searrow 0$, then Theorem~\ref{ThmIMain} gives control of SdS QNMs in balls of radius $C\sqrt{\Lambda}$ for any fixed $C>0$. (This is the physically more relevant setting, as according to the currently favored $\Lambda$CDM model of cosmology, the cosmological constant is indeed very small and positive.)
\end{rmk}

\subsection{Plan of the paper}

In~\S\ref{SdS}, we compute the QNMs of the static model of de~Sitter space, giving details for the arguments sketched in \cite{HintzXiedS}. The mathematical heart of the paper is~\S\ref{SC}, where we prove Theorem~\ref{ThmIMain}. Numerics are presented in~\S\ref{SN}, with a number of open problems stated in~\S\ref{SsNR}.

\subsection*{Acknowledgments}

PH gratefully acknowledges support from the NSF under Grant No.\ DMS-1955614 and from a Sloan Research Fellowship. Part of this research was conducted during the time PH served as a Clay Research Fellow. YX would like to thank the MIT UROP office for this opportunity and gratefully acknowledges support from the Reed Fund, Baruch Fund, and Anonymous Fund.

\section{QNMs of de~Sitter space}
\label{SdS}

For $\Lambda=3$, the de~Sitter metric given by~\eqref{EqIdS} can also be expressed as
\begin{subequations}
\begin{align}
\label{EqIdSCosm}
  g_\dS &= -\dd t_*^2 + e^{-2 t_*}\dd X^2 \\
\label{EqIdSConf}
    &= \frac{-\dd\tau^2+\dd X^2}{\tau^2} \\
\label{EqIdSStaticExt}
    &= -(1-r^2)\dd t_*^2 - 2 r\,\dd r\,\dd t_* + \dd r^2+r^2 g_{\Sph^2};
\end{align}
\end{subequations}
the coordinates are related to each other via
\begin{equation}
\label{EqdSCoords}
  \tau=e^{-t_*},\quad
  x:=\frac{X}{\tau}=r\omega,\ \omega\in\Sph^2,\quad
  t=t_*-\half\log(1-r^2).
\end{equation}
We present some details here, in $3+1$ dimensions, of the arguments sketched in \cite{HintzXiedS} regarding the long-time behavior of solutions of the Klein--Gordon equation on de~Sitter space.

\begin{prop}[QNMs of de~Sitter space]
\label{PropdS}
  Let $\nu\in\C$ and $\lambda_\pm(\nu)=\frac{3}{2}\pm\sqrt{\frac{9}{4}-\nu}$, and recall the normalization $\Lambda=3$. Then the set of QNMs of $\Box_{g_\dS}-\nu$ is equal to
  \[
    \QNM_\dS(\Lambda,\nu) = \{ -i\lambda_\pm(\nu) - i j\colon j\in\N_0 \}.
  \]
  The set $\QNM_{\dS,l}(\Lambda,\nu)$ of QNMs permitting mode solutions with angular momentum $l$ is the subset where $j$ is restricted to $l,l+2,l+4,l+6,\ldots$. For $\omega\in\QNM_{\dS,l}(\Lambda,\nu)$, the space of angular momentum $l$ mode solutions has dimension $2 l+1$.
\end{prop}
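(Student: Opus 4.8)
\emph{Proof strategy.} The plan is to reduce to the radial ODE, transform it into a hypergeometric equation, and read off the QNMs from a connection formula at the cosmological horizon.

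First I would split a putative mode solution $u(x)$ on $B_0(1)$ into spherical harmonics, $u=\sum_{l,m}u_{lm}(r)Y_{lm}(\theta,\phi)$; each summand again solves $(\Box_{g_\dS}-\nu)(e^{-i\omega t}u_{lm}Y_{lm})=0$ with the same requirements at $r=0$ (smoothness) and at $r=1$ (the outgoing form~\eqref{EqIOutgoing}), since both are diagonal in $(l,m)$. So it suffices, for each fixed $l$, to analyze the radial equation for $u_0(r)$, with $\Delta_{\Sph^2}Y=-l(l+1)Y$. Using $\Box_{g_\dS}=-\mu_0^{-1}\pa_t^2+r^{-2}\pa_r(r^2\mu_0\pa_r)+r^{-2}\Delta_{\Sph^2}$ with $\mu_0=1-r^2$ and substituting $z=r^2\in(0,1)$ turns this into a Fuchsian ODE with regular singular points only at $z=0,1,\infty$; its indicial roots are $\{\tfrac l2,-\tfrac{l+1}2\}$ at $z=0$ and $\{\pm\tfrac{i\omega}2\}$ at $z=1$. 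Since $1-z=(1-r)(1+r)\sim 2(1-r)$ near $r=1$, the exponent $-\tfrac{i\omega}2$ at $z=1$ is exactly the outgoing exponent $|r-1|^{-i\omega/2}$ of~\eqref{EqIOutgoing} (recall $\kappa_\dS=1$). Removing both singularities via $u_0=z^{l/2}(1-z)^{-i\omega/2}v(z)$ reduces the equation to the standard hypergeometric form for $v$, with parameters $\gamma=l+\tfrac32$, $\alpha+\beta=l-i\omega+\tfrac32$, and $\alpha\beta=\bigl(\tfrac{l-i\omega}2\bigr)^2+\tfrac32\cdot\tfrac{l-i\omega}2+\tfrac\nu4$; the quadratic for $\alpha,\beta$ has discriminant $\tfrac94-\nu$, giving $\alpha=\tfrac12(l-i\omega+\lambda_+)$ and $\beta=\tfrac12(l-i\omega+\lambda_-)$ (so $\gamma-\alpha-\beta=i\omega$).

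Next I would identify the mode solutions. Because $\gamma=l+\tfrac32\notin\Z$, the radial solution smooth at $r=0$ is unique up to scaling and equals $u_0=z^{l/2}(1-z)^{-i\omega/2}\,{}_2F_1(\alpha,\beta;\gamma;z)$; the other indicial root at $z=0$ produces behavior $\sim r^{-l-1}$, not smooth for any $l\in\N_0$. After factoring out $|r-1|^{-i\omega/2}$ together with a smooth nonvanishing factor near $r=1$, this $u_0$ satisfies the outgoing condition at $r=1$ if and only if ${}_2F_1(\alpha,\beta;\gamma;\,\cdot\,)$ extends smoothly across $z=1$; the obstruction is the coefficient $\tfrac{\Gamma(\gamma)\Gamma(\alpha+\beta-\gamma)}{\Gamma(\alpha)\Gamma(\beta)}$ of the incoming local solution at $z=1$ in the hypergeometric connection formula (with its appropriate logarithmic modification when $i\omega=\gamma-\alpha-\beta\in\Z$), which vanishes precisely when $\alpha\in-\N_0$ or $\beta\in-\N_0$. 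Solving $\tfrac12(l-i\omega+\lambda_\pm)\in-\N_0$ yields $\omega=-i(\lambda_\pm+l+2k)$, $k\in\N_0$; hence $\QNM_{\dS,l}(\Lambda,\nu)=\{-i\lambda_\pm(\nu)-ij:\ j\in\{l,l+2,l+4,\dots\}\}$, and taking the union over $l\in\N_0$ (every $j\in\N_0$ already arises from $l=j,\ k=0$) gives $\QNM_\dS(\Lambda,\nu)=\{-i\lambda_\pm(\nu)-ij:\ j\in\N_0\}$.

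Finally, for the multiplicity, fix $\omega\in\QNM_{\dS,l}(\Lambda,\nu)$: the space of angular-momentum-$l$ mode solutions is the tensor product of the space of radial solutions that are smooth at $r=0$ and outgoing at $r=1$ with the $(2l+1)$-dimensional space of degree-$l$ spherical harmonics, and the first factor is $1$-dimensional, since the smooth-at-$0$ radial solution is unique up to scaling and---exactly because $\omega$ is a QNM---automatically outgoing; the spherical-harmonic decomposition above shows conversely that these are all the QNMs. I expect the main obstacle to be the case $i\omega\in\Z$, equivalently $\lambda_\pm\in\Z$ (which includes the massless case $\nu=0$): there the connection formula degenerates to a logarithmic one, and one must use the precise local expansion of ${}_2F_1$ at $z=1$ to check that no spurious QNMs appear and that the listed ones persist with the stated multiplicity---for instance, at $\nu=0$ the constant function is the $l=0$ mode solution at $\omega=0$, consistent with $\lambda_-=0$. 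It is also worth invoking the Fredholm/meromorphic framework of~\cite{VasyWaveOndS} (cf.\ Proposition~\ref{PropdSFred}) so that $\QNM_\dS$ is known to be discrete and "QNM" is the well-posed notion used elsewhere in the paper; a more conceptual alternative, following~\cite{VasyWaveOndS} and~\cite[Appendix~C]{HintzVasyKdSStability}, is to exploit the conformal structure of de~Sitter space and perform a b-normal-operator analysis at the cosmological horizon.
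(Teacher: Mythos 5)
Your proposal is correct, but it takes a genuinely different route from the paper. You reduce to the radial ODE, substitute $z=r^2$, and identify the equation as hypergeometric; the QNM condition then falls out of the ${}_2F_1$ connection formula at $z=1$ as $\alpha\in-\N_0$ or $\beta\in-\N_0$, and your computations of the indicial roots and of the parameters $\gamma=l+\tfrac32$, $\alpha,\beta=\tfrac12(l-i\omega+\lambda_\pm)$ check out, as does the identification of the exponent $-\tfrac{i\omega}{2}$ at $z=1$ with the outgoing exponent of~\eqref{EqIOutgoing}. The paper instead avoids special functions entirely: it passes to global de~Sitter space in the coordinates~\eqref{EqdSCoords}, proves the asymptotic expansion of Proposition~\ref{PropdSGbl} at future conformal infinity $\tau=0$ by energy estimates and a regular-singular (b-)analysis in $\tau$, and then reads off the QNMs from the Taylor expansion of $\tau^{\lambda_\pm}u_\pm$ at $(\tau,X)=(0,0)$; the angular momentum restriction $j\in\{l,l+2,\dots\}$ comes from pairing monomials $x^\alpha$, $|\alpha|=j$, against degree-$l$ spherical harmonics. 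What your approach buys is complete explicitness (exact mode solutions, multiplicities, and the QNM condition in closed form, self-contained modulo classical ${}_2F_1$ identities); what the paper's approach buys is robustness --- it requires no separation of variables and is the mechanism reused throughout \S\ref{SC}. One point deserves emphasis: your degenerate case $i\omega=\gamma-\alpha-\beta\in\Z$ is not exceptional but generic among the QNMs themselves --- for $\nu=0$ one has $\lambda_-=0$, $\lambda_+=3$, so \emph{every} QNM lies in the logarithmic case, and the entire argument there rests on the logarithmic connection formula (where the log coefficient is again $\propto 1/(\Gamma(\alpha)\Gamma(\beta))$, so the conclusion survives). You flag this correctly, and it is fair to note that the paper likewise defers its own degenerate case $\lambda_+-\lambda_-\in 2\N_0$ to the reader; still, for a complete write-up you would need to carry out that logarithmic-case verification explicitly, including the confluent case $\omega=0$.
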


As argued in \cite{HintzXiedS} (and as exploited in similar contexts already in \cite[\S4.1]{HintzVasyKdsFormResonances} and \cite[Appendix~C]{HintzVasyKdSStability}), this is a consequence of:

\begin{prop}[Asymptotic behavior of waves]
\label{PropdSGbl}
  Let $L>0$ and define the metric $g=\frac{-\dd\tau^2+\dd X^2}{\tau^2}$, where $\dd X^2$ is the standard metric on $\TT_L^3:=\R^3/(L\Z)^3$. Define $\lambda_\pm(\nu)$ as in Proposition~\usref{PropdS}. Let $u^0,u^1\in\CI(\TT_L^3)$, and let $u(\tau,X)$ denote the solution of
  \[
    (\Box_g-\nu)u = 0,\quad
    u(1,X)=u^0(X),\ -\tau\pa_\tau u(1,X)=u^1(X).
  \]
  Let $d=\lambda_+(\nu)-\lambda_-(\nu)$. Then there exist $u_\pm\in\CI([0,1]\times\TT^3_L)$ which are even in $\tau$ (i.e.\ $\pa_\tau^j u_\pm(0,X)\equiv 0$ for all odd $j\in\N$) so that
  \begin{equation}
  \label{EqdSGblAsy}
    u(\tau,X)
      =\begin{cases} 
         \tau^{\lambda_-(\nu)}u_-(\tau,X) + \tau^{\lambda_+(\nu)}u_+(\tau,X), & d\notin 2\N_0, \\
         \tau^{\lambda_-(\nu)}u_-(\tau,X) + \tau^{\lambda_+(\nu)}(\log\tau)u_+(\tau,X), & d\in 2\N_0.
       \end{cases}
  \end{equation}
  The map $(u^0,u^1)\mapsto(u_-,u_+)|_{\tau=0}$ is a linear isomorphism of $\CI(\TT^3_L)\oplus\CI(\TT^3_L)$. Moreover, the full Taylor series of $u_\pm$ at $\tau=0$ are uniquely determined by $u_\pm|_{\tau=0}$.
\end{prop}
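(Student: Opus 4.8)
The plan is to separate variables on $\TT^3_L$ and reduce to a degenerating family of ordinary differential equations. Writing the operator as $\Box_g-\nu=-(\tau\pa_\tau)^2+3\tau\pa_\tau-\nu+\tau^2\Delta_X=-(\tau\pa_\tau-\lambda_+)(\tau\pa_\tau-\lambda_-)+\tau^2\Delta_X$, one sees that it is a b-differential operator on the manifold with boundary $M=[0,1]_\tau\times\TT^3_L$, with indicial roots $\lambda_\pm(\nu)$ at $\tau=0$. Expanding $u=\sum_{k\in\Z^3}\hat u_k(\tau)e^{2\pi i k\cdot X/L}$, each Fourier coefficient satisfies $\tau^2\hat u_k''-2\tau\hat u_k'+(\xi_k^2\tau^2+\nu)\hat u_k=0$ with $\xi_k=2\pi|k|/L$, and the substitution $\hat u_k=\tau^{3/2}w_k$ converts this into Bessel's equation $\tau^2 w_k''+\tau w_k'+(\xi_k^2\tau^2-\alpha^2)w_k=0$ with $\alpha=\sqrt{\tfrac94-\nu}$, so that $d=\lambda_+-\lambda_-=2\alpha$; its solutions are $w_k=c_1 J_\alpha(\xi_k\tau)+c_2 J_{-\alpha}(\xi_k\tau)$, with $Y_\alpha$ replacing $J_{-\alpha}$ when $\alpha\in\N_0$, about which everything relevant is classical.

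Next I would carry out the Frobenius analysis at $\tau=0$ to establish the structural content of \eqref{EqdSGblAsy}. Plugging $\tau^{\lambda_-+m}a(X)$ into $\Box_g-\nu$ gives $m(d-m)\tau^{\lambda_-+m}a+\tau^{\lambda_-+m+2}\Delta_X a$, and similarly with $\lambda_+$ and $-m(m+d)$ in place of $m(d-m)$. Hence only even powers of $\tau$ occur in either Frobenius series — giving the asserted evenness of $u_\pm$ — and the coefficient of $\tau^{2j}$ is determined recursively from that of $\tau^{2(j-1)}$ by inverting the indicial shift, which is nonzero for all $j\geq1$ on the $\lambda_+$ branch and on the $\lambda_-$ branch unless $d\in\{2,4,\dots\}$ with $2j=d$. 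This proves that the full Taylor series of $u_\pm$ at $\tau=0$ is uniquely determined by $u_\pm|_{\tau=0}$; it also shows that for $d\notin2\N_0$ one gets two independent pure-power solutions $\tau^{\lambda_\pm}u_\pm$, whereas for $d\in2\N_0$ the construction meets an obstruction at order $2j=d$ (if $d\geq2$) or yields only one pure-power solution because $\lambda_+=\lambda_-$ (if $d=0$), and in both cases the second solution is $\tau^{\lambda_+}(\log\tau)u_+$ modulo $\tau^{\lambda_-}(\text{even})$, consistent with \eqref{EqdSGblAsy} since for $d\in2\N_0$ a term $\tau^{\lambda_+}(\text{even})$ is absorbed into $\tau^{\lambda_-}u_-$. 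Convergence of the Frobenius series on $[0,1]$ is automatic, as the modal ODE has $\tau=0$ as its only singular point and is entire and even in $\tau$ otherwise; equivalently this is the entirety of $z\mapsto z^{\mp\alpha}J_{\pm\alpha}(z)$.

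For the remaining analytic statements I would solve the initial value problem at $\tau=1$ modewise: for each $k$ there is a unique $\hat u_k$ on $(0,1]$, which by the previous step extends to $[0,1]$ with the expansion \eqref{EqdSGblAsy}. The assignment $T_k\colon(\hat u^0_k,\hat u^1_k)\mapsto(\hat u_{-,k}|_{\tau=0},\hat u_{+,k}|_{\tau=0})$ is for each $k$ a linear isomorphism $\C^2\to\C^2$, since it expresses the Cauchy data at $\tau=1$ through the leading coefficients of the two Frobenius solutions. Summing over $k$, the proposition follows once $\|T_k\|$ and $\|T_k^{-1}\|$ are shown to grow at most polynomially in $|k|$: then Schwartz decay of the Fourier coefficients of $(u^0,u^1)$ is equivalent to that of $(u_-|_{\tau=0},u_+|_{\tau=0})$, which together with the uniform convergence of the Frobenius series yields simultaneously $u_\pm\in\CI([0,1]\times\TT^3_L)$ and the isomorphism property, while the determination of the Taylor coefficients of $u_\pm$ from their restrictions to $\tau=0$ is inherited from the modewise recursion.

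The main obstacle is precisely this uniformity: controlling the passage from Cauchy data at $\tau=1$ to asymptotic data at $\tau=0$ uniformly across the degenerating family, bridging the low-frequency regime ($k=0$ is an Euler equation, small $\xi_k$ a regular perturbation of it) and the high-frequency regime ($\xi_k\to\infty$, with oscillatory Bessel functions). Concretely, the connecting matrices factor through the large-argument asymptotics $J_{\pm\alpha}(\xi_k)=O(\xi_k^{-1/2})$ at $\tau=1$ and the small-argument behavior $\tau^{3/2}J_{\pm\alpha}(\xi_k\tau)\sim\mathrm{const}\cdot\xi_k^{\pm\alpha}\tau^{\lambda_\pm}$ at $\tau=0$, while the Wronskian of the Frobenius pair is $\xi_k$-independent; this gives the needed polynomial bounds. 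Alternatively — and in the spirit of the generalizations mentioned in the introduction — one can bypass separation of variables by treating $\Box_g-\nu$ directly as a b-operator near the de~Sitter conformal boundary $\tau=0$ and invoking the microlocal radial-point and propagation estimates of \cite{VasyWaveOndS} (see also \cite[Appendix~C]{HintzVasyKdSStability} and \cite{HintzXiedS}), which package the uniformity into energy and b-regularity estimates on $M$ and directly produce the polyhomogeneous expansion \eqref{EqdSGblAsy}; in either approach the evenness and the recursion for the Taylor coefficients are read off from the indicial computation above.
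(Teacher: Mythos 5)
Your proposal is correct in outline but takes a genuinely different route from the paper's. The paper deliberately avoids separation of variables: it first proves a global energy estimate yielding the a priori conormal bound $u\in\cA^{-C}([0,1];\CI(\TT^3_L))$, and then exploits the regular-singular structure of $(\tau\pa_\tau)^2-3\tau\pa_\tau-\nu$ in $\tau$ alone---treating $\tau^2\Delta_X u$ as an error gaining $\tau^2$ at each step---to iteratively peel off the terms $\tau^{\lambda_\pm}u_\pm$; this is supplemented by an explicit construction (Borel summation of the formal Frobenius solution plus solving away a rapidly vanishing error) of exact solutions with prescribed asymptotic data, and a uniqueness statement for solutions vanishing to infinite order at $\tau=0$. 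You instead expand in Fourier modes on $\TT^3_L$, reduce to Bessel's equation of order $\alpha=\tfrac12 d$, and control the connection problem between Cauchy data at $\tau=1$ and Frobenius data at $\tau=0$ uniformly in the frequency $\xi_k$. You correctly identify this uniformity as the crux, and your mechanism is sound: the Wronskian of the normalized Frobenius pair is exactly $d\tau^2$, hence equals $d$ at $\tau=1$ independently of $\xi_k$, while the entries of the connection matrices are polynomially bounded by the large-argument Bessel asymptotics, so $\|T_k\|$ and $\|T_k^{-1}\|$ grow at most polynomially and Schwartz decay of Fourier coefficients is preserved in both directions. Two points deserve more care in a full write-up: joint smoothness of $u_\pm$ up to $\tau=0$ (not merely of $u_\pm|_{\tau=0}$) requires polynomial-in-$\xi_k$ bounds on $\sup_{\tau\in[0,1]}\bigl|\pa_\tau^j\bigl((\xi_k\tau)^{\mp\alpha}J_{\pm\alpha}(\xi_k\tau)\bigr)\bigr|$ for every $j$, which follow from the same asymptotics but should be stated; and the case $d\in 2\N_0$ (integer-order Bessel, $Y_\alpha$ and logarithms, with the $k=0$ Euler mode exceptional) needs the usual modifications---though the paper likewise leaves that case to the reader. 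The trade-off is clear: your argument is more elementary and produces explicit connection coefficients, but it is tied to the exact solvability of the constant-coefficient model on the torus, whereas the paper's energy/iteration argument is the template that transfers to the degenerating gluing analysis of \S\ref{SC} and to settings where separation of variables is unavailable.
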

\begin{proof}
  We begin by proving an estimate for the energy
  \[
    E[u](\tau):=\frac12\int_{\TT^3_L} \bigl(|\tau\pa_\tau u(\tau,X)|^2+|\tau\pa_X u(\tau,X)|^2+|u(\tau,X)|^2\bigr)\,\dd X
  \]
  as $\tau\searrow 0$, where $u$ solves the Klein--Gordon equation with forcing $f$,
  \begin{equation}
  \label{EqdSGbl}
    (\Box_g-\nu)u = \bigl(-(\tau\pa_\tau)^2 + 3\tau\pa_\tau + \tau^2\Delta_X - \nu\bigr)u = f.
  \end{equation}
  Direct differentiation of $E(\tau)$ and integration by parts give
  \begin{align*}
    -\tau\pa_\tau E[u] &= \Re\int_{\TT^3_L} \bigl(-(\tau\pa_\tau)^2 u\ol{\tau\pa_\tau u} - \tau\pa_X u\ol{\tau\pa_X\tau\pa_\tau u} - |\tau\pa_X u|^2 - u\ol{\tau\pa_\tau u}\bigr)\,\dd X \\
      &= \Re \int_{\TT^3_L} \bigl(-3|\tau\pa_\tau u|^2 - |\tau\pa_X u|^2 + (\nu-1)u\ol{\tau\pa_\tau u}\bigr)+f\ol{\tau\pa_\tau u}\,\dd X \\
      &\leq C E[u] + \half\|f(\tau,\cdot)\|_{L^2(\TT^3_L)}^2,
  \end{align*}
  where $C=|\nu-1|+1$. Therefore,
  \[
    \tau^C E[u](\tau)\leq E[u](1)+\int_\tau^1 \half\|\tilde\tau^C f(\tilde\tau,\cdot)\|_{L^2(\TT^3_L)}^2\,\frac{\dd\tilde\tau}{\tilde\tau}.
  \]
  For $f=0$ this gives $E[u](\tau)\leq\tau^{-C}E[u](1)$. Moreover, if $u$ solves~\eqref{EqdSGbl} with $f=0$, then so does $\Delta_X^N u$ for all $N\in\N_0$. We can furthermore estimate $\tau\pa_\tau u$ as the solution of $(\Box_g-\nu)(\tau\pa_\tau u)=-2\tau^2\Delta_X u$, giving $E[\tau\pa_\tau\Delta_X^N u]\lesssim\tau^{-C}$; higher derivatives in $\tau\pa_\tau$ are estimated similarly. Thus, by Sobolev embedding on $\TT^3_L$, we have
  \[
    u \in \cA^{-C}([0,1]_\tau;\CI(\TT^3_L)),
  \]
  by which we mean that for all $N,k\in\N_0$, one has $\|(\tau\pa_\tau)^k u\|_{\cC^N(\TT^3_L)}\leq C_{N k}\tau^{-C}$ for $\tau\in(0,1]$.

  The rest of the proof closely follows \cite[\S4]{VasyWaveOndS}. Rewrite equation~\eqref{EqdSGbl} with $f=0$ as
  \begin{equation}
  \label{EqdSGblCons}
    \bigl((\tau\pa_\tau)^2-3\tau\pa_\tau-\nu\bigr)u = \tau^2\Delta_X u.
  \end{equation}
  This is a regular-singular ODE in $\tau$, with indicial roots $\lambda_\pm(\nu)$ and right hand side in $\cA^{-C+2}([0,1];\CI(\TT_L^3))$. If $-C+2<\Re\lambda_\pm(\nu)$, we conclude that $u\in\cA^{-C+2}$, thus improving the decay as $\tau\to 0$ by a power of $\tau^2$; otherwise, if, say, $\Re\lambda_-(\nu)\in(-C,-C+2)$ but $\Re\lambda_+(\nu)\notin(-C,-C+2)$, one gets
  \begin{equation}
  \label{EqdSGblFirst}
    u=\tau^{\lambda_-(\nu)}u_-^0+\tilde u^0,\qquad
    u_-^0\in\CI(\TT^3_L),\ \tilde u^0\in\cA^{-C+2}([0,1];\CI(\TT^3_L)).
  \end{equation}
  For simplicity, we shall assume for the rest of the proof that $\lambda_+(\nu)-\lambda_-(\nu)\notin 2\N_0$, and leave the simple modifications for the general case to the reader. 

  At this point, it is convenient to pause and consider the \emph{construction} of a solution with prescribed leading order asymptotics. For definiteness, let us consider asymptotic data $u_-(0,X)=u_-^0(X)\in\CI(\TT^3_L)$ and $u_+(0,X)\equiv 0$, and show how to construct $u_-(\tau,X)\in\CI([0,1]\times\TT^3_L)$ so that $u(\tau,X)=\tau^{\lambda_-(\nu)}u_-(\tau,X)$ solves $(\Box_g-\nu)u=0$. This relies on the rewriting~\eqref{EqdSGblCons}: having constructed $u_-^{2 k}$ for $k=0,\ldots,k_0-1$, we define $u_-^{2 k}\in\CI(\TT^3_L)$ by
  \begin{equation}
  \label{EqdSGblConsIt}
    (\lambda^2-3\lambda-\nu)|_{\lambda=\lambda_-(\nu)+2 k}\cdot u_-^{2 k} = \Delta_X u_-^{2 k-2}.
  \end{equation}
  Choose $\hat u_-\in\CI([0,1]\times\TT^3_L)$ so that $\pa_\tau^{2 k}\hat u_-(0)/(2 k)!=u_-^{2 k}$ for all $k=0,1,2,\ldots$, then $\tau^{\lambda_-(\nu)}\hat u_-$ is an approximate solution of the Klein--Gordon equation in the sense that $f:=(\Box_g-\nu)(\tau^{\lambda_-(\nu)}\hat u_-)\in\cA^N([0,1];\CI)$ for all $N$. By solving $(\Box_g-\nu)u_{\tau_0}=f$ with initial data $(u_{\tau_0},-\tau\pa_\tau u_{\tau_0})|_{\tau=\tau_0}=(0,0)$ on $[\tau_0,1]\times\TT^3_L$ and letting $\tau_0\to 0$, one can show (using energy estimates similarly to above) that there exists $\tilde u\in\bigcap_N\cA^N([0,1];\CI)$ with $(\Box_g-\nu)\tilde u=f$. Hence, $u:=\tau^{\lambda_-(\nu)}u_-$ with $u_-=\hat u_--\tau^{-\lambda_-(\nu)}\tilde u\in\CI([0,1]\times\TT^3_L)$ solves $(\Box_g-\nu)u=0$, as desired. Note that the Taylor series of $u_-$ at $\tau=0$ is uniquely determined by $u_-^0$.

  Returning to the proof of~\eqref{EqdSGblAsy}, once we have obtained the first leading order term $\tau^{\lambda_-(\nu)}u_-^0$ as in~\eqref{EqdSGblFirst}, we denote by $\tau^{\lambda_-(\nu)}u_-$ (with $u_-\in\CI([0,1]\times\TT^3_L)$) the solution of the Klein--Gordon equation with asymptotic data $(u_-^0,0)$; then $\tilde u:=u-\sum_-\tau^{\lambda_-(\nu)}u_-$ is a solution with trivial $\tau^{\lambda_-(\nu)}$ leading order term, so $\tilde u\in\cA^{-C+2}$. One can then proceed with the iterative argument based on~\eqref{EqdSGblCons} until one encounters the indicial root $\lambda_+(\nu)$. One then subtracts a solution with the corresponding leading order term from $\tilde u$ to obtain a solution $\tilde u'\in\cA^{C'}$ with $C'>\Re\lambda_\pm(\nu)$. But then one can conclude $\tilde u'\in\cA^N$ for $N=C',C'+2,C'+4,\ldots$, thus $\tilde u'$ vanishes to infinite order at $\tau=0$, and then an energy estimate shows that $\tilde u'$ must, in fact, vanish (see also \cite[Lemma~1]{ZworskiRevisitVasy}). This completes the proof.
\end{proof}

\begin{proof}[Proof of Proposition~\usref{PropdS}]
  The starting point is the fact that the set of resonances for the Klein--Gordon equation on dS in a half space $\Im\omega>-C$ is finite for all $C\in\R$ since dS is nontrapping, see \cite[\S4]{VasyMicroKerrdS} or \cite[\S2.1]{HintzVasySemilinear}. Moreover, any QNM $\omega_j$ (and its corresponding mode solution $u_j$) contributes nontrivially in the expansion~\eqref{EqIExp}, i.e.\ with $c_j\neq 0$, for suitable initial data $(u^0,u^1)$ (for instance, for $(u^0,u^1)=(u_j,-i\omega_j u_j)$). Thus, we shall determine the asymptotic behavior of $u$ as $t_*\to\infty$ by means of Proposition~\ref{PropdSGbl} via a change of coordinates, and read off the expansion~\eqref{EqIExp}, hence the QNMs and resonant states.

  The neighborhood of (part of) the static patch of de~Sitter space of interest to us is $D=[0,\infty)_{t_*}\times\{|x|<2\}$, with initial data for waves posed at $t_*=0$. Translated to $(\tau,X)=(e^{-t_*},x e^{-t_*})\in(0,1]\times\R^3$ (see~\eqref{EqdSCoords}), this domain is contained in $D':=\{\tau\leq 1,\,|X|<2\}$. By finite speed of propagation, solutions of the wave equation on global de~Sitter space are unique in the domain $D'':=\{\tau\leq 1,\,|X|<2+\tau\}\supset D'$ when initial data are posed at $\tau=1$ in $|X|<3$. Thus, in order to analyze waves in $D$, it suffices to analyze waves in $(0,1]_\tau\times\TT^3_L$ for any fixed $L>3$; take $L=4$ for definiteness.

  We then Taylor expand the functions $u_\pm(\tau,X)$ in~\eqref{EqdSGblAsy} at $(\tau,X)=(0,0)$. Consider the expansion of $\tau^{\lambda_\pm(\nu)}u_\pm$, which reads (writing `$\equiv$' for equality in Taylor series at $(\tau,X)=(0,0)$)
  \begin{align*}
    \tau^{\lambda_\pm(\nu)}u_\pm(\tau,X) &\equiv \tau^{\lambda_\pm(\nu)} \sum_{k,\alpha} \frac{1}{k!\alpha!}\pa_\tau^k \pa_X^\alpha u_\pm(0,0) \tau^k X^\alpha \\
      &\equiv \sum_{j=0}^\infty e^{-(\lambda_\pm(\nu)+j)t_*} \Biggl(\sum_{|\alpha|+2 q=j} \frac{1}{(2 q)!\alpha!}\pa_\tau^{2 q}\pa_X^\alpha u_\pm(0,0) x^\alpha\Biggr),
  \end{align*}
  since only even powers of $\tau$ appear in the Taylor expansion of $u_\pm$. From this expression, we can read off that every QNM has to be equal to $-i(\lambda_\pm(\nu)+j)$ for some $j\in\N_0$. We prove the converse for $d\notin 2\N_0$ and leave the case $d\in 2\N_0$ to the reader. Since $\pa_\tau^{2 q}\pa_X^\alpha u_\pm(0,0)=c_{q\alpha}(\nu)\Delta_X^q\pa_X^\alpha u_\pm(0,0)$ for certain nonzero constants $c_{q\alpha}(\nu)$ (computable by means of~\eqref{EqdSGblCons}), one can express the inner sum entirely in terms of the $j$-th order Taylor coefficients of $u_\pm(0,X)$ at $X=0$; it is then easy to see that for any $j\in\N_0$, one can choose $u_\pm(0,X)$ so that the coefficient of $e^{-(\lambda_\pm(\nu)+j)t_*}$ is a nonzero function of $x$. In fact, one can freely prescribe the coefficient of $x^\alpha$ for any $\alpha\in\N_0^3$ with $|\alpha|=j$; since the $L^2(\Sph^2)$-inner product of $x^\alpha|_{|x|=1}$ and $Y$ is equal to $0$ for all degree $l$ spherical harmonics $Y\in\CI(\Sph^2)$ if and only if $|\alpha|<l$ or $|\alpha|-l$ is odd, we also obtain the stated conclusion for $\QNM_{\dS,l}(\Lambda,\nu)$.

  The dimension of the space of mode solutions with angular momentum $l$ corresponding to $\omega\in\QNM_{\dS,l}(\Lambda,\nu)$ can be proven to be equal to the dimension $2 l+1$ of the space of degree $l$ spherical harmonics by a careful inspection of the above calculation as well. For an alternative and more conceptual argument, see Remark~\ref{RmkdSDim} below.
\end{proof}

\subsection{QNMs via meromorphic continuation}
\label{SsdSM}

We recall here the definition of QNMs as poles of the meromorphic continuation of a resolvent acting on suitably chosen function spaces. As a preparation for the arguments in~\S\ref{SC}, we define QNMs for the ODEs arising via separation of variables on the static model of de~Sitter space. This separation creates an artificial (polar coordinate) singularity at $r=0$ which will precisely match the conic singularity arising from gluing a small Schwarzschild black hole into de~Sitter space at $r=0$ in~\S\ref{SC}; moreover, this gives us the opportunity to relate calculations in the physics literature \cite{BradyChambersLaarakeersPoissonSdSFalloff} based on separation into spherical harmonics to the definition of mode solutions which are required to be smooth across $x=r\omega=0$. (Indeed, this polar coordinate singularity does not feature in the existing literature \cite{VasyMicroKerrdS,WarnickQNMs}.) Thus, working with the form~\eqref{EqIdSStaticExt} of the de~Sitter metric and applying the Klein--Gordon operator
\[
  \Box_{g_\dS}-\nu = -\pa_{t_*}^2 - (2 r\pa_r+3)\pa_{t_*} + r^{-2}\pa_r r^2(1-r^2)\pa_r + r^{-2}\Delta_{\Sph^2} - \nu
\]
to separated functions of the form $u(r)e^{-i\omega t_*}Y_{l m}$ produces the ordinary differential operator
\begin{equation}
\label{EqdSMOp}
  P_{l,0,\nu}(\omega) = r^{-2}\pa_r r^2(1-r^2)\pa_r + i\omega(2 r\pa_r+3) - r^{-2}l(l+1) + \omega^2-\nu.
\end{equation}
We define the function spaces on which we shall consider the action of this operator:

\begin{definition}[Weighted b-Sobolev spaces]
\label{DefdSM}
  Let $\alpha\in\R$, $k\in\N_0$, $r_0>0$. Then we define the weighted $L^2$- and b-Sobolev spaces
  \begin{align*}
    \Hb^{0,\alpha}([0,r_0)) &:= \bigl\{ u \colon r^{-\alpha}u \in L^2([0,r_0)_r;r^2\,\dd r) \bigr\}, \\
    \Hb^{k,\alpha}([0,r_0)) &:= \bigl\{ u \in \Hb^{0,\alpha}([0,r_0)) \colon (r\pa_r)^j u\in\Hb^{0,\alpha}([0,r_0))\ \forall\,j=0,\ldots,k \bigr\}.
  \end{align*}
\end{definition}

\begin{prop}[Fredholm property of $P_{l,0,\nu}(\omega)$]
\label{PropdSFred}
  Let $k\geq 2$, $\alpha\in(\half-l,\tfrac32+l)$. Let
  \[
    \cX^{k,\alpha} := \bigl\{ u\in\Hb^{k,\alpha}([0,2)) \colon P_{l,0,\nu}(0)u\in\Hb^{k-1,\alpha-2}([0,2)) \bigr\}.
  \]
  Then for $\Im\omega>\half-k$, the operator $P_{l,0,\nu}(\omega)\colon\cX^{k,\alpha}\to\Hb^{k-1,\alpha-2}([0,2))$ is Fredholm of index $0$ and invertible when $\Im\omega\gg 1$.
\end{prop}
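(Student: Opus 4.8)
The plan is to regard $P_{l,0,\nu}(\omega)$ from~\eqref{EqdSMOp} as a second order ODE on $(0,2)$ whose structure degenerates at three places---the polar coordinate singularity $r=0$, the cosmological horizon $r=1$, and the artificial endpoint $r=2$---and to combine one a priori estimate adapted to each. Near $r=0$ one has $P_{l,0,\nu}(\omega)=r^{-2}\bigl((r\pa_r)^2+r\pa_r-l(l+1)\bigr)+Q$ with $Q\colon\Hb^{k,\alpha}\to\Hb^{k-1,\alpha}$ (in fact with a gain of weight and smoothness), so $P_{l,0,\nu}(\omega)$ is $r^{-2}$ times a nondegenerate b-operator with indicial polynomial $(\zeta-l)(\zeta+l+1)$; since $r^\zeta\in\Hb^{0,\alpha}([0,\eps))$ exactly when $\Re\zeta>\alpha-\tfrac{3}{2}$, the hypothesis $\alpha\in(\tfrac{1}{2}-l,\tfrac{3}{2}+l)$ says precisely that the weight $\alpha-\tfrac{3}{2}$ lies strictly between the indicial roots $-l-1$ and $l$. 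Near $r=1$ the coefficient $1-r^2$ of $\pa_r^2$ vanishes; writing $s=1-r$ one finds $P_{l,0,\nu}(\omega)=2 s^{-1}(s\pa_s)\bigl(s\pa_s-i\omega\bigr)+(\text{lower order})$, a regular singular operator with indicial roots $0$ and $i\omega$, the latter giving a solution $\sim s^{i\omega}$ with $|s^{i\omega}|=s^{-\Im\omega}$, which lies in $H^k$ near $r=1$ if and only if $\Im\omega<\tfrac{1}{2}-k$; thus $\Im\omega>\tfrac{1}{2}-k$ is exactly the condition under which only the smooth (outgoing, in the $t_*$-frame) solution is admissible. On compact subsets of $(0,2)\setminus\{1\}$, including up to $r=2$---where $\Hb^{k,\alpha}([0,2))$ imposes only $L^2$-type control, automatic for solutions---the operator is uniformly elliptic of order two. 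Since moreover $P_{l,0,\nu}(\omega)-P_{l,0,\nu}(0)=i\omega(2 r\pa_r+3)+\omega^2$ maps $\Hb^{k,\alpha}$ into $\Hb^{k-1,\alpha}\subset\Hb^{k-1,\alpha-2}$, the space $\cX^{k,\alpha}$ does not depend on $\omega$ and $P_{l,0,\nu}(\omega)\colon\cX^{k,\alpha}\to\Hb^{k-1,\alpha-2}$ is bounded.

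The three estimates are then: at $r=0$, the b-elliptic estimate---because $\alpha-\tfrac{3}{2}$ avoids the indicial roots the normal operator is invertible, so $\|\chi_0 u\|_{\Hb^{k,\alpha}}\lesssim\|P_{l,0,\nu}(\omega)u\|_{\Hb^{k-2,\alpha-2}}+\|u\|_{\Hb^{-N,\alpha-\eps}([0,2))}$ for a cutoff $\chi_0$ near $r=0$ and any $N$; at $r=1$, the radial point estimate---because $k>\tfrac{1}{2}-\Im\omega$, one propagates $H^k$-regularity across the horizon, gaining only one order of regularity (rather than the two gained by an elliptic operator), via $\|\chi_1 u\|_{H^k}\lesssim\|P_{l,0,\nu}(\omega)u\|_{H^{k-1}}+\|u\|_{H^k(U)}+\|u\|_{H^{-N}}$, where $U$ is a neighborhood of $\supp\nabla\chi_1$ contained in $(0,2)\setminus\{1\}$; and away from $r=0,1$ and near $r=2$, ordinary elliptic ODE estimates. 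In this one-dimensional, single-mode setting the radial point estimate is elementary: it is a positive commutator (red-shift) argument, equivalently a direct analysis of the regular singular ODE in $s$, with the threshold $\Im\omega>\tfrac{1}{2}-k$ being exactly what gives the decisive term at $s=0$ a favourable sign. This is where the defining condition $P_{l,0,\nu}(0)u\in\Hb^{k-1,\alpha-2}$ of $\cX^{k,\alpha}$ is used: a general $u\in\Hb^{k,\alpha}$ has only $P_{l,0,\nu}(\omega)u\in\Hb^{k-2,\alpha-2}$, whereas the radial estimate requires $H^{k-1}$ regularity of $P_{l,0,\nu}(\omega)u$ near $r=1$, and this also forces the $\Hb^{k-1,\alpha-2}$ target in the statement. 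A partition of unity then yields $\|u\|_{\cX^{k,\alpha}}\lesssim\|P_{l,0,\nu}(\omega)u\|_{\Hb^{k-1,\alpha-2}}+\|u\|_{\Hb^{-N,\alpha-\eps}([0,2))}$ with $\cX^{k,\alpha}$ embedding compactly into the error space, so $P_{l,0,\nu}(\omega)$ has finite-dimensional kernel and closed range.

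For the index and invertibility, I would note that the formal adjoint of $P_{l,0,\nu}(\omega)$ with respect to $r^2\,\dd r$ is $P_{l,0,\bar\nu}(\bar\omega)$, and run the same three estimates for it on the appropriate dual spaces---spaces of distributions supported in $[0,2]$ and singular at $r=1$, consistently with the dual resonant states of~\S\ref{SsIW}, the radial point at $r=1$ now being of the opposite character and the estimate holding for $\Im\bar\omega$ below the complementary threshold. This gives finite-dimensional cokernel, hence Fredholmness; the index is then locally constant, hence constant, on the connected region $\{\Im\omega>\tfrac{1}{2}-k\}$. To evaluate it I would identify $\ker P_{l,0,\nu}(\omega)$ with a space of de~Sitter mode solutions: if $u\in\cX^{k,\alpha}$ and $P_{l,0,\nu}(\omega)u=0$, then $u$ has no $r^{-l-1}$ component at $r=0$ (that behavior is not in $\Hb^{k,\alpha}$ since $\alpha>\tfrac{1}{2}-l$), so $uY$ extends smoothly across $x=0$ for any degree $l$ spherical harmonic $Y$ (as $r^l Y$ is a harmonic polynomial), and $u$ is outgoing at $r=1$; thus $\ker P_{l,0,\nu}(\omega)$ is exactly the space of Proposition~\ref{PropdS}, which is nontrivial only for $\omega$ with $\Im\omega$ bounded above. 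Hence for $\Im\omega\gg1$ there are neither resonances nor (by the dual argument) dual resonances, so $P_{l,0,\nu}(\omega)$ is invertible there, has index $0$, and therefore has index $0$ throughout; alternatively, invertibility for $\Im\omega\gg1$ follows directly from an energy estimate in the $t_*$-slicing exactly as in the proof of Proposition~\ref{PropdSGbl}.

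The main obstacle is the radial point estimate at $r=1$: encoding the outgoing condition through plain $H^k$-regularity at the horizon and pinning down the admissible range $\Im\omega>\tfrac{1}{2}-k$ precisely. The analysis at $r=0$ is classical b-calculus and the remaining ingredients are standard elliptic ODE theory.
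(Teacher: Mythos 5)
Your proposal follows the same architecture as the paper's proof: a b-elliptic estimate at $r=0$ via the normal operator $(r\pa_r)^2+r\pa_r-l(l+1)$ with the weight $\alpha-\tfrac32$ between the indicial roots $-l-1$ and $l$; a positive-commutator (red-shift/radial point) estimate at $r=1$ whose favourable sign is exactly the condition $\Im\omega>\half-k$ (your regular-singular normal form $2 s^{-1}(s\pa_s)(s\pa_s-i\omega)$ and the membership criterion for $s^{i\omega}$ match the paper's indicial-root discussion); standard elliptic estimates elsewhere; a compact error term giving closed range and finite-dimensional kernel; and identification of kernel elements with smooth de~Sitter mode solutions, which are absent for $\Im\omega\gg 1$ by an energy estimate, yielding invertibility and hence index $0$ by connectedness of $\{\Im\omega>\half-k\}$. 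The one place you genuinely diverge is the finite-codimension step: you propose to run the dual radial point and b-estimates for $P_{l,0,\nu}(\omega)^*$ on the supported-distribution spaces $\Hb^{-k+1,-\alpha+2}([0,2])^\bullet$. This is a legitimate and standard route (it is how the general microlocal theory proceeds), but the paper explicitly sidesteps it precisely because estimates on negative-order spaces would require pseudodifferential machinery rather than the elementary ODE arguments used everywhere else; instead it bounds $\dim\ker P(\omega)^*$ directly by observing that any such $v$ vanishes on $(1,2]$ by ODE uniqueness, that $v|_{(0,1)}$ lies in the $2$-dimensional solution space of a nondegenerate second-order ODE, and that the remaining ambiguity is a finite linear combination of differentiated $\delta$-distributions at $r=1$. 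Your route works but costs more than you acknowledge; the paper's shortcut buys Fredholmness with no additional technology, at the price of being specific to the one-dimensional separated setting. Your vaguer appeal to "the dual argument" for triviality of the cokernel at $\Im\omega\gg 1$ should likewise be fleshed out along the paper's lines (the support and indicial-root analysis at $r=1$ showing the dual state has regularity $H^{\frac12+\Im\omega-\eps}$, hence is a finite-energy rapidly decaying solution killed by an energy estimate), but this is a matter of detail rather than a gap.
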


The quasinormal modes of $\Box_{g_\dS}-\nu$ can now be defined in $\Im\omega>\half-k$ as the set of poles of $P_{l,0,\nu}(\omega)^{-1}\colon\Hb^{k-1,\alpha-2}\to\Hb^{k,\alpha}$; this set is discrete by the analytic Fredholm theorem. Moreover, the proof below shows that the intersection of this set with $\{\Im\omega>\half-k_0\}$ is independent of the choice of $k\geq k_0$, hence taking the union over all $k$ recovers the full set of QNMs in Proposition~\ref{PropdSGbl} in view of the smoothness of $e^{-i\omega t_*}u(r)Y_{l m}(\omega)$ on $\R_{t_*}\times\{ x=r\omega,\ |x|<2 \}$ for $u\in\ker P_{l,0,\nu}(\omega)$ (which we shall prove as well).

\begin{proof}[Proof of Proposition~\usref{PropdSFred}]
  Let us abbreviate
  \[
    P(\omega):=P_{l,0,\nu}(\omega).
  \]
  We give a proof of the Proposition in the spirit of \cite{VasyMicroKerrdS}, but, as in \cite{WarnickQNMs}, without microlocal techniques. We use b-analysis at the (regular singular) point $r=0$. In the present setting, one can certainly proceed even more explicitly than we do, but the perspective on a priori estimates here generalizes well to the analysis of the singular limit considered in~\S\ref{SC}, and also to settings in which separation of variables is unavailable or not desirable (cf.\ Remark~\ref{RmkIStronger}).

  \pfstep{Estimate for $P(\omega)$.} In $r<1$, we can write $r^2 P(\omega)=(r\pa_r)^2+a_1(r)r\pa_r+a_0(r)$ with smooth coefficients $a_0,a_1\in\CI([0,1)_r)$ satisfying $a_1(0)=1$ and $a_0(0)=-l(l+1)$. Thus, for any $r_0\in(0,1)$,
  \[
    \|u\|_{\Hb^{k,\alpha}([0,r_0))} \leq C\bigl(\|P(\omega)u\|_{\Hb^{k-2,\alpha-2}([0,r_0))} + \|u\|_{\Hb^{k-1,\alpha}([0,r_0))}\bigr).
  \]
  Fix a cutoff $\chi\in\CIc([0,r_0))$ which is identically $1$ on $[0,\frac{r_0}{2}]$. The interpolation inequality $\|u\|_{\Hb^{k-1,\alpha}}\leq C_\eps\|u\|_{\Hb^{0,\alpha}}+\eps\|u\|_{\Hb^{k,\alpha}}$, valid for any $\eps>0$, then gives for $\eps=1/(2 C)$
  \begin{equation}
  \label{EqdSFred0}
    \|u\|_{\Hb^{k,\alpha}([0,r_0))} \lesssim \|P(\omega)u\|_{\Hb^{k-2,\alpha-2}([0,r_0))} + \|\chi r^{-\alpha+\frac32}u\|_{L^2([0,r_0);\frac{\dd r}{r})} + \|(1-\chi)u\|_{\Hb^{0,\alpha}([0,r_0))}.
  \end{equation}
  (Here and below, we write `$A\lesssim B$' when there exists a constant $C$, independent of $u$, so that $A\leq C B$.)

  Next, let
  \begin{equation}
  \label{EqdSFredQ}
    Q:=(r\pa_r)^2+a_1(0)r\pa_r+a_0(0)=(r\pa_r)^2+r\pa_r-l(l+1);
  \end{equation}
  in the terminology of \cite{MelroseAPS}, this is the normal operator of $r^2 P(\omega)$ at $r=0$. Since $\alpha-\frac32\in(-l-1,l)$ lies between the indicial roots of $Q$ at $r=0$, use of the Mellin transform and Plancherel's theorem gives $\|r^{-\alpha+\frac32}\chi u\|_{L^2([0,r_0);\frac{\dd r}{r})}\lesssim \| r^{-\alpha+\frac32}Q(\chi u)\|_{L^2([0,r_0);\frac{\dd r}{r})}$. Since $P(\omega)-r^{-2}Q$ is bounded as a map $\Hb^{2,\alpha}\to\Hb^{1,\alpha-1}$, and since $[P(\omega),\chi]\colon\Hb^{1,N}\to\Hb^{0,N'}$ for any $N,N'$ (which we use for $N=\alpha-1$, $N'=\alpha$), we obtain
  \begin{align*}
    \|\chi r^{-\alpha+\frac32}u\|_{L^2([0,r_0);\frac{\dd r}{r})} &\lesssim \|P(\omega)(\chi u)\|_{\Hb^{0,\alpha-2}([0,r_0))} + \|\chi u\|_{\Hb^{1,\alpha-1}([0,r_0))} \\
      &\lesssim \|\chi P(\omega)u\|_{\Hb^{0,\alpha-2}([0,r_0))} + \|u\|_{\Hb^{1,\alpha-1}([0,r_0))}.
  \end{align*}
  Plugging this into~\eqref{EqdSFred0}, and using that $(1-\chi)\cdot(-)\colon\Hb^{0,\alpha}\to\Hb^{0,N}$ for any $N$, gives
  \begin{equation}
  \label{EqdSFred1}
    \|u\|_{\Hb^{k,\alpha}([0,r_0))} \leq C\bigl( \|P(\omega)u\|_{\Hb^{k-2,\alpha-2}([0,r_0))} + \|u\|_{\Hb^{1,\alpha-1}([0,r_0))} \bigr).
  \end{equation}

  We next estimate $u$ in $r\geq r_0$. Working in $r\geq r_0$ until further notice, hence with standard Sobolev spaces $H^k([r_0,2])$, we note that the coefficient of $\pa_r^2$ in $P(\omega)$ is nonzero for $r\neq 1$. Therefore, for any $r_1\in(1,2)$,
  \begin{equation}
  \label{EqdSFred2}
    \|u\|_{H^k([r_1,2])} \leq C\bigl( \|P(\omega)u\|_{H^{k-2}([r_1,2])} + \|u\|_{H^{k-1}([r_1,2])} \bigr).
  \end{equation}

  It remains to analyze $u$ in the vicinity $[r_0,r_1]$ of the cosmological horizon at $r=1$, where $P(\omega)$ degenerates; this will be an ODE version of the celebrated red-shift estimate \cite{DafermosRodnianskiRedShift}, or the microlocal radial point estimate \cite[\S2]{VasyMicroKerrdS}, \cite[\S5.4.2]{DyatlovZworskiBook}. Thus, let $r_0^-<r_0<1<r_1<r_1^+$, and let $\phi\in\CIc((r_0^-,r_1^+))$ be identically $1$ near $[r_0,r_1]$, and with $\phi'\leq 0$ on $[1,r_1^+)$. Let $\pa_r^*=-r^{-2}\pa_r r^2$ denote the adjoint of $\pa_r$ with respect to the $L^2((r_0^-,r_1^+);r^2\,\dd r)$ inner product. We then compute for $A=\phi(r)^2\pa_r^{2 k-1}-(\pa_r^*)^{2 k-1}\phi(r)^2=-A^*$ the inner product
  \begin{subequations}
  \begin{equation}
  \label{EqdSFredComm}
    2 \Re\la P(\omega)u,A u\ra = \la \sC u,u\ra,\quad \sC:=[P(\omega),A]+(P(\omega)^*-P(\omega))A.
  \end{equation}
  Here, $\sC$ is a differential operator of order $2 k$ whose coefficients have compact support in $(r_0^-,r_1^+)$; we proceed to compute its principal part. The principal part of $A$ is $2\phi(r)^2\pa_r^{2 k-1}$, that of $P(\omega)^*-P(\omega)$ is $2(-i)\bar\omega r(-\pa_r)-2 i\omega r\pa_r=4 r(\Im\omega)\pa_r$, and for the calculation of the principal part of $[P(\omega),A]$, we can replace $P(\omega)$, resp.\ $A$ by $\mu_0\pa_r^2$ where $\mu_0=1-r^2$, resp.\ $2\phi(r)^2\pa_r^{2 k-1}$. One then finds
  \begin{equation}
  \label{EqdSFredComm1}
    (-1)^k\sC = \bigl( (\pa_r^*)^k\, \phi(-2(2 k-1)\mu_0'(r) + 8 r\Im\omega)\phi\,\pa_r^k \bigr) + (\pa_r^*)^k\,4(1-r^2)\phi\phi'\,\pa_r^k + R,
  \end{equation}
  where $R$ is a $(2 k-1)$-th order differential operator with coefficients supported in $(r_0^-,r_1^+)$. Since $\mu_0'(r)<0$ near $r=1$ and
  \begin{equation}
  \label{EqdSFredCommPos}
    \eta(r):=(2 k-1)|\half\mu_0'(r)|+2 r\Im\omega
  \end{equation}
  satisfies $\eta(1)=2(k-\half+\Im\omega)>0$, the first summand in~\eqref{EqdSFredComm1}---the main term, as it is the only principal term which does not vanish at $r=1$---has a smooth square root on $[r_0^-,r_1^+]$ which is positive on $[r_0,r_1]$, provided we choose $r_0^-<1<r_1^+$ sufficiently close to $1$. In the second summand, $(1-r^2)\phi'$ is nonnegative both in $r<1$ and in $r>1$, i.e.\ has the same sign as the main term. Therefore, for any $\eps>0$ we have
  \begin{equation}
  \label{EqdSFredComm2}
    |\la\sC u,u\ra| \geq \|\eta^{\frac12}\phi\pa_r^k u\|_{L^2([r_0^-,r_1^+])}^2 - \eps\|u\|_{H^k([r_0^-,r_1^+])}^2 - C_\eps\|u\|_{H^{k-1}([r_0^-,r_1^+])}^2,
  \end{equation}
  the last two terms bounding $-|\la R u,u\ra|$ from below.

  The left hand side of~\eqref{EqdSFredComm}, in which we can integrate by parts $k-1$ times, is bounded in absolute value from above by
  \begin{equation}
  \label{EqdSFredComm3}
    4\delta^{-1}\|\phi\pa_r^{k-1}P(\omega)u\|_{L^2([r_0^-,r_1^+])}^2 + \|\delta^{\frac12}\phi\pa_r^k u\|_{L^2([r_0^-,r_1^+])}^2 + C_\delta\|u\|_{H^{k-1}([r_0^-,r_1^+])}^2
  \end{equation}
  \end{subequations}
  for any $\delta>0$. Fixing $\delta>0$ so small that $\eta^{\frac12}-\delta^{\frac12}\geq\half\eta^{\frac12}$ on $\supp\phi$, we thus obtain from~\eqref{EqdSFredComm}--\eqref{EqdSFredComm3}, and using that $\phi\equiv 1$ on $[r_0,r_1]$,
  \[
    \|u\|_{H^k([r_0,r_1])} \leq C_\eps\bigl(\|P(\omega)u\|_{H^{k-1}([r_0^-,r_1^+])} + \|u\|_{H^{k-1}([r_0^-,r_1^+])}\bigr) + \eps\|u\|_{H^k([r_0^-,r_1^+])}.
  \]
  Combining this with~\eqref{EqdSFred1} and \eqref{EqdSFred2} and choosing $\eps>0$ sufficiently small, we finally obtain
  \begin{equation}
  \label{EqdSFredFinal}
    \|u\|_{\Hb^{k,\alpha}([0,2))} \leq C\bigl( \|P(\omega)u\|_{\Hb^{k-1,\alpha-2}([0,2))} + \|u\|_{\Hb^{k-1,\alpha-1}([0,2))} \bigr).
  \end{equation}

  \pfstep{Fredholm property.} Since the inclusion $\Hb^{k,\alpha}([0,2))\hra\Hb^{k-1,\alpha-1}([0,2))$ is compact, this estimate implies that $P(\omega)\colon\cX^{k,\alpha}\to\Hb^{k-1,\alpha-2}$ has closed range and finite-dimensional kernel. Its range is the orthocomplement (with respect to $L^2([0,2);r^2\,\dd r)$) of the kernel of $P(\omega)^*$ acting on
  \[
    \bigl(\Hb^{k-1,\alpha-2}([0,2))\bigr)^*=\Hb^{-k+1,-\alpha+2}([0,2])^\bullet:=\bigl\{ v\in\Hb^{-k+1,-\alpha+2}([0,3)) \colon \supp v\subset[0,2] \bigr\}.
  \]
  Thus, if $P(\omega)^*v=0$ for $v\in\Hb^{-k+1,-\alpha+2}([0,2])^\bullet$, then $v=0$ in $(1,2]$ (by uniqueness of solutions of ODEs), so $\supp v\subset[0,1]$. At this point, one can either prove an estimate dual to~\eqref{EqdSFredFinal} for $P(\omega)^*$ (which would require the use of pseudodifferential techniques due to working with negative order spaces), or we may simply appeal to the nature of $P(\omega)^*$ as a nondegenerate second order differential operator on $(0,1)$, which implies that $v|_{(0,1)}$ must lie in a 2-dimensional space. Finally then, the kernel of the map $\Hb^{-k+1,-\alpha+2}([0,2])^\bullet\cap\ker P(\omega)^*\ni v\mapsto v|_{(0,1)}$ consists of distributions supported at $\{1\}$, thus of linear combinations of at most $(k-2)$-fold differentiated $\delta$-distributions. This proves that $\dim\ker P(\omega)^*\leq k$ is finite, and thus $P(\omega)$ is Fredholm.

  \pfstep{Triviality of the kernel for $\Im\omega\gg 1$.} Suppose $u\in\cX^{k,\alpha}\cap\ker P(\omega)$. We first claim that
  \[
    \tilde u(r\omega):=u(r)Y_{l m}(\omega)
  \]
  is a smooth function on $\{|x|<2\}$. Indeed, the above arguments imply that $u\in\Hb^{\infty,\alpha}([0,r_0))$ for $r_0<1$, and $u$ is smooth away from $r=0,1$. Now, working near $r=0$, inversion of the model operator $Q$ from~\eqref{EqdSFredQ}, and using that the weight $\alpha$ is such that $r^{-l-1}\notin\Hb^{\infty,\alpha}([0,\half))$, we conclude that $u=r^l u_0+\cO(r^{l+1})$ for some $u_0\in\C$. Since in fact $P(\omega)-r^{-2}Q=b_2(r)(r\pa_r)^2+b_1(r)r\pa_r+b_0(r)$ where $b_0,b_1,b_2$ are \emph{even} functions of $r$ (in fact, they are polynomials in $r^2$), we have $u=r^l u_0(r)$, where $u_0$ is even. Therefore, $v_0(x):=u_0(|x|)$ is a smooth function of $|x|^2$, hence smooth near $x=0$. Thus, $\tilde u(x)=r^l Y_{l m}(\omega) v_0(x)$ is smooth near $x=0$ since $r^l Y_{l m}(\omega)$ is.

  At $r=1$, one notes that
  \begin{equation}
  \label{EqdSFredRegSing}
    (1-r^2)P(\omega)u=0
  \end{equation}
  is a regular-singular ODE at $r=1$ with indicial roots $0$ and $-\Im\omega$ (see e.g.\ \cite[Appendix~A]{CardosoCostaDestounisHintzJansenSCC2}). For $u\in\cX^{k,\alpha}\subset\Hb^{k,\alpha}$ with $k>\half-\Im\omega$, the asymptotics arising from the root $-\Im\omega$ cannot occur, hence $u$ is smooth across $r=1$.

  We conclude that $U(t_*,x)=e^{-i\omega t_*}\tilde u(x)$ is a smooth solution of $(\Box_{g_\dS}-\nu)U=0$ on the domain $D=[0,\infty)_{t_*}\times\{|x|<2\}$ containing the static patch of de~Sitter space. An energy estimate shows that there exists a constant $C$ (depending only on $\nu$) such that for any smooth solution $U$, one has $|U(t_*,x)|\leq C' e^{C t_*}$, with $C'$ depending on the Cauchy data of $U$ at $t_*=0$. But this implies that $\Im\omega\leq C$ or $\tilde u=0$. Therefore, if $\Im\omega>C$, then necessarily $\tilde u=0$ and thus $u=0$, hence the kernel of $P(\omega)$ is trivial.

  \pfstep{Triviality of the cokernel; index zero.} Consider $v\in\Hb^{-k+1,-\alpha+2}([0,2])^\bullet$ satisfying $P(\omega)^*v=0$. We have $P(\omega)^*=P_{0,\bar\nu}(\bar\omega)$. Since $-\alpha+2\in(\half-l,\tfrac32+l)$ lies in the same interval as $\alpha$, the previous arguments thus imply that $\tilde v(x):=v(r)Y_{l m}(\omega)$ is smooth in $|x|<1$; in $|x|>1$, $\tilde v$ vanishes identically. Near $r=1$ on the other hand, the indicial roots of $(1-r^2)P(\omega)^*$ are $0$ and $-\Im\bar\omega=\Im\omega$. Let us restrict attention to the case $\Im\omega>0$. Since $\supp v\subset\{r\leq 1\}$, but $P(\omega)H(1-r)=-2 i\omega\delta(r-1)$ plus smoother terms, and since $P(\omega)((1-r)^j H(1-r))$ is continuous for $j\geq 1$ (thus cannot cancel the $\delta(r-1)$ distribution), the only contribution to $v$ near $r=1$ comes from the indicial root $\Im\omega$; hence, $v\in H^{\frac12+\Im\omega-\eps}([\half,\tfrac32])$ for any $\eps>0$. For $\Im\omega\gg 1$ then, $V(t_*,x):=e^{-i\bar\omega t_*}\tilde v(x)$ is thus a finite energy solution of the wave equation on de~Sitter space with fast exponential decay as $t_*\to\infty$; an energy estimate shows that this implies $V\equiv 0$ and thus $v\equiv 0$ when $\Im\omega$ is sufficiently large.

  We conclude that $P(\omega)\colon\cX^{k,\alpha}\to\Hb^{k-1,\alpha-2}$ is invertible, in particular has Fredholm index $0$, for $\Im\omega$ sufficiently large. Since the index is independent of $\omega$ with $\Im\omega>\half-k$, the proof is complete.
\end{proof}

\begin{rmk}[The smooth kernel of $P(\omega)$ is at most 1-dimensional]
\label{RmkdSDim}
  Due to the regular-singular nature of $(1-r^2)P_{0,\mu}(\omega)$ at $r=1$, any element $u\in\ker P(\omega)\cap\Hb^{k,\alpha}([0,2))\subset\CI((0,2))$ is uniquely determined by its restriction to the interval $(0,1)$ on which $P(\omega)$ is a regular second order operator. Corresponding to the indicial root $-l-1$ at the singular point $r=0$, $P(\omega)v=0$ admits a solution $v(r)=r^{-l-1}+o(r^{-l-1})$ on $(0,1)$ which does not lie in $\Hb^{0,\alpha}([0,2))$ and is thus linearly independent from $u$. This implies that $\ker P(\omega)$ is at most $1$-dimensional.
\end{rmk}

\section{Convergence of QNMs}
\label{SC}

Having determined the QNMs of de~Sitter space, we now turn to the convergence claim of Theorem~\ref{ThmIMain}.

\subsection{Schwarzschild--de~Sitter black hole spacetimes with mass tending to zero}
\label{SsCM}

We first need to pass from the static coordinates used in~\eqref{EqISdS} to coordinates valid across the event and cosmological horizons, in a manner that is uniform down to $\bhm=0$ in a suitable sense. Concretely, we introduce a function
\begin{equation}
\label{EqCMTimeFn}
  t_{*,\bhm}=t-F_\bhm(r)
\end{equation}
whose level sets are transversal to the future event and cosmological horizon, and where $F_\bhm$ is nonzero only for $r/\bhm\sim 1$ (near the event horizon) and for $r\sim 1$ (near the cosmological horizon---recall that $\Lambda=3$). To motivate the precise choice of $F_\bhm$, we compute
\begin{align}
  \Box_{g_\bhm}-\nu &= -\mu_\bhm^{-1}\pa_t^2 + r^{-2}\pa_r r^2\mu_\bhm\pa_r + r^{-2}\Delta_{\Sph^2} - \nu \nonumber\\
\label{EqCMOp}
\begin{split}
    &= \mu_\bhm^{-1}\bigl((\mu_\bhm F_\bhm')^2-1\bigr)\pa_{t_{*,\bhm}}^2 - (r^{-2}\pa_r r^2\mu_\bhm F_\bhm'+\mu_\bhm F_\bhm'\pa_r)\pa_{t_{*,\bhm}} \\
    &\qquad + r^{-2}\pa_r r^2\mu_\bhm\pa_r + r^{-2}\Delta_{\Sph^2}-\nu.
\end{split}
\end{align}

The singularities of $\mu_\bhm(r)^{-1}$ are controlled by the following lemma:
\begin{lemma}[Location of the horizons]
\label{LemmaCMRoots}
  The positive roots $0<r_+(\bhm)<r_\cC(\bhm)$ of $\mu_\bhm$ depend smoothly on $\bhm\in[0,\frac{1}{3\sqrt 3})$. They moreover satisfy
  \[
    \lim_{\bhm\searrow 0} \frac{r_+(\bhm)}{\bhm} = 2,\qquad
    \lim_{\bhm\searrow 0} r_\cC(\bhm) = 1.
  \]
\end{lemma}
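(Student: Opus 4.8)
The plan is to reduce to elementary facts about a cubic polynomial and then apply the implicit function theorem. Since we only care about $r>0$ and $r\mu_\bhm(r)=r-2\bhm-r^3=-(r^3-r+2\bhm)$, the positive roots of $\mu_\bhm$ coincide with those of
\[
  p_\bhm(r):=r^3-r+2\bhm .
\]
First I would locate the roots. We have $p_\bhm'(r)=3 r^2-1$, which vanishes only at $r=\pm\tfrac{1}{\sqrt 3}$, and $p_\bhm$ attains a local minimum at $r=\tfrac{1}{\sqrt 3}$ with value $2\bhm-\tfrac{2}{3\sqrt 3}$, which is negative precisely when $\bhm<\tfrac{1}{3\sqrt 3}$. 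Together with $p_\bhm(0)=2\bhm>0$ for $\bhm>0$, with $p_\bhm(r)\to+\infty$ as $r\to\infty$, and with the strict monotonicity of $p_\bhm$ on $(0,\tfrac{1}{\sqrt 3})$ and on $(\tfrac{1}{\sqrt 3},\infty)$, this gives, for $0<\bhm<\tfrac{1}{3\sqrt 3}$, exactly one root $r_+(\bhm)\in(0,\tfrac{1}{\sqrt 3})$ and exactly one root $r_\cC(\bhm)\in(\tfrac{1}{\sqrt 3},\infty)$, the third root being negative. At $\bhm=0$ one has $p_0(r)=r(r-1)(r+1)$; here I would define $r_+(0):=0$ and $r_\cC(0):=1$ as the limiting values of these two branches.

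Next I would establish smoothness via the implicit function theorem applied to $G(r,\bhm):=r^3-r+2\bhm$. Since $r_+(\bhm)\in(0,\tfrac{1}{\sqrt 3})$ we have $\pa_r G(r_+(\bhm),\bhm)=3 r_+(\bhm)^2-1<0$, and since $r_\cC(\bhm)\in(\tfrac{1}{\sqrt 3},\infty)$ we have $\pa_r G(r_\cC(\bhm),\bhm)=3 r_\cC(\bhm)^2-1>0$; in particular both roots are simple. Moreover $\pa_r G(0,0)=-1\neq0$ and $\pa_r G(1,0)=2\neq0$, so the implicit function theorem applies also at $\bhm=0$. Hence each of $\bhm\mapsto r_+(\bhm)$ and $\bhm\mapsto r_\cC(\bhm)$ extends to a $\CI$ function on $[0,\tfrac{1}{3\sqrt 3})$, and uniqueness of the implicit function together with the root count above guarantees that these smooth branches are exactly the roots singled out in the first step.

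Finally, the limits. Continuity of the branch $r_\cC$ gives $r_\cC(\bhm)\to r_\cC(0)=1$ as $\bhm\searrow0$. For $r_+$, the relation $p_\bhm(r_+(\bhm))=0$ rearranges to $\bhm=\tfrac12 r_+(\bhm)\bigl(1-r_+(\bhm)^2\bigr)$; since $r\mapsto r^3-r$ restricts to a strictly decreasing bijection $(0,\tfrac{1}{\sqrt 3})\to(-\tfrac{2}{3\sqrt 3},0)$ and $r_+(\bhm)^3-r_+(\bhm)=-2\bhm\to0$, we get $r_+(\bhm)\to0$, whence
\[
  \frac{r_+(\bhm)}{\bhm}=\frac{2}{1-r_+(\bhm)^2}\ \xrightarrow{\ \bhm\searrow0\ }\ 2 ;
\]
equivalently, differentiating $p_\bhm(r_+(\bhm))=0$ yields $r_+'(\bhm)=-2/(3 r_+(\bhm)^2-1)$, so $r_+'(0)=2$. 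I do not anticipate a genuine obstacle here; the only mild subtlety is the bookkeeping at $\bhm=0$, where $r_+$ degenerates to $0$ and is no longer a root of $\mu_0$ itself, so one must follow it as a simple root of the polynomial $p_\bhm$ (where the implicit function theorem still applies) rather than of $\mu_\bhm$.
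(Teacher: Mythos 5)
Your proof is correct, and for the cosmological horizon it coincides with the paper's argument: both view $r_\cC(\bhm)$ as the largest root of the cubic $r^3-r+2\bhm$ and invoke simplicity of the root $r=1$ of $r^3-r$ together with the implicit function theorem. For the event horizon the routes differ slightly. The paper rescales, writing $r_+(\bhm)=\bhm\hat r_+(\bhm)$ with $\hat r_+(\bhm)$ the smallest positive root of $1-\tfrac{2}{\hat r}-\bhm^2\hat r^2$, and applies the implicit function theorem at the simple root $\hat r_+(0)=2$; this directly exhibits $r_+(\bhm)/\bhm$ as a smooth function of $\bhm$ and is in the spirit of the two-scale analysis used throughout \S\ref{SC}. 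You instead follow $r_+$ as a simple root of the unrescaled cubic $p_\bhm(r)=r^3-r+2\bhm$ through the point $(r,\bhm)=(0,0)$, where $\pa_r p_0(0)=-1\neq 0$, and then extract the limit from the identity $r_+(\bhm)/\bhm=2/(1-r_+(\bhm)^2)$ (equivalently $r_+'(0)=2$). Both arguments are sound; yours has the small advantages of handling the entire interval $[0,\tfrac{1}{3\sqrt 3})$ rather than only small $\bhm$, and of making explicit the root-counting that identifies the two positive branches, while the paper's rescaling more transparently matches the blow-up picture of Figure~\ref{FigIBlowup}. Your closing remark correctly isolates the one genuine subtlety, namely that at $\bhm=0$ the branch $r_+$ must be tracked as a root of the cubic (or of the rescaled $\hat\mu$) rather than of $\mu_0$ itself.
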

\begin{proof}
  We only discuss the case of very small $\bhm>0$. The number $r_\cC(\bhm)$ is the largest root of the polynomial $-r\mu_\bhm=r^3-r+2\bhm$; since the largest root $r_\cC(0)=1$ of $-r\mu_0=r^3-r$ is simple, $r_\cC(\bhm)$ is an analytic function of $\bhm$ near $\bhm=0$. For the smallest positive root $r_+(\bhm)$, we write $r_+(\bhm)=\bhm\hat r_+(\bhm)$, with $\hat r_+(\bhm)$ being the smallest positive solution of $0=1-\frac{2}{\hat r}-\bhm^2\hat r^2$; the root $\hat r_+(0)=2$ is simple, which implies the result.
\end{proof}

Fix now $\chi_+\in\CIc([0,4))$ to be identically $1$ on $[0,3]$ and $\chi_\cC\in\CIc((\tfrac14,4))$ to be identically $1$ on $[\half,2]$. We then define
\begin{equation}
\label{EqCMTimeFnMod}
  F_\bhm(r) := \chi_+(r/\bhm)\int_r^{4\bhm} \mu_\bhm^{-1}(r')\,\dd r' + \chi_\cC(r)\int_{1/4}^r \mu_\bhm^{-1}(r')\,\dd r',\quad r\in(r_+(\bhm),r_\cC(\bhm)).
\end{equation}
In particular, for $0<\bhm<\tfrac{1}{16}$, the smooth function
\begin{equation}
\label{EqCMalpha}
  \alpha_\bhm := \mu_\bhm F_\bhm'
\end{equation}
is equal to $-1$ for $r<3 r_+(\bhm)$ and equal to $+1$ for $r>\half r_\cC(\bhm)$. Thus, in these two intervals,
\[
  \Box_{g_\bhm}-\nu = r^{-2}\pa_r r^2\mu_\bhm\pa_r \pm 2 r^{-1}\pa_r r\pa_{t_{*,\bhm}} + r^{-2}\Delta_{\Sph^2} - \nu.
\]
(The top sign is for $r<3 r_+(\bhm)$, the bottom sign for $r>\half r_\cC(\bhm)$.) The coefficients are thus analytic in these regions and can be analytically continued across $r=r_+(\bhm)$ and $r=r_\cC(\bhm)$. We shall regard $\Box_{g_\bhm}-\nu$ as the operator on the manifold $\R_{t_{*,\bhm}}\times[\half r_+(\bhm),2 r_\cC(\bhm)]\times\Sph^2$ defined by~\eqref{EqCMOp} and its analytic continuation. The action on the radial part $u(r)$ of separated functions of the form $e^{-i\omega t_{*,\bhm}}Y_{l m}(\omega)u(r)$ is given by
\begin{equation}
\label{EqCMP}
\begin{split}
  P_{l,\bhm,\nu}(\omega) &= r^{-2}\pa_r r^2\mu_\bhm\pa_r - r^{-2}l(l+1) - \nu \\
    &\qquad + i \omega(r^{-2}\pa_r r^2\alpha_\bhm+\alpha_\bhm\pa_r) + \beta_\bhm\omega^2, \quad
  \beta_\bhm:=\mu_\bhm^{-1}(1-\alpha_\bhm^2).
\end{split}
\end{equation}

\subsection{QNMs of Schwarzschild--de~Sitter spacetimes}
\label{SsCSdS}

For fixed $\bhm\in(0,\frac{1}{3\sqrt{3}})$, we can analyze the operator $P_{l,\bhm,\nu}(\omega)$ on the interval
\[
  I_\bhm:=\bigl(\half r_+(\bhm),2 r_\cC(\bhm)\bigr)
\]
similarly to Proposition~\ref{PropdSFred}; one now has two horizons at which one performs a positive commutator estimate as in~\eqref{EqdSFredComm}, but one can work with standard Sobolev spaces as we are working away from the polar coordinate singularity at $r=0$. The upshot is:

\begin{prop}[QNMs of SdS]
\label{PropCSdS}
  A number $\omega\in\C$ is a QNM of $\Box_{g_\bhm}-\nu$ with an angular momentum $l\in\N_0$ resonant state if and only if the kernel of $P_{l,\bhm,\nu}(\omega)$ on $H^k(I_\bhm)$ is trivial for any $k$ with $k\geq 2$ and $k>\half-\min(\Im\omega/\kappa_+(\bhm),\Im\omega/\kappa_\cC(\bhm))$.
\end{prop}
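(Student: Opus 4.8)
The plan is to transcribe the proof of Proposition~\ref{PropdSFred} to the interval $I_\bhm=(\half r_+(\bhm),2 r_\cC(\bhm))$, with two modifications: there are now \emph{two} points at which the operator degenerates, namely the event horizon $r=r_+(\bhm)$ and the cosmological horizon $r=r_\cC(\bhm)$ (these are regular singular points of the radial ODE), while the polar-coordinate singularity at $r=0$ is absent. Because the coefficients of $P_{l,\bhm,\nu}(\omega)$ in~\eqref{EqCMP} are smooth (indeed analytic) across both horizons, I would work throughout with the standard Sobolev spaces $H^k(I_\bhm)$, so that the outgoing condition is encoded simply by smoothness across $r=r_+(\bhm)$ and $r=r_\cC(\bhm)$. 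Away from the two horizons the coefficient of $\pa_r^2$ is nonvanishing, and an elliptic estimate as in~\eqref{EqdSFred2} controls $u$ there; the whole difficulty is thus localized at the two horizons.

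At each horizon $r=r_\bullet(\bhm)$, $\bullet=+,\cC$, I would run the positive commutator (red-shift) argument of~\eqref{EqdSFredComm}--\eqref{EqdSFredComm3} verbatim, with a commutant $A=\phi^2\pa_r^{2 k-1}-(\pa_r^*)^{2 k-1}\phi^2$ cut off near that horizon. The principal symbol computation yields a threshold quantity analogous to~\eqref{EqdSFredCommPos} whose positivity at $r=r_\bullet(\bhm)$ reduces to the condition $k>\half-\Im\omega/\kappa_\bullet(\bhm)$; taking the less favorable of the two horizons produces the hypothesis $k>\half-\min(\Im\omega/\kappa_+(\bhm),\Im\omega/\kappa_\cC(\bhm))$. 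Here I must verify that the sign of $\mu_\bhm'$ at each horizon ($\mu_\bhm'(r_+)>0$ and $\mu_\bhm'(r_\cC)<0$) cooperates with the sign of the cutoff derivative $\phi'$, so that after orienting $\phi$ appropriately on each side both localized estimates carry the \emph{same} favorable sign, exactly as the term involving $\phi\phi'$ was arranged to do in~\eqref{EqdSFredComm1}. Summing the two horizon estimates and the interior elliptic estimate gives
\[
  \|u\|_{H^k(I_\bhm)}\lesssim\|P_{l,\bhm,\nu}(\omega)u\|_{H^{k-1}(I_\bhm)}+\|u\|_{H^{k-1}(I_\bhm)},
\]
and since $H^k(I_\bhm)\hookrightarrow H^{k-1}(I_\bhm)$ is compact, $P_{l,\bhm,\nu}(\omega)\colon\cX^k\to H^{k-1}(I_\bhm)$ is Fredholm, where $\cX^k:=\{u\in H^k(I_\bhm)\colon P_{l,\bhm,\nu}(\omega)u\in H^{k-1}(I_\bhm)\}$; the index-$0$ property follows as in the corresponding step of Proposition~\ref{PropdSFred}, using the nondegenerate second-order ODE structure on the interior.

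With this in hand I would invoke the analytic Fredholm theorem to continue $P_{l,\bhm,\nu}(\omega)^{-1}$ meromorphically and define the QNMs as its poles. For $\Im\omega\gg 1$, an energy estimate on the SdS spacetime shows that any kernel element $u$ gives rise to a solution $e^{-i\omega t_{*,\bhm}}Y_{lm}u$ that is smooth across both horizons, hence of at most exponential growth $e^{C t_*}$ in $t_*$; this forces $\ker P_{l,\bhm,\nu}(\omega)=\{0\}$, i.e.\ the kernel is trivial and $\omega$ is not a QNM, and the same argument applied to the adjoint handles the cokernel, so that the resolvent exists for large $\Im\omega$ and its poles are discrete. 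A regular-singular analysis at each horizon---where $|r-r_\bullet(\bhm)|\,P_{l,\bhm,\nu}(\omega)$ has indicial roots $0$ and $-\Im\omega/\kappa_\bullet(\bhm)$, as in~\eqref{EqdSFredRegSing}---shows that the admissibility threshold on $k$ excludes the non-smooth root, so that $\ker P_{l,\bhm,\nu}(\omega)$ on $H^k(I_\bhm)$ consists precisely of the outgoing angular-momentum-$l$ mode solutions and is in particular independent of $k$ in the stated range. Since the index is $0$, the kernel on $H^k(I_\bhm)$ is trivial exactly at the regular points of $P_{l,\bhm,\nu}(\omega)^{-1}$, that is, exactly away from the poles; combining this dichotomy with the identification of $\ker P_{l,\bhm,\nu}(\omega)$ just described yields the characterization of QNMs asserted in the Proposition.

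The main obstacle will be the pair of horizon red-shift estimates, and specifically arranging the commutant at \emph{both} horizons simultaneously so that the threshold emerges in the surface-gravity-weighted form $k>\half-\min(\Im\omega/\kappa_+(\bhm),\Im\omega/\kappa_\cC(\bhm))$ and so that the boundary terms involving $\phi\phi'$ carry the correct sign on each side; everything else is a routine adaptation of Proposition~\ref{PropdSFred} once the polar singularity is dropped and standard Sobolev spaces are used.
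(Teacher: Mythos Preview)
Your proposal is correct and follows precisely the approach the paper indicates: the paper does not give a detailed proof but simply says to analyze $P_{l,\bhm,\nu}(\omega)$ on $I_\bhm$ ``similarly to Proposition~\ref{PropdSFred}; one now has two horizons at which one performs a positive commutator estimate as in~\eqref{EqdSFredComm}, but one can work with standard Sobolev spaces as we are working away from the polar coordinate singularity at $r=0$.'' Your write-up spells out exactly this adaptation---two red-shift estimates with thresholds $k>\tfrac12-\Im\omega/\kappa_\bullet(\bhm)$, elliptic estimates in the interior, Fredholm plus index~$0$, invertibility for $\Im\omega\gg 1$, and the indicial-root argument identifying $H^k$-kernel elements with outgoing mode solutions---so there is nothing to add.
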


See~\eqref{EqdSFredCommPos} (with $\mu_\bhm$ instead of $\mu_0$) regarding the regularity requirement. We also note
\[
  \kappa_+(\bhm)=|\half\mu_\bhm'(r_+(\bhm))| = \frac{1}{4\bhm} + \cO(1),\quad
  \kappa_\cC(\bhm)=|\half\mu_\bhm'(r_\cC(\bhm))| = 1 + \cO(\bhm).
\]
Thus, for fixed $k\geq 2$ and $C\in\R$, and for $\omega\in\C$ with $\Im\omega\geq C$, and under the assumption $k>\half-C$ as in the de~Sitter case in Proposition~\ref{PropdSFred}, the regularity requirement in Proposition~\ref{PropCSdS} is satisfied for all such $\omega$ provided $\bhm>0$ is sufficiently small.

We remark that resonant states are automatically analytic if instead of $t_{*,\bhm}$ we work with an analytic time function, see \cite{PetersenVasyAnalytic,GalkowskiZworskiHypo}.

\subsection{Proof of the main theorem}
\label{SsCPf}

As a consequence of~\eqref{EqIResc}, in the regime $r\sim\bhm$ one should measure regularity of functions of $r$ by means of differentiation along $\pa_{\hat r}=\bhm\pa_r\sim r\pa_r$; for $r\sim 1$ on the other hand, one uses $\pa_r$ (as in the part of the proof of Proposition~\ref{PropdSFred} taking place away from $r=0$), which is also comparable to $r\pa_r$ there. Thus, measuring regularity with respect to $r\pa_r$ captures both regimes simultaneously. We shall thus estimate $P_{l,\bhm,\nu}(\omega)$ using $\bhm$-dependent function spaces
\begin{align*}
  &\Hb^{k,\alpha}(I_\bhm),\quad \|u\|_{\Hb^{k,\alpha}(I_\bhm)}^2 := \sum_{j\leq k} \|r^{-\alpha} (r\pa_r)^j u(r)\|_{L^2(I_\bhm;r^2\,\dd r)}^2.
\end{align*}
Thus, for fixed $\bhm>0$, we have $\Hb^{k,\alpha}(I_\bhm)\cong H^k(I_\bhm)$, but the norms of the two spaces are not uniformly equivalent as $\bhm\searrow 0$. We begin with a uniform symbolic estimate:

\begin{lemma}[Estimate of the highest derivative]
\label{LemmaCSymb}
  Let $k\geq 2$, $\alpha\in\R$, and let $K\subset\C$ be compact. Suppose $\min_{\omega\in K}\Im\omega>\half-k$. Then there exist $C>0$ and $\bhm_0>0$ so that for $\bhm<\bhm_0$ and all $\omega\in K$,
  \begin{equation}
  \label{EqCSymbEst}
    \|u\|_{\Hb^{k,\alpha}(I_\bhm)} \leq C\bigl(\|P_{l,\bhm,\nu}(\omega)u\|_{\Hb^{k-1,\alpha-2}(I_\bhm)} + \|u\|_{\Hb^{0,\alpha}(I_\bhm)}\bigr).
  \end{equation}
\end{lemma}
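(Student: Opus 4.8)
The plan is to adapt the symbolic (elliptic-plus-real-principal-type) estimate from the proof of Proposition~\ref{PropdSFred} to the $\bhm$-dependent setting, taking care that all constants are uniform as $\bhm\searrow 0$. The key structural observation is that $r^2 P_{l,\bhm,\nu}(\omega)$, when written in terms of the vector field $r\pa_r$, has coefficients that are uniformly bounded (with all $r\pa_r$-derivatives) on $I_\bhm$ after the rescaling $\hat r = r/\bhm$ near the event horizon: indeed $r^{-2}\pa_r r^2\mu_\bhm\pa_r = \mu_\bhm(r\pa_r)^2 r^{-2} + (\text{lower order in }r\pa_r)$, and one checks directly from~\eqref{EqCMP} together with the formulas for $\alpha_\bhm$, $\beta_\bhm$ that after multiplying by $r^2$ the operator takes the form $\mu_\bhm (r\pa_r)^2 + (\text{bdd})\,r\pa_r + (\text{bdd}) + i\omega r^2(\text{bdd})\,r\pa_r/\mu_\bhm \cdot(\dots)$ — more precisely I will first verify that $r^2 P_{l,\bhm,\nu}(\omega) = \mu_\bhm\bigl((r\pa_r)^2 + a_{1,\bhm}r\pa_r + a_{0,\bhm}\bigr) + i\omega\bigl(b_{1,\bhm}r\pa_r + b_{0,\bhm}\bigr) + \beta_\bhm r^2\omega^2$ with $a_{j,\bhm},b_{j,\bhm}$ and their $r\pa_r$-derivatives bounded uniformly in $\bhm$ and with $b_{1,\bhm} = 2 r^2\alpha_\bhm$. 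This uniformity is what replaces the ``fixed coefficients'' used in the de~Sitter proof.

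Next I would run the positive commutator argument of~\eqref{EqdSFredComm}--\eqref{EqdSFredComm3} simultaneously at both horizons. Away from $r=r_+(\bhm)$ and $r=r_\cC(\bhm)$ the coefficient of $(r\pa_r)^2$ — namely $\mu_\bhm$ — is nonvanishing, so a straightforward energy/elliptic estimate with $r\pa_r$-derivatives gives control of $\|u\|_{\Hb^{k,\alpha}}$ on any compact subinterval bounded away from the two horizons by $\|P_{l,\bhm,\nu}(\omega)u\|_{\Hb^{k-1,\alpha-2}}$ plus lower order; the point is that on $(2 r_+(\bhm), \half r_\cC(\bhm))$ one has uniform lower bounds on $\mu_\bhm$ after the appropriate rescaling, so the constant is $\bhm$-independent. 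Near each horizon $r_\bullet(\bhm)$, $\bullet = +,\cC$, I commute $P_{l,\bhm,\nu}(\omega)$ with an operator of the form $A = \phi^2(r\pa_r)^{2k-1} - ((r\pa_r)^*)^{2k-1}\phi^2$ (localized near the horizon) and compute the principal symbol of the commutator exactly as in~\eqref{EqdSFredComm1}: the main term carries the weight $\eta_\bullet(r) = (2k-1)|\half\mu_\bhm'(r_\bullet)\cdot(r_\bullet/r\pa_r\text{-normalization})| + 2(\Im\omega)\cdot(\dots)$, whose value at $r=r_\bullet(\bhm)$ is a positive multiple of $k - \half + \Im\omega/\kappa_\bullet(\bhm)$. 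Since $\kappa_+(\bhm) = \frac{1}{4\bhm} + \cO(1)\to\infty$ while $\kappa_\cC(\bhm)\to 1$, and since $\min_{\omega\in K}\Im\omega > \half - k$, for $\bhm$ small enough this quantity is positive at \emph{both} horizons and uniformly bounded below over $\omega\in K$; one also checks, as in~\eqref{EqdSFredComm1}, that the subprincipal/boundary term $(1-r^2)\phi\phi'$-type contribution has the favorable sign. Absorbing error terms using an interpolation inequality $\|u\|_{\Hb^{k-1,\alpha}}\le C_\eps\|u\|_{\Hb^{0,\alpha}} + \eps\|u\|_{\Hb^{k,\alpha}}$ (which holds with $\bhm$-uniform constant since it is an inequality purely about the vector field $r\pa_r$ on an interval, and the measure $r^2\,\dd r$ with weight $r^{-\alpha}$ is handled by Mellin transform) then yields~\eqref{EqCSymbEst}.

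The main obstacle I anticipate is the \emph{uniformity} of every constant as $\bhm\searrow 0$: the interval $I_\bhm$ degenerates (its left endpoint $\sim\bhm\to 0$), the surface gravity $\kappa_+(\bhm)$ blows up, and $\mu_\bhm$ has a root approaching $0$. The conceptual resolution is that the vector field $r\pa_r$ is scaling-invariant, so near the event horizon one should pass to $\hat r = r/\bhm$, under which $r\pa_r = \hat r\pa_{\hat r}$ and $P_{l,\bhm,\nu}$ limits (after multiplication by $\bhm^2$ or the appropriate power) to the zero-energy Schwarzschild model operator, which is a nondegenerate b-operator near $\hat r = 2$; the commutant $A$ and all the integrations by parts are then performed in the rescaled variable, and the blow-up of $\kappa_+(\bhm)$ becomes harmless because it only \emph{helps} the sign of the main commutator term (it appears with a $+$ sign in $\eta_+$). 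Concretely I will cover $I_\bhm$ by finitely many charts — a ``$\hat r\sim 1$'' chart near the event horizon, an ``$r\sim 1$'' chart near the cosmological horizon, and a ``bulk'' chart in between — prove the estimate with $\bhm$-uniform constants in each chart using the coefficient bounds established in the first step, and patch them together with a $\bhm$-independent partition of unity (in the variable $\log r$, say). The remaining bookkeeping — tracking the weight $r^{-\alpha}$ through the Mellin transform and verifying that the lower-order remainder $R$ in~\eqref{EqdSFredComm1} is controlled by $\eps\|u\|_{\Hb^{k,\alpha}} + C_\eps\|u\|_{\Hb^{k-1,\alpha}}$ uniformly — is routine once the uniform coefficient bounds and the sign of the principal commutator term are in hand.
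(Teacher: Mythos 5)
Your proposal is correct and follows essentially the same route as the paper: uniform bounds on the coefficients of $r^2 P_{l,\bhm,\nu}(\omega)$ written in $r\pa_r$, an elliptic estimate in the bulk where $\mu_\bhm$ has a uniform lower bound, and positive-commutator (red-shift) estimates at both horizons, with uniformity at the event horizon obtained by passing to $\hat r=r/\bhm$ so that the threshold there degenerates to $k>\half$ (since $\Im\omega/\kappa_+(\bhm)\to 0$) while the cosmological-horizon threshold $k>\half-\Im\omega/\kappa_\cC(\bhm)$ is handled by $\kappa_\cC(\bhm)\to 1$ and the hypothesis on $K$. One slip to fix: the normal form you propose to "verify," with $\mu_\bhm$ factored out of the entire frequency-independent second-order part, is false as stated --- the zeroth-order coefficient contains $-l(l+1)-r^2\nu$, which is not $\mu_\bhm$ times a uniformly bounded function near the horizons; what you actually need (and what your argument uses) is only $r^2 P_{l,\bhm,\nu}(\omega)=\mu_\bhm(r\pa_r)^2+a_\bhm(\omega,r)r\pa_r+b_\bhm(\omega,r)$ with $a_\bhm,b_\bhm$ and their $r\pa_r$-derivatives uniformly bounded, so the proof is unaffected once this is corrected.
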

\begin{proof}
  We first work away from the horizons, where we have elliptic estimates. Indeed,
  \begin{align*}
    r^2 P_{l,\bhm,\nu} &= \mu_\bhm(r\pa_r)^2 + \bigl((r\mu_\bhm)'+2 i\omega\alpha_\bhm r\bigr)r\pa_r + \bigl(-l(l+1) - r^2\nu + i\omega(r^2\alpha_\bhm)' + r^2\beta_\bhm\omega^2\bigr) \\
      &=: \mu_\bhm(r\pa_r)^2 + a_\bhm(\omega,r)r\pa_r + b_\bhm(\omega,r).
  \end{align*}
  where the coefficients obey uniform bounds
  \[
    |(r\pa_r)^j\mu_\bhm|,\ |(r\pa_r)^j a_\bhm|,\ |(r\pa_r)^j b_\bhm| \leq C_j\quad\text{on}\ I_\bhm\quad \forall\,j\in\N_0.
  \]
  Fixing $0<\delta\leq\frac{1}{10}$ and letting
  \[
    I_{\bhm,\bullet,\delta}:=\bigl((1-\delta)r_\bullet(\bhm),(1+\delta)r_\bullet(\bhm)\bigr),\quad \bullet=+,\cC,\qquad
    J_{\bhm,\delta} := I_\bhm\setminus(I_{\bhm,+,\delta}\cup I_{\bhm,\cC,\delta}),
  \]
  one moreover has a uniform lower bound $|\mu_\bhm(r)|\geq c_\delta>0$ on $J_{\bhm,\delta}$, thus $|(r\pa_r)^j\mu_\bhm^{-1}|\leq C_{\delta,j}$ on $J_{\bhm,\delta}$ for all $j\in\N_0$. This implies the uniform estimate
  \begin{equation}
  \label{EqCSymbEll}
    \|u\|_{\Hb^{k,\alpha}(J_{\bhm,\delta})} \leq C_\delta\bigl(\|P_{l,\bhm,\nu}(\omega)u\|_{\Hb^{k-2,\alpha-2}(J_{\bhm,\delta})} + \|u\|_{\Hb^{k-1,\alpha}(J_{\bhm,\delta})} \bigr).
  \end{equation}

  Next, we work near the cosmological horizon, more precisely on $I_{\bhm,\cC,2\delta}$; the operator $P_{l,\bhm,\nu}(\omega)$ degenerates at $r=r_\cC(\bhm)$. The positive commutator estimate based on~\eqref{EqdSFredComm} and the subsequent calculations apply here as well, with the same commutant $A$. Indeed, the calculations and estimates~\eqref{EqdSFredComm}--\eqref{EqdSFredComm3} apply for $\bhm=0$, and are based on the nondegeneracy of the first term in~\eqref{EqdSFredComm1}; the positivity guaranteeing this is an open condition in the coefficients of the differential operator. Thus, we obtain (the weight $\alpha$ being irrelevant here, as we are working in $r\gtrsim 1$) for any $\eps>0$ and $N\in\R$,
  \begin{equation}
  \label{EqCSymbCosm}
    \|u\|_{\Hb^{k,\alpha}(I_{\bhm,\cC,\delta})} \leq C_{\delta,\eps}\bigl(\|P_{l,\bhm,\nu}(\omega)u\|_{\Hb^{k-1,-N}(I_{\bhm,\cC,2\delta})} + \|u\|_{\Hb^{k-1,-N}(I_{\bhm,\cC,2\delta})}\bigr) + \eps\|u\|_{\Hb^{k,\alpha}(I_{\bhm,\cC,2\delta})}.
  \end{equation}

  Near the event horizon, i.e.\ near $I_{\bhm,+,\delta}$, we use a similar red-shift estimate. Passing to the coordinate $\hat r=r/\bhm$, so $r\pa_r=\hat r\pa_{\hat r}$, note that in $I_{\bhm,+,\delta}$ (where $\alpha_\bhm=-1$ and $\beta_\bhm=0$), we have
  \begin{equation}
  \label{EqCSymbSchwResc}
    \bhm^2 P_{l,\bhm,\nu}(\omega) = \hat r^{-2}\pa_{\hat r}\hat r^2 \mu_\bhm(\bhm\hat r)\pa_{\hat r} - \hat r^{-2}l(l+1) - \bhm^2\nu - 2 i\bhm\omega\hat r^{-1}\pa_{\hat r}\hat r,
  \end{equation}
  with $\mu_\bhm(\bhm\hat r)=1-\frac{2}{\hat r}-\bhm^2\hat r^2\to \hat\mu(\hat r):=1-\frac{2}{\hat r}$. Upon taking the limit $\bhm\to 0$ in the coefficients of~\eqref{EqCSymbSchwResc}, we obtain the operator
  \begin{equation}
  \label{EqCSymbSchw}
    \hat P := \hat r^{-2}\pa_{\hat r}\hat r^2\hat\mu\pa_{\hat r} - \hat r^{-2}l(l+1),\qquad \hat\mu=1-\frac{2}{\hat r}.
  \end{equation}
  The principal part of $\hat P$ degenerates at $\hat r=2$. There, one can again run a positive commutator argument with commutant $\phi(\hat r)^2\pa_{\hat r}^{2 k-1}-(\pa_{\hat r}^*)^{2 k-1}\phi(\hat r)^2$ and exploit the fact that $\half|\pa_{\hat r}\hat\mu(2)|=\frac14\neq 0$. Here $\phi(\hat r)$ is supported in $\hat I_{2\delta}:=[2(1-2\delta),2(1+2\delta)]$ and equal to $1$ near $\hat I_\delta:=[2(1-\delta),2(1+\delta)]$, and $\pa_{\hat r}^*=-\hat r^{-2}\pa_{\hat r}\hat r^2$ is the adjoint with respect to $L^2(\hat I_{2\delta};\hat r^2\,\dd\hat r)$. Provided that $(2 k-1)\hat r>0$ at $\hat r=2$ (cf.\ \eqref{EqdSFredCommPos}), so $k>\half$, and taking $\delta>0$ sufficiently small, one can estimate
  \[
    \|u\|_{H^k(\hat I_\delta)} \leq C_\eps\bigl(\|\hat P u\|_{H^{k-1}(\hat I_{2\delta})} + \|u\|_{H^{k-1}(\hat I_{2\delta})}\bigr) + \eps\|u\|_{H^k(\hat I_{2\delta})}
  \]
  for any $\eps>0$. Using the openness of the positivity used in the proof of this estimate, one obtains (using the same commutant) the estimate
  \begin{equation}
  \label{EqCSymbEv}
    \|u\|_{\Hb^{k,\alpha}(I_{\bhm,+,\delta})} \leq C_\eps\bigl(\|P_{l,\bhm,\nu}(\omega)u\|_{\Hb^{k-1,\alpha-2}(I_{\bhm,+,2\delta})} + \|u\|_{\Hb^{k-1,\alpha}(I_{\bhm,+,2\delta})}\bigr) + \eps\|u\|_{\Hb^{k,\alpha}(I_{\bhm,+,2\delta})};
  \end{equation}
  regarding the weights, we use that on $I_{\bhm,+,2\delta}$, powers of $\bhm$ and $r$ are interchangeable.

  Adding the estimates~\eqref{EqCSymbEll}, \eqref{EqCSymbCosm}, and \eqref{EqCSymbEv}, and taking $\eps>0$ sufficiently small, one obtains the estimate~\eqref{EqCSymbEst}. (The differential order of the final, error, term can be shifted to $0$ using an interpolation inequality.)
\end{proof}

The error term in~\eqref{EqCSymbEst}, i.e.\ the second term on the right, is acceptable from the perspective of Fredholm theory for any individual value of $\bhm>0$, as the inclusion $\Hb^{k,\alpha}(I_\bhm)\to\Hb^{k-1,\alpha}(I_\bhm)$ is compact for any fixed $\bhm>0$ (the weight being irrelevant then). However, the error term is not \emph{small} as $\bhm\searrow 0$. We improve it in two stages, corresponding to the two model problems introduced in~\S\ref{SI}.

We begin by proving the invertibility of the model problem on the unit mass Schwarzschild spacetime. For $\hat r_0>0$, define the function spaces
\begin{align*}
  \Hb^{0,\gamma}((\hat r_0,\infty]) &:= \bigl\{ u \colon \hat r^\gamma u\in L^2((\hat r_0,\infty);\hat r^2\,\dd\hat r) \bigr\}, \\
  \Hb^{k,\gamma}((\hat r_0,\infty]) &:= \bigl\{ u\in\Hb^{0,\gamma}((\hat r_0,\infty]) \colon (\hat r\pa_{\hat r})^j u\in\Hb^{0,\gamma}((\hat r_0,\infty])\ \forall\,j=0,\ldots,k \bigr\}.
\end{align*}
Define the operator $\hat P=\hat r^{-2}\pa_{\hat r}\hat r^2\hat\mu\pa_{\hat r}-\hat r^{-2}l(l+1)$, $\hat\mu=1-\frac{2}{\hat r}$, as in~\eqref{EqCSymbSchw}.

\begin{lemma}[Invertibility of the zero energy wave operator on Schwarzschild]
\label{LemmaCSchw}
  Fix $\hat r_0\in(0,2)$. Let $k\geq 2$ and $\gamma\in(-\tfrac32-l,-\half+l)$. Put
  \[
    \hat\cX_{\hat r_0}^{k,\gamma} := \bigl\{ u\in\Hb^{k,\gamma}((\hat r_0,\infty]) \colon \hat P u\in\Hb^{k-1,\gamma+2}((\hat r_0,\infty]) \bigr\}.
  \]
  Then $\hat P\colon\hat\cX_{\hat r_0}^{k,\gamma}\to\Hb^{k-1,\gamma+2}((\hat r_0,\infty])$ is invertible. Moreover, for any compact $J\subset(0,2)$, the operator norm of $\hat P^{-1}\colon\Hb^{k-1,\gamma+2}((\hat r_0,\infty])\to\hat\cX_{\hat r_0}^{k,\gamma}$ is uniformly bounded for $\hat r_0\in J$.
\end{lemma}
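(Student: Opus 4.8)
The plan is to follow the template of Proposition~\ref{PropdSFred}: first prove an a priori estimate for $\hat P$ on the weighted b-Sobolev spaces with constants \emph{uniform in} $\hat r_0\in J$, deduce that $\hat P$ has closed range and finite-dimensional kernel, then show that both kernel and cokernel are trivial; the uniform bound on $\hat P^{-1}$ will then follow automatically. Throughout, one decomposes $(\hat r_0,\infty]$ into a neighborhood of the left endpoint $\hat r_0$, the bulk region where $\hat\mu=1-\tfrac2{\hat r}$ stays bounded away from $0$, a neighborhood of the horizon $\hat r=2$, and a neighborhood of $\hat r=\infty$.

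\emph{A priori estimate.} Near $\hat r_0$ and in the bulk, $\hat P$ is a nondegenerate second order ODE; on $[\hat r_0,2-\delta]\cup[2+\delta,R]$ one has $|\hat\mu|\geq c_\delta>0$ and all coefficients, as well as $\hat\mu^{-1}$, obey bounds uniform for $\hat r_0\in J$, so the standard elliptic estimate applies with uniform constant (at $\hat r_0$ no boundary condition is imposed, since one simply solves the equation for $\pa_{\hat r}^2 u$). Near $\hat r=2$, where the principal part degenerates, one runs the red-shift positive commutator estimate from the proof of Lemma~\ref{LemmaCSymb} verbatim, with commutant $\phi(\hat r)^2\pa_{\hat r}^{2 k-1}-(\pa_{\hat r}^*)^{2 k-1}\phi(\hat r)^2$, using $\tfrac12|\pa_{\hat r}\hat\mu(2)|=\tfrac14>0$ and $k\geq 2>\tfrac12$; this involves no $\hat r_0$. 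Near $\hat r=\infty$, substituting $s=\hat r^{-1}$ gives $\hat r^2\hat P=(1-2 s)(s\pa_s)^2-s\pa_s-l(l+1)$, whose normal operator at $s=0$ has indicial roots $l+1$ and $-l$, corresponding to $\hat r^{-(l+1)}$ and $\hat r^l$; the hypothesis $\gamma\in(-\tfrac32-l,-\tfrac12+l)$ is exactly the condition that the weight of $\Hb^{0,\gamma}$ lies strictly between these roots, i.e.\ $\hat r^{-(l+1)}\in\Hb^{0,\gamma}$ but $\hat r^l\notin\Hb^{0,\gamma}$, so the Mellin transform argument used near $r=0$ in the proof of Proposition~\ref{PropdSFred} gives a b-estimate with $\hat r_0$-independent constant. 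Patching the four estimates yields, for some $\gamma'<\gamma$,
\[
  \|u\|_{\Hb^{k,\gamma}((\hat r_0,\infty])}\leq C\bigl(\|\hat P u\|_{\Hb^{k-1,\gamma+2}((\hat r_0,\infty])}+\|u\|_{\Hb^{k-1,\gamma'}((\hat r_0,\infty])}\bigr),
\]
with $C$ uniform for $\hat r_0\in J$; since $\Hb^{k,\gamma}\hookrightarrow\Hb^{k-1,\gamma'}$ is compact, $\hat P\colon\hat\cX_{\hat r_0}^{k,\gamma}\to\Hb^{k-1,\gamma+2}$ has closed range and finite-dimensional kernel. The same arguments also show that any $u\in\hat\cX_{\hat r_0}^{k,\gamma}$ with $\hat P u\in\CI$ is smooth across $\hat r=2$ (the $\log$ solution of the regular-singular equation there, with double indicial root $0$, is excluded by $k\geq 2$) and satisfies $u=\cO(\hat r^{-(l+1)})$ at $\hat r=\infty$, with matching bounds on its derivatives.

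\emph{Triviality of kernel and cokernel.} Let $u\in\hat\cX_{\hat r_0}^{k,\gamma}$ with $\hat P u=0$. Multiplying $\hat r^2\hat P u=\pa_{\hat r}(\hat r^2-2\hat r)\pa_{\hat r}u-l(l+1)u=0$ by $\bar u$, integrating over $(2,\infty)$ against $\dd\hat r$, and integrating by parts gives $\int_2^\infty\bigl((\hat r^2-2\hat r)|u'|^2+l(l+1)|u|^2\bigr)\dd\hat r=0$, the boundary term at $\hat r=2$ (where $\hat r^2-2\hat r=0$ and $u$ is smooth) and at $\hat r=\infty$ (using $u=\cO(\hat r^{-(l+1)})$) vanishing; since $\hat r^2-2\hat r>0$ on $(2,\infty)$, this forces $u\equiv 0$ there ($u'\equiv 0$ and then decay when $l=0$), whence $u\equiv 0$ everywhere by uniqueness of solutions across $\hat r=2$. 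Thus $\ker\hat P=0$. The same energy identity applied to the solution $\psi_1$ that is smooth at $\hat r=2$ shows $\psi_1$ cannot decay at $\hat r=\infty$, hence grows like $\hat r^l$ there. For the cokernel, a dual argument as in the proof of Proposition~\ref{PropdSFred} applies: a cokernel element $v$ lies in the dual of $\Hb^{k-1,\gamma+2}((\hat r_0,\infty])$, i.e.\ is a distribution supported in $[\hat r_0,\infty]$ with $\hat P^*v=\hat P v=0$ ($\hat P$ being formally self-adjoint with respect to $L^2(\hat r^2\,\dd\hat r)$); the nondegeneracy of $\hat P$ at $\hat r_0$ forces $v$ to vanish to second order at $\hat r_0^+$, hence $v\equiv 0$ on $(\hat r_0,2)$, and the regular-singular analysis at $\hat r=2$ then shows $v$ is a multiple of $\psi_1 H(\hat r-2)$ near and beyond $\hat r=2$; since $\psi_1\sim\hat r^l$ while the dual weight $-(\gamma+2)$ excludes $\hat r^l$ (as $\gamma<-\tfrac12+l$), we get $v\equiv 0$. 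Hence $\hat P$ is invertible. (Alternatively, one exhibits $\hat P^{-1}f$ directly by variation of parameters in $\psi_1$ and in the solution decaying at $\hat r=\infty$, which are linearly independent by the preceding remark.)

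\emph{Uniform bound and main difficulty.} Feeding the injectivity just proved back into the a priori estimate, a standard contradiction argument---now carried out along a sequence $\hat r_{0,n}\to\hat r_{0,\ast}\in J$, using Rellich on a fixed compact subinterval contained in all $(\hat r_{0,n},\infty)$, the tail bound $\|u\|_{\Hb^{k-1,\gamma'}([R,\infty])}\lesssim R^{-1}\|u\|_{\Hb^{k,\gamma}}$ near infinity, and the interior elliptic estimate near the moving endpoint---removes the error term with a constant $C$ uniform for $\hat r_0\in J$, so that $\|u\|_{\Hb^{k,\gamma}((\hat r_0,\infty])}\leq C\|\hat P u\|_{\Hb^{k-1,\gamma+2}((\hat r_0,\infty])}$. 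Since $\|\hat P^{-1}f\|_{\hat\cX_{\hat r_0}^{k,\gamma}}=\|\hat P^{-1}f\|_{\Hb^{k,\gamma}}+\|f\|_{\Hb^{k-1,\gamma+2}}\leq(C+1)\|f\|_{\Hb^{k-1,\gamma+2}}$, this is precisely the claimed uniform bound. I expect the main work to lie not in any single step---each closely parallels Proposition~\ref{PropdSFred}---but in (i) tracking $\hat r_0$-uniformity carefully through the absorption of the compact error term via the moving-endpoint argument, and (ii) the ``no accidental overlap'' fact that the horizon-regular solution grows at infinity (equivalently, that $\hat P$ admits no zero resonance), which the energy identity above settles cleanly but which is the one genuinely new ingredient compared with the de~Sitter case, where triviality of the kernel was instead obtained from a time-dependent energy estimate in the regime $\Im\omega\gg 1$.
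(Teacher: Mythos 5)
Your argument is correct and, for the core of the lemma, follows the same route as the paper: the a priori estimate is assembled from exactly the same three ingredients (uniform ellipticity away from $\hat r=2$ and $\hat r=\infty$, the red-shift commutator at $\hat r=2$ using $\tfrac12|\pa_{\hat r}\hat\mu(2)|=\tfrac14$, and the Mellin/normal-operator estimate at $\hat r=\infty$ with indicial roots $-l$, $l+1$ straddled by $\gamma+\tfrac32$), injectivity is proved by the same integration by parts as in \eqref{EqCSchwIBP}, and surjectivity by the same analysis of supported solutions of $\hat P^*v=0$ at the double indicial root of $(\hat r-2)\hat P$, with the $\delta$-distribution killing the logarithmic branch. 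You deviate in two places. First, having reduced the putative cokernel element to a multiple of $\psi_1 H(\hat r-2)$, you conclude by observing that injectivity forces $\psi_1\sim\hat r^l$ at infinity, which the dual weight $-\gamma-2$ excludes; the paper instead reruns the integration by parts using that $\hat\mu\pa_{\hat r}v\to 0$ at $\hat r=2$ once the log term is gone. Both are valid, and your version makes the role of the ``no zero resonance'' statement more transparent. Second, and more substantially, you obtain the uniformity in $\hat r_0\in J$ by a compactness/contradiction argument along sequences $\hat r_{0,n}\to\hat r_{0,*}$, which you correctly flag as requiring care (no loss of mass at the moving endpoint or at infinity, weak limits on varying domains). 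The paper sidesteps all of this with a three-line extension--restriction trick: fix $\hat r_0$ below $J$, extend $f\in\Hb^{k-1,\gamma+2}((\hat r_1,\infty])$ to $\tilde f$ on $(\hat r_0,\infty]$ with uniformly bounded norm, and note that $(\hat P_{\hat r_0}^{-1}\tilde f)|_{(\hat r_1,\infty)}$ must equal $\hat P_{\hat r_1}^{-1}f$ by injectivity on the smaller interval. If you want to keep your moving-domain argument, you should spell out why a fixed fraction of the $\Hb^{k-1,\gamma'}$ mass of the normalized sequence survives on a fixed compact subinterval (the tail at infinity is handled by the weight gain $\gamma'<\gamma$, and concentration on the shrinking collar at the endpoint by $H^k\hookrightarrow\cC^{k-1}$ on a fixed slightly larger interval); but the extension--restriction observation renders all of this unnecessary.
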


The weight of the function spaces here controls the rate of decay as $\hat r\to\infty$, in contrast to the function spaces in Proposition~\ref{PropdSFred} where the rate of decay is measured as $r\to 0$. This gives meaning to the matching of a (rescaled) large end of a cone ($\hat r\to\infty$ on Schwarzschild) into the small end of a cone ($r\to 0$ on dS), cf.\ Figure~\ref{FigIBlowup}.

\begin{proof}[Proof of Lemma~\usref{LemmaCSchw}]
  We give a proof of this standard fact for completeness here; see also \cite[Theorem~6.1]{HaefnerHintzVasyKerr}. Elliptic and positive commutator estimates give (for any fixed $\gamma$)
  \[
    \|u\|_{\Hb^{k,\gamma}} \leq C\bigl(\|\hat P u\|_{\Hb^{k-1,\gamma+2}} + \|u\|_{\Hb^{0,\gamma}}\bigr).
  \]
  Fix a cutoff $\chi\in\CI(\R)$ which is identically $1$ for $\hat r\geq 4$ and vanishes for $\hat r\leq 3$. Denote by
  \[
    \hat Q = \pa_{\hat r}\hat r^2\pa_{\hat r} - l(l+1) = (\hat r\pa_{\hat r})^2 + \hat r\pa_{\hat r} - l(l+1)
  \]
  the model operator of $\hat r^2\hat P$ at $\hat r=\infty$; that is, the difference $\hat r^2\hat P-\hat Q=\hat a_1\hat r\pa_{\hat r}+\hat a_0$ has decaying coefficients, namely $|(\hat r\pa_{\hat r})^j\hat a_i|\leq C_j\hat r^{-1}$ for $i=0,1$ and all $j\in\N_0$. Using the Mellin transform in $\hat r^{-1}$ and using that $\gamma+\tfrac32\in(-l,l+1)$ (with $-l$, $l+1$ being the indicial roots of $\hat Q$ at $\hat r^{-1}=0$), we have
  \[
    \| \chi u \|_{\Hb^{0,\gamma}} = \| \hat r^{\gamma+3/2}\chi u \|_{L^2((\hat r_0,\infty);\frac{\dd\hat r}{\hat r})} \leq C \| \hat r^{\gamma+3/2}\hat Q(\chi u) \|_{L^2((\hat r_0,\infty));\frac{\dd\hat r}{\hat r})} = C\|\hat r^{-2}\hat Q(\chi u)\|_{\Hb^{0,\gamma+2}}.
  \]
  Arguing similarly to the proof of Proposition~\ref{PropdSFred}, replacing $\hat r^{-2}\hat Q$ by $\hat P$ and estimating the error and commutator (with $\chi$) terms gives
  \[
    \|u\|_{\Hb^{k,\gamma}} \leq C\bigl(\|\hat P u\|_{\Hb^{k-1,\gamma+2}} + \|u\|_{\Hb^{0,\gamma-1}}\bigr).
  \]
  This implies that $\hat P\colon\hat\cX_{\hat r_0}^{k,\gamma}\to\Hb^{k-1,\gamma+2}$ has closed range and finite-dimensional kernel. Directly by ODE theory, its cokernel is finite-dimensional.

  Now if $u\in\Hb^{k,\gamma}$ satisfies $\hat P=0$, then automatically $u\in\Hb^{\infty,\gamma'}$ for all $\gamma'<-\half+l$. Sobolev embedding (after passing to the coordinate $z=-\log\hat r$, in which b-Sobolev spaces are exponentially weighted standard Sobolev spaces on $(-\infty,-\log\hat r_0)$) implies that $|u|,|\hat r\pa_{\hat r}u|\lesssim\hat r^{-l-1+\eps}$ as $\hat r\to\infty$ for any $\eps>0$. This is sufficient for the boundary term at infinity in the integration by parts
  \begin{equation}
  \label{EqCSchwIBP}
    0 = -\int_2^\infty \hat P u \bar u\,\hat r^2\,\dd\hat r = \int_2^\infty \bigl(\hat r^2\hat\mu|\pa_{\hat r}u|^2 + l(l+1)|u|^2\bigr)\,\hat r^2\,\dd\hat r
  \end{equation}
  to vanish; the boundary term at $\hat r=2$ vanishes since $u$ is smooth there and $\hat\mu(2)=0$. Since $l\geq 0$, we see that $u$ is constant in $\hat r\geq 2$, and thus must vanish there since $\lim_{\hat r\to\infty}u=0$. Turning to the other side of the black hole event horizon, $u$ is now a smooth solution of the regular-singular ODE $(\hat r-2)\hat P u=0$ in $\hat r<2$ and vanishes to infinite order at $\hat r=2$. Thus, $u$ vanishes identically. This shows the injectivity of $\hat P$.

  To prove surjectivity, consider $v\in\Hb^{-k+1,-\gamma-2}([\hat r_0,\infty])^\bullet$, meaning that $v$ is a distribution in $\Hb^{-k+1,-\gamma-2}((\half\hat r_0,\infty])$ with support in $[\hat r_0,\infty)$. We need to show that if $\hat P^*v=0$, then $v=0$. We certainly have $v\equiv 0$ in $\hat r<2$ where $\hat P$ has non-vanishing leading order term. Next, since $-\gamma-2\in(-\tfrac32-l,-\half+l)$ lies in the same interval as $\gamma$ itself, we have $v\in\Hb^{\infty,\gamma'}((\hat r_1,\infty])$ for all $\hat r_1>2$ and all $\gamma'<-\half+l$.
  
  In $\hat r>2$, we solve the regular-singular ODE $(\hat r-2)\hat P^*v=0$; since $(\hat r-2)\hat P^*=(\hat r-2)\hat P$ has a double indicial root at $\hat r=2$, we obtain $v=H(\hat r-2)(v_0+v_1\log\hat\mu+\tilde v)$, where $v_0,v_1\in\C$, while $\tilde v$ and its derivatives along any power of $(\hat r-2)\pa_{\hat r}$ are bounded by $(\hat r-2)^{1-\eps}$ for $\hat r\in(2,3)$ and any $\eps>0$. One computes $\hat P^* H(\hat r-2)=0$, whereas
  \[
    \hat P^*\bigl(v_1 H(\hat r-2)\log\hat\mu\bigr) = \frac{v_1}{2\bhm}\delta(r-2\bhm)
  \]
  cannot be cancelled by $\hat P^*\tilde v$ unless $v_1=0$. But $v_1=0$ implies that $\hat\mu\pa_{\hat r}v\to 0$ as $\hat r\searrow 2$. This implies that the integration by parts as in~\eqref{EqCSchwIBP} does not produce a boundary term at $\hat r=2$; the boundary term at infinity is absent as before. We conclude that $v\equiv 0$, completing the proof of invertibility for any fixed $\hat r_0\in(0,2)$.

  Denote the inverse of $\hat P$ acting on distributions on $(\hat r_0,\infty)$ by $\hat P_{\hat r_0}^{-1}$. Then for any other $\hat r_1\in(0,2)$ with $\hat r_1\geq\hat r_0$, one can construct $\hat P_{\hat r_1}^{-1}f$ as follows: extend $f\in\Hb^{k-1,\gamma+2}((\hat r_1,\infty])$ to $\tilde f\in\Hb^{k-1,\gamma+2}((\hat r_0,\infty])$ with $\|\tilde f\|_{\Hb^{k-1,\gamma+2}((\hat r_0,\infty])}\leq C\|f\|_{\Hb^{k-1,\gamma+2}((\hat r_1,\infty])}$, where the constant $C$ only depends on $k,\hat r_0,\hat r_1$. Then $\hat P_{\hat r_1}^{-1}f=(\hat P_{\hat r_0}^{-1}\tilde f)|_{(\hat r_1,\infty)}$. This implies the final part of the Lemma.
\end{proof}

Consider now the error term $\|u\|_{\Hb^{0,\alpha}(I_\bhm)}$ of Lemma~\ref{LemmaCSymb}; we improve it in the regime $r\sim\bhm$ by inverting the Schwarzschild model problem. Fix a cutoff $\chi=\chi(r)\in\CIc([0,\half))$ which is identically $1$ on $[0,\frac14]$; then under the change of variables $r=\bhm\hat r$,
\begin{align*}
  \|\chi(r)u\|_{\Hb^{0,\alpha}(I_\bhm)} &= \|r^{-\alpha}\chi u\|_{L^2([\frac12 r_+(\bhm),\frac12);r^2\,\dd r)} = \bhm^{\frac32-\alpha}\|\hat r^{-\alpha}\chi(\bhm\hat r) u\|_{L^2([\frac{r_+(\bhm)}{2\bhm},\frac{1}{2\bhm});\hat r^2\,\dd\hat r)} \\
    &= \bhm^{\frac32-\alpha}\|\chi u\|_{\Hb^{0,-\alpha}((\hat r_+(\bhm),\infty])},\qquad
  \hat r_+(\bhm):=\frac{r_+(\bhm)}{2\bhm},
\end{align*}
similarly for higher order Sobolev spaces. By Lemmas~\ref{LemmaCMRoots} and~\ref{LemmaCSchw}, we have
\[
  \|\chi u\|_{\Hb^{1,-\alpha}((\hat r_+(\bhm),\infty])} \lesssim \|\hat P(\chi u)\|_{\Hb^{0,-\alpha+2}((\hat r_+(\bhm),\infty])},\qquad \alpha\in(\half-l,\tfrac32+l).
\]
Thus, estimating $\bhm^2 P_{l,\bhm,\nu}(\omega)-\hat P$ using \eqref{EqCSymbSchwResc}--\eqref{EqCSymbSchw},
\begin{align*}
  \|\chi u\|_{\Hb^{1,\alpha}(I_\bhm)} &\lesssim \bhm^{-2}\|\hat P(\chi u)\|_{\Hb^{0,\alpha-2}(I_\bhm)} \\
    &\lesssim \|P_{l,\bhm,\nu}(\omega)(\chi u)\|_{\Hb^{0,\alpha-2}(I_\bhm)} + \|\chi u\|_{\Hb^{2,\alpha-1}(I_\bhm)} \\
    &\lesssim \|\chi P_{l,\bhm,\nu}(\omega)u\|_{\Hb^{0,\alpha-2}(I_\bhm)} + \|u\|_{\Hb^{2,\alpha-1}(I_\bhm)},
\end{align*}
where the implicit constants are independent of $\bhm$; in the last step, we use that the commutator $[P_{l,\bhm,\nu}(\omega),\chi]\colon\Hb^{1,-N}(I_\bhm)\to\Hb^{0,\alpha-2}(I_\bhm)$ is uniformly bounded as $\bhm\searrow 0$ for any $N$. Together with the trivial estimate $\|(1-\chi)u\|_{\Hb^{1,\alpha}(I_\bhm)}\lesssim\|u\|_{\Hb^{1,\alpha-1}(I_\bhm)}$, we thus obtain, altogether,
\begin{equation}
\label{EqCImprovedCorner}
  \|u\|_{\Hb^{k,\alpha}(I_\bhm)} \leq C\bigl(\|P_{l,\bhm,\nu}(\omega)u\|_{\Hb^{k-1,\alpha-2}(I_\bhm)} + \|u\|_{\Hb^{2,\alpha-1}(I_\bhm)}\bigr),\qquad
  \alpha\in(\half-l,\tfrac32+l).
\end{equation}
The error term now employs a weak norm both in the sense of derivatives (provided $k\geq 3$) and as $r\sim\bhm\searrow 0$. It is, however, not small as $\bhm\to 0$ when $r\gtrsim 1$, and any further improvements necessarily require the invertibility of the de~Sitter model problem. We are now prepared to prove the main theorem:

\begin{proof}[Proof of Theorem~\usref{ThmIMain}]
  \pfstep{Absence of SdS QNMs near $\omega_0$ which is not a dS QNM.} Suppose first that $\omega\notin\QNM_{\dS,l}(\Lambda,\nu)$. Then Proposition~\ref{PropdSFred} implies the estimate
  \[
    \| u \|_{\Hb^{k,\alpha}([0,2))} \leq C \| P_{l,0,\nu}(\omega)u \|_{\Hb^{k-1,\alpha-2}([0,2))}
  \]
  for fixed $k>\half-\Im\omega$ and $\alpha\in(\half-l,\tfrac32+l)$. The upper bound $2$ in the domain $[0,2)$ can be replaced by any number $R>1$, and the constant $C$ is locally uniform for $R\in(1,\infty)$. We `glue' this estimate into~\eqref{EqCImprovedCorner} by localizing to a neighborhood of the region where $P_{l,\bhm,\nu}(\omega)$ is well approximated by $P_{l,0,\nu}(\omega)$, namely the region $\bhm/r\ll 1$. Thus, fix a cutoff $\chi\in\CIc([0,\tfrac{1}{10}))$, identically $1$ on $[0,\tfrac{1}{20}]$. Fix $\alpha'\in[\alpha-1,\alpha)$ so that $\alpha'\in(\half-l,\tfrac32+l)$ still. Then
  \begin{align*}
    \|\chi(\bhm/r)u\|_{\Hb^{2,\alpha-1}(I_\bhm)} &\lesssim \|\chi(\bhm/r)u\|_{\Hb^{2,\alpha'}([0,2 r_\cC(\bhm)))} \\
      &\lesssim\|P_{l,0,\nu}(\omega)(\chi(\bhm/r)u)\|_{\Hb^{1,\alpha'-2}([0,2 r_\cC(\bhm))} \\
      &\leq \| P_{l,\bhm,\nu}(\omega)(\chi(\bhm/r)u)\|_{\Hb^{1,\alpha'-2}(I_\bhm)} \\
      &\quad\qquad + \|(P_{l,\bhm,\nu}(\omega)-P_{l,0,\nu}(\omega))(\chi(\bhm/r)u)\|_{\Hb^{1,\alpha'-2}(I_\bhm)}
  \end{align*}
  Now, an inspection of~\eqref{EqCMP} shows that
  \[
    (P_{l,\bhm,\nu}(\omega)-P_{l,0,\nu}(\omega))\chi(\bhm/r)(\bhm/r)^{-1} \colon \Hb^{3,\alpha'}(I_\bhm) \to \Hb^{1,\alpha'-2}(I_\bhm)
  \]
  is uniformly bounded; this is the estimate version of the statement that $P_{l,\bhm,\nu}(\omega)$ has $P_{l,0,\nu}(\omega)$ as its model operator at $\bhm/r=0$. Moreover, we have uniform estimates
  \[
    \|(1-\chi(\bhm/r))u\|_{\Hb^{2,\alpha-1}(I_\bhm)},\ \|[P_{l,\bhm,\nu}(\omega),\chi(\bhm/r)]u\|_{\Hb^{1,\alpha'-2}(I_\bhm)} \lesssim \| \tfrac{\bhm}{r}u\|_{\Hb^{2,\alpha'}(I_\bhm)}.
  \]
  Altogether, we thus have
  \[
    \|u\|_{\Hb^{2,\alpha-1}(I_\bhm)} \lesssim \|P_{l,\bhm,\nu}(\omega)u\|_{\Hb^{1,\alpha'-2}(I_\bhm)} + \|\tfrac{\bhm}{r}u\|_{\Hb^{3,\alpha'}(I_\bhm)}.
  \]
  Plugging this into~\eqref{EqCImprovedCorner} and using $\frac{\bhm}{r}\lesssim(\frac{\bhm}{r})^{\alpha-\alpha'}$, we obtain
  \begin{align*}
    \|u\|_{\Hb^{k,\alpha}(I_\bhm)} &\leq C\bigl(\|P_{l,\bhm,\nu}(\omega)u\|_{\Hb^{k-1,\alpha-2}(I_\bhm)} + \|r^{-\alpha'+\alpha}(\tfrac{\bhm}{r})^{\alpha-\alpha'}u\|_{\Hb^{3,\alpha}(I_\bhm)}\bigr) \\
    &= C\bigl(\|P_{l,\bhm,\nu}(\omega)u\|_{\Hb^{k-1,\alpha-2}(I_\bhm)} + \bhm^{\alpha-\alpha'}\|u\|_{\Hb^{3,\alpha}(I_\bhm)}\bigr).
  \end{align*}
  For fixed $k\geq 3$, and for $\bhm\in(0,\bhm_0)$ with $\bhm_0>0$ sufficiently small, the error term is now \emph{small} and can thus be absorbed into the left hand side, giving
  \[
    \|u\|_{\Hb^{k,\alpha}(I_\bhm)} \leq C\|P_{l,\bhm,\nu}(\omega)u\|_{\Hb^{k-1,\alpha-2}(I_\bhm)}.
  \]
  For any \emph{fixed} $\bhm\in(0,\bhm_0)$, this is equivalent to $\|u\|_{H^k(I_\bhm)}\lesssim\|P_{l,\bhm,\nu}(\omega)u\|_{H^{k-1}(I_\bhm)}$, so $P_{l,\bhm,\nu}(\omega)$ has trivial kernel on $H^k(I_\bhm)$. As discussed in~\S\ref{SsCSdS}, this implies that $\omega\notin\QNM_l(\bhm,\Lambda,\nu)$.

  All estimates used here are locally uniform in all parameters; thus, if $\omega_0\notin\QNM_{\dS,l}(\Lambda,\nu_0)$, then there exists $\eps>0$ so that for all $\omega,\nu,\bhm$ with $|\omega-\omega_0|<\eps$, $|\nu-\nu_0|<\eps$, and $\bhm\in(0,\eps)$, one has $\omega\notin\QNM_l(\bhm,\Lambda,\nu)$. This proves (a slight strengthening of) one half of Theorem~\ref{ThmIMain}: any limiting point as $\bhm\searrow 0$ of a sequence of SdS QNMs for a Klein--Gordon field, with mass $\nu$ converging to $\nu_0$, must be a dS QNM for a Klein--Gordon field with mass $\nu_0$.

  \pfstep{Existence of QNMs near a dS QNM.} We next prove the converse, namely that every dS QNM $\omega_0\in\C$ is the limit of a sequence $\omega_\bhm$ of SdS QNMs as $\bhm\searrow 0$. Let
  \[
    K_0=\ker_{\Hb^{k,\alpha}([0,2))}P_{l,0,\nu}(\omega_0),\quad
    K_0^*=\ker_{\Hb^{-k+1,-\alpha-2}([0,2))^\bullet}P_{l,0,\nu}(\omega_0)^*
  \]
  denote the spaces of resonant and dual resonant states, and put $d=\dim K_0=\dim K_0^*$ (thus, $d=1$ by Proposition~\ref{PropdS}, but this is not important). We shall set up a Grushin problem, see \cite[Appendix~C.1]{DyatlovZworskiBook}. We can choose functions
  \[
    u_j^\flat \in \CIc((0,2)),\quad
    u_j^\sharp \in \CIc((0,2)),
  \]
  such that the operators
  \begin{alignat*}{4}
    R_+ &\colon &\Hb^{k,\alpha}([0,2)) \ni&\ u &&\mapsto (\la u,u_j^\flat\ra)_{j=1,\ldots,d} &&\in \C^d, \\
    R_- &\colon &\C^d \ni &\ (w_j)_{j=1,\ldots,d} &&\mapsto \sum_{j=1}^d w_j u_j^\sharp &&\in \CIc((0,2))
  \end{alignat*}
  (with $\la\cdot,\cdot\ra$ denoting the $L^2((0,2);r^2\,\dd r)$ inner product) have the following properties: $R_+$ is injective on $K_0$, and $\ran R_-$ is a complementary subspace to $\ran_{\cX^{k,\alpha}} P_{l,0,\nu}(\omega)\subset\Hb^{k-1,\alpha-2}$. (The latter is equivalent to the nondegeneracy of the matrix $(\la u_j^\sharp,u_k^*\ra)$ where $u_1^*,\ldots,u_d^*$ is a basis of $K_0^*$, which can thus indeed be easily arranged.) Define the operator
  \[
    \wt P_{l,\bhm,\nu}(\omega) = \begin{pmatrix} P_{l,\bhm,\nu}(\omega) & R_- \\ R_+ & 0 \end{pmatrix} \colon \cX_\bhm^{k,\alpha} \oplus \C^d \to \Hb^{k-1,\alpha-2}(I_\bhm) \oplus \C^d.
  \]
  By construction, its model problem at $\bhm=0$, $r>0$ is invertible for $\omega=\omega_0$, and the Schwarzschild model problem is unaffected by the finite-dimensional modification since $u_j^\flat,u_j^\sharp$ are supported away from $r=0$. Thus, a (notational) adaptation of the previous arguments, culminating in the estimate
  \begin{equation}
  \label{EqCtildeP}
    \|(u,v)\|_{\Hb^{k,\alpha}(I_\bhm)\oplus\C^d} \leq C\|\wt P_{l,\bhm,\nu}(\omega)(u,v)\|_{\Hb^{k-1,\alpha-2}(I_\bhm)\oplus\C^d},
  \end{equation}
  implies that $\wt P_{l,\bhm,\nu}(\omega)$ is invertible for $|\omega-\omega_0|<\eps$, $\bhm\in(0,\eps)$, for $\eps>0$ sufficiently small; write
  \[
    \wt P_{l,\bhm,\nu}(\omega)^{-1} = \begin{pmatrix} A_\bhm(\omega) & B_\bhm(\omega) \\ C_\bhm(\omega) & D_\bhm(\omega) \end{pmatrix}.
  \]
  By the Schur complement formula then, the invertibility of $P_{l,\bhm,\nu}(\omega)$ is equivalent to that of the $d\times d$ matrix $D_\bhm(\omega)$. We claim that $D_\bhm(\omega)$ depends continuously on $\bhm\in[0,\eps)$, with values in $d\times d$ matrices whose entries are holomorphic for $|\omega-\omega_0|<\eps$. Granted this, then since $\det D_\bhm(\omega)$ has an isolated zero at $(\bhm,\omega)=(0,\omega_0)$, Rouch\'e's theorem implies the existence of $\omega_\bhm$, depending continuously on $\bhm$, with $\lim_{\bhm\searrow 0}\omega_\bhm\to\omega_0$ and $\det D_\bhm(\omega_\bhm)=0$, so $\omega_\bhm\in\QNM_l(\bhm,\Lambda,\nu)$.

  It remains to prove that $D_\bhm(\omega)\to D_0(\omega)$ uniformly for $|\omega-\omega_0|\leq\eps'<\eps$; since $D_0$ is analytic (hence continuous) in $\omega$, it is enough to establish pointwise convergence. Thus, let $w\in\C^d$ and put
  \[
    \bigl(u_\bhm(\omega),v_\bhm(\omega)\bigr) := \wt P_{l,\bhm,\nu}(\omega)^{-1}(0,w);
  \]
  we need to show that
  \begin{equation}
  \label{EqCSubseqConv}
    v_\bhm(\omega)=D_\bhm(\omega)w\to v_0(\omega).
  \end{equation}
  By~\eqref{EqCtildeP}, $u_\bhm(\omega)$ is uniformly bounded in $\Hb^{k,\alpha}(I_\bhm)$, and $v_\bhm(\omega)$ is uniformly bounded in $\C^d$. Extend $u_\bhm(\omega)$ to $\tilde u_\bhm(\omega)\in\Hb^{k,\alpha}([0,3))$ with $\|\tilde u_\bhm(\omega)\|_{\Hb^{k,\alpha}([0,3))}\leq C\|u_\bhm(\omega)\|_{\Hb^{k,\alpha}(I_\bhm)}$. Consider any weak (thus distributional) subsequential limit $\tilde u_0(\omega)\in\Hb^{k,\alpha}([0,3))$ of $\tilde u_{\bhm_j}(\omega)$ where $\bhm_j\to 0$, and select the subsequence so that also $v_\bhm(\omega)\to v_0$; then
  \[
    \tilde P_{l,0,\nu}(\omega)\bigl(\tilde u_0(\omega)|_{[0,2)},v_0\bigr)=(0,w).
  \]
  But since $\tilde P_{l,0,\nu}(\omega)$ is invertible, this means that necessarily $v_0=\pi_2(\wt P_{l,0,\nu}(\omega)^{-1}(0,w))$, where $\pi_2$ denotes projection to the second summand. This establishes~\eqref{EqCSubseqConv}.

  \pfstep{Smooth convergence of mode solutions.} We now use $d=1$ and denote by $\omega_\bhm\in\QNM_l(\bhm,\Lambda,\nu)$ the QNM tending to $\omega_0$ as $\bhm\searrow 0$. Note then that $P_{l,\bhm,\nu}(\omega_\bhm)u_\bhm=0$ if and only if $\wt P_{l,\bhm,\nu}(\omega_\bhm)(u_\bhm,0)=(0,R_+ u_\bhm)$; since $\wt P_{l,\bhm,\nu}(\omega_\bhm)$ is injective for small $\bhm$, we can thus \emph{construct} $u_\bhm$ as
  \[
    u_\bhm = \pi_1\bigl(\wt P_{l,\bhm,\nu}(\omega_\bhm)^{-1}(0,1)\bigr),
  \]
  where $\pi_1$ denotes projection to the first summand. The discussion after~\eqref{EqCSubseqConv} implies that for any fixed compact interval $I\Subset(0,2]$, we have weak convergence $u_\bhm\weakto u_0$ in $H^k(I)$, thus strong convergence in $H^{k-\eps}(I)$ and therefore in $\cC^{k-1}(I)$ by Sobolev embedding. Since $k$ is arbitrary, this proves the smooth convergence of $u_\bhm|_I$ to $u_0|_I$. The proof of Theorem~\ref{ThmIMain} is complete.
\end{proof}

\section{Numerics}
\label{SN}

We wish to find quasinormal modes and mode solutions for the Klein-Gordon operator $\Box_{g_\bhm}-\nu$ on a Schwarzschild--de~Sitter spacetime numerically. Restricting to time frequency $\omega\in\C$ and decomposing into spherical harmonics, we obtain an ODE of the form
\begin{equation}
\label{EqNOp}
    a_0(x,\omega)u(x)+a_1(x,\omega)u'(x)+a_2(x,\omega)u''(x)+\ldots+a_n(x,\omega)u^{(n)}(x)=0,
\end{equation}
where the coefficients $a_k(x,\omega)$ are polynomials in $\omega$. (See equation~\eqref{EqCMP}, which is indeed of this form with $n=2$ and $x=r$.) For now, we shall work with the general form~\eqref{EqNOp}. We seek to find values of $\omega$ for which the equation~\eqref{EqNOp} has smooth solutions over some given interval $[a,b]$.

We shall modify Jansen's approach \cite{JansenMathematica} so as to obtain higher precision for small black holes. We also extend \cite{JansenMathematica} to numerically calculate the dual resonant states.

\subsection{Description of the method}
\label{SsNM}

Following Jansen's approach, we select $N+1$ grid points $x_0,\ldots,x_N$ on the interval $[a,b]$, where $N\in\N$ is the degree of our approximation. We define a basis of Lagrange polynomials
\[
  C_k(x)=\prod_{j\neq k}\frac{x-x_j}{x_k-x_j}.
\]
Given any continuous function $f(x)$, its Lagrange interpolant is $f_N(x)=\sum_{k=0}^N f(x_k) C_k(x)$; we can view the coefficients $f(x_k)$ as a vector $\mathbf{f}_N$. We discretize multiplication by a function $a(x)$ as the diagonal matrix with entries $a(x_k)$; differentiation in $x$ is discretized as the matrix with $(j,k)$ entry $C'_k(x_j)$. The discretized differential operator can now be expressed as a matrix
\begin{equation}
\label{DiscreteOp}
    \mathbf{O}_N(\omega):=\mathbf{M}_0+\omega\mathbf{M}_1+\omega^2\mathbf{M}_2+\ldots+\omega^p\mathbf{M}_p,
\end{equation}
where $p\in\N$ is the maximal polynomial degree of the coefficients $a_k(x,\omega)$ in $\omega$. QNMs are found numerically by finding those values of $\omega$ for which $\mathbf{O}_N(\omega)$ is singular. A nonzero vector in the null space, which we will call $\mathbf{u}_N=(u_0,\ldots,u_n)$, gives an approximation $\sum_k u_k C_k(x)$ to a resonant state. 

The task of finding such $\omega\in\C$ is conveniently expressed as a generalized eigenvalue problem: defining the matrices
\[
    \mathbf{A}=
    \begin{pmatrix}
        \mathbf{M}_0 & \mathbf{M}_1 & \mathbf{M}_2 & \ldots & \mathbf{M}_{p-1}\\
        0 & \mathbbm{1} & 0 & \ldots & 0\\
        0 & 0 & \mathbbm{1} & \ldots & 0\\
        \vdots & \vdots & \vdots & \ddots & \vdots\\
        0 & 0 & 0 & \ldots & \mathbbm{1}
    \end{pmatrix},
    \qquad
    \mathbf{B}=
    \begin{pmatrix}
        0 & 0 & 0 & \ldots & 0 &\mathbf{M}_p\\
        -\mathbbm{1} & 0 & 0 & \ldots & 0 & 0\\
        0 & -\mathbbm{1} & 0 & \ldots & 0 & 0\\
        \vdots & \vdots & \vdots & \ddots & \vdots & \vdots\\
        0 & 0 & 0 & \ldots & -\mathbbm{1} & 0
    \end{pmatrix},
\]
solving $\det\mathbf{O}_N(\omega)=0$ is equivalent to find the eigenvalues $\omega$ of the generalized eigenvalue problem $\mathbf{A}\mathbf{u}=\omega\mathbf{B}\mathbf{u}$.

\subsubsection{Extension to find dual states}
\label{dualStateProcedure}

In order to find dual states, we compute the nullspace of the conjugate transpose $\mathbf{O}_N(\omega)^*$ instead; denote a nonzero element of $\ker\mathbf{O}_N(\omega)^*$ by $\mathbf{v}_N$. We want to interpret $\mathbf{v}_N$ as a distribution, or rather an approximation inside the finite-dimensional space $\cF_N:=\mathspan\{C_0,\ldots,C_N\}$, where we identify $\cF_N$ with its dual space by choosing an $L^2$ inner product $\la-,-\ra$ for functions on $[a,b]$. A dual basis $C_0^*,\ldots,C_N^*\in\cF_N$ is then defined by the requirement
\[
  \la C_i^*,C_j\ra = \delta_{i j}.
\]
If we write this dual basis in terms of the basis functions as $C^*_i=\sum_k T_{k i}C_k$, we have
\[
  \sum_k T_{k i} \braket{C_k,C_j}=\delta_{i j}
\]
Defining $U_{k j}=\braket{C_k,C_j}$, and $\mathbf{T}=(T_{m m'})$, $\mathbf{U}=(U_{m m'})$, this means
\[
  \mathbf{T}^T \mathbf{U}=\mathbf{I}\qquad\implies\qquad\mathbf{T}=(\mathbf{U}^T)^{-1}.
\]
Thus, we can use the matrix $\mathbf{U}$ to convert the coefficients $\mathbf{v}_N$ with respect to the dual basis into coefficients for the original basis. This gives
\[
  \mathbf{u}^*_N=(\mathbf{U}^T)^{-1}\mathbf{v}_N.
\]
Writing $\mathbf{u}^*_N=(u_0^*,\ldots,u_N^*)$, the function $\sum_k u_k^* C_k(x)$ is then our approximation to the dual state.

When studying the convergence of dual states on SdS in the small mass limit $\bhm\searrow 0$, one must keep in mind that the convergence necessarily takes place in a rather weak topology; we shall test for convergence by integrating against polynomial test functions. Given a test function $\phi$, discretized as the vector $(\phi_k)=(\phi(x_k))$, the $L^2$-inner product of the elements of $\cF_N$ with coefficients $\mathbf{u}_N^*=(\mathbf{U}^T)^{-1}\mathbf{v}_N$ (where $\mathbf{v}_N=(v_0,\ldots,v_N)$) and $(\phi_k)$ is
\[
  \sum_{i j k} \la (\mathbf{U}^{-1})_{j i}v_j  C_i, \phi_k C_k \ra = \sum_{i j k} (\mathbf{U}^{-1})_{j i}v_j\ol{\phi_k}\mathbf{U}_{i k} = \sum_k v_k \ol{\phi_k}.
\]
Thus, we do not actually need to compute the matrix $\mathbf{U}$ and its inverse in order to perform our convergence tests for dual states, which is numerically advantageous.

\subsection{Choice of coordinates}

The outgoing boundary conditions \eqref{EqIOutgoing} can be absorbed into the definition of our time coordinate if we choose $t'_\bhm=t+h_\bhm(r)$ where
\[
  h_\bhm=\frac{1}{2\kappa_+(\bhm)}\log(r-r_+(\bhm))+\frac{1}{2\kappa_\cC(\bhm)}\log(r-r_\cC(\bhm)).
\]
In these coordinates, curves of constant time are smooth across the event and cosmological horizons. (Moreover, this explicit definition is more convenient for numerical computations than the definition~\eqref{EqCMTimeFn}, \eqref{EqCMTimeFnMod} using smooth cutoff functions.) Thus, the boundary condition for mode solutions becomes smoothness across the horizons. We thus consider the action of $\Box_{g_\bhm}-\mu$ on separated functions $e^{-i\omega t'_\bhm}u(r)Y_{l m}(\theta,\phi)$, and the QNMs and mode solutions can be found using the numerical method described above.

As illustrated in Figure~\ref{FigIBlowup}, there are two limiting regimes as we decrease the black hole mass. One corresponds to de Sitter space and occurs at scales of $r\sim 1$ and one corresponding to Schwarzschild space at scales of $r\sim\bhm$, or $\hat{r}\sim 1$. Hence for small $\bhm$, the latter regime is poorly resolved if we use an equally spaced or Chebyshev grid on $[r_+(\bhm),r_\cC(\bhm)]$. To resolve this limit more accurately, we want to increase the number of grid points near the event horizon so that the spacing between them is less of order $\bhm$.

Arbitrary choices of grid points are subject to Runge's phenomenon and lead to poor convergence when the number of grid points is increased. To achieve good resolution of both limiting regimes while also having good convergence, we transform our radial coordinate to $x=\log r$. The event and cosmological horizons now become $x'_+(\bhm)\approx-\log(2\bhm)$ and $x'_\cC(\bhm)\approx 0$, thus we stretch out the region close to the event horizon. (Noting that $\pa_x=r\pa_r$, this is compatible also with the theoretical considerations in~\S\ref{SsCPf}.). We can then use Chebyshev nodes on the interval $[x'_+(\bhm),x'_\cC(\bhm)]$ for the Klein-Gordon equation in these coordinates to mitigate Runge's phenomenon and to resolve modes for small black hole masses.

\subsection{Results and conjectures}
\label{SsNR}

In Figure~\ref{FigIQNM}, we plot the QNMs for a SdS black hole for the massless scalar field for $\Lambda=3$ and $\bhm=0.150,0.075,0.000$. From this figure, we see QNMs along the imaginary axis which converge to QNMs of the limiting pure de Sitter spacetime. We also see QNMs away from the imaginary axis which move away from the origin and conjecturally are approximately equal to $\bhm^{-1}$ times modes of the mass $1$ Schwarzschild spacetime (see the discussion around~\eqref{EqIOmegaResc}). Further, tracking individual QNMs as $\bhm\searrow 0$, we observe the convergence of the mode solutions to the de Sitter mode solutions. This confirms Theorem~\ref{ThmIMain}.

\begin{prob}[Convergence of QNMs in a half space]
\label{PropNRConv}
  Show that Theorem~\usref{ThmIMain} holds without restriction to fixed angular momenta; prove convergence of QNMs in any fixed half space $\Im\omega>-C$.
\end{prob}

\begin{figure}[!ht]
    \centering
    \subfloat[$l=0$]{\includegraphics[width=0.5\textwidth]{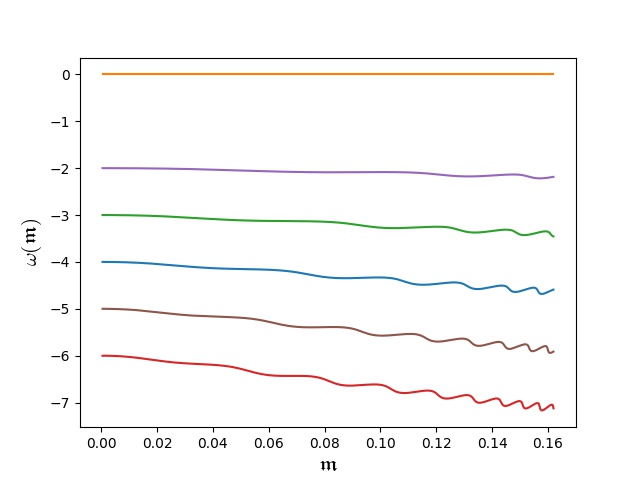}}
     \subfloat[$l=1$]{\includegraphics[width=0.5\textwidth]{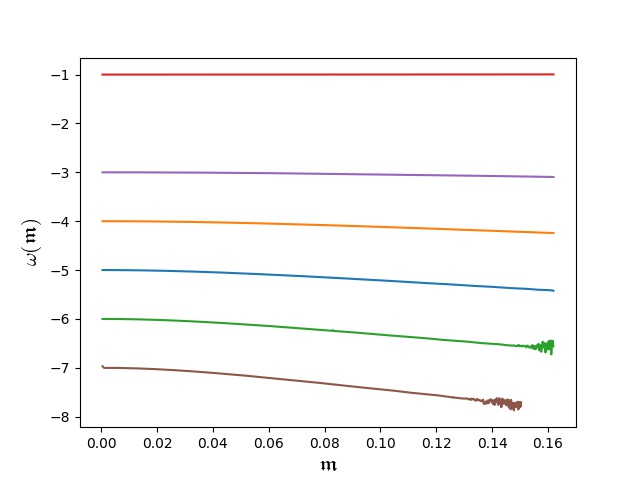}}
    \hfill
    \subfloat[$l=0$]{\includegraphics[width=0.5\textwidth]{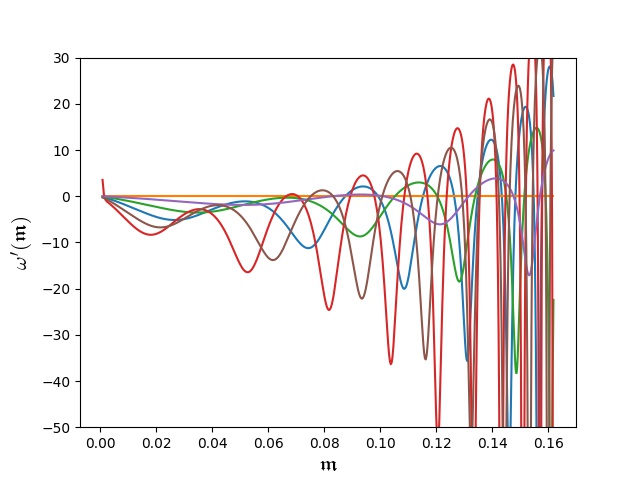}}
    \subfloat[$l=1$]{\includegraphics[width=0.5\textwidth]{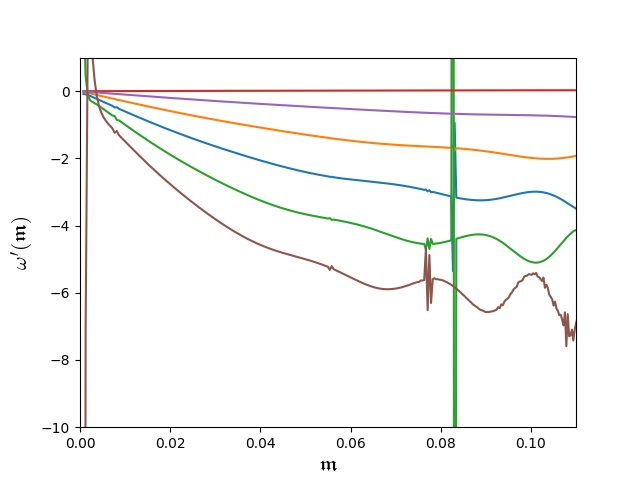}}
    \caption{Imaginary part of massless scalar field QNMs in SdS spacetime ($\Lambda=3$, $\nu=0$) with angular momentum (A) $l=0$ and (B) $l=1$ as a function of black hole mass. Derivative of QNMs of a massless scalar field in SdS spacetime with angular momentum (C) $l=0$ and (D) $l=1$ as a function of black hole mass.}
    \label{SdSImModes}
\end{figure}

Figure~\ref{SdSImModes} shows the imaginary parts of convergent (as $\bhm\to 0$) QNMs $\omega(\bhm)$ for the massless scalar field ($\nu=0$) as a function of $\bhm$, with $\Lambda=3$. We see convergence to the values of the QNMs in dS space. We also compute the numerical derivative $\omega'(\bhm)$ of the QNMs. This derivative appears to vanish for most modes as $\bhm\rightarrow 0$; we believe that the derivatives which do not vanish are numerical artifacts. This leads us to conjecture that to first order in $\bhm$, QNMs of massless scalar fields for small black hole mass SdS spacetime are the same as those in dS spacetime. In \S\ref{perturbationCalc} we perform a naive perturbation theory calculation that supports this conjecture.

\begin{figure}[!ht]
    \centering
    \subfloat[]{\includegraphics[width=0.5\textwidth]{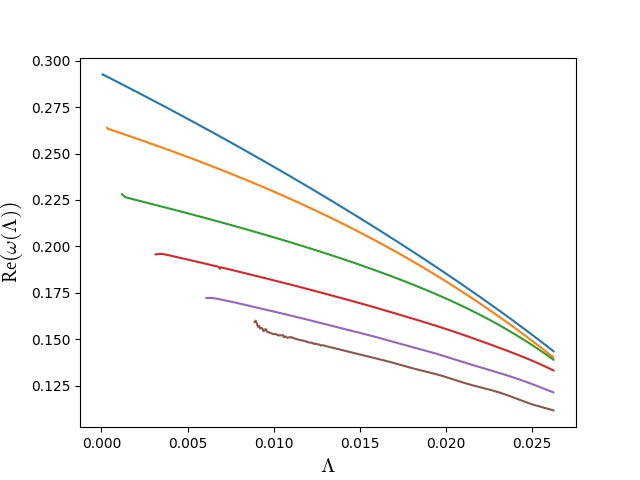}}
    \subfloat[]{\includegraphics[width=0.5\textwidth]{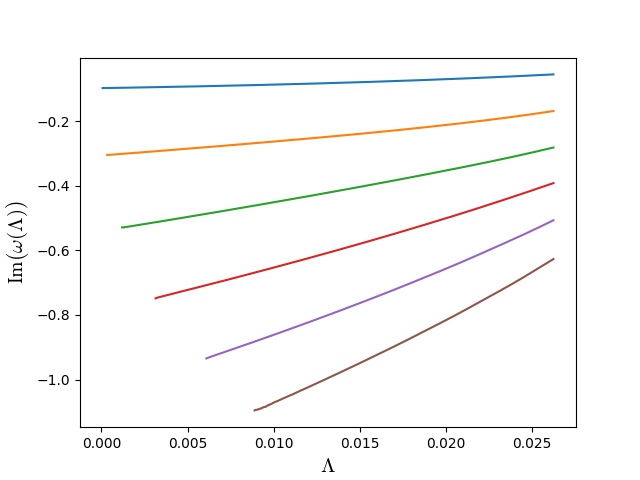}}
    \caption{QNMs of a massless scalar field ($\nu=0$) on SdS with $\bhm=1$ as a function of $\Lambda$. (A) Real part of modes converging to the Schwarzschild limit as $\Lambda\searrow 0$. Only modes with positive real part are shown, those with negative real part are mirrored. (B) Imaginary part of modes converging to the Schwarzschild limit.}
    \label{SConvModes}
\end{figure}

In Figure~\ref{SConvModes}, we again consider massless scalar fields, but now fixing $\bhm=1$ and letting $\Lambda$ vary. (This is equivalent, up to scaling of QNMs by $\bhm$, to taking $\bhm\searrow 0$ and keeping $\Lambda$ fixed, cf.\ \eqref{EqIN}.) We plot the real and imaginary parts of SdS QNMs that converge to a nonzero limit; the limits are, conjecturally, QNMs of a mass $1$ Schwarzschild black hole. (Since our calculations are done at fixed $\Lambda$ but with $\bhm$ tending to zero, in which the QNMs of interest scale like $\bhm^{-1}$ and thus become very large, we have poor resolution for $\bhm=1$ at small values of $\Lambda$. Hence, we cannot track the QNMs accurately down to $\Lambda=0$ here.)

\begin{figure}[!ht]
    \centering
    \subfloat[]{\includegraphics[width=0.5\textwidth]{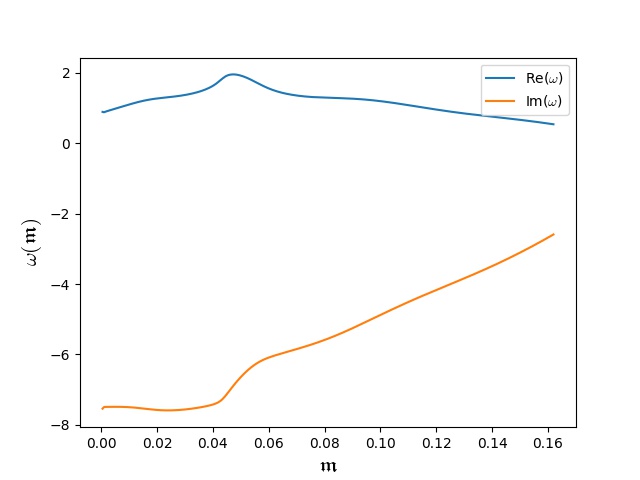}}
    \subfloat[]{\includegraphics[width=0.5\textwidth]{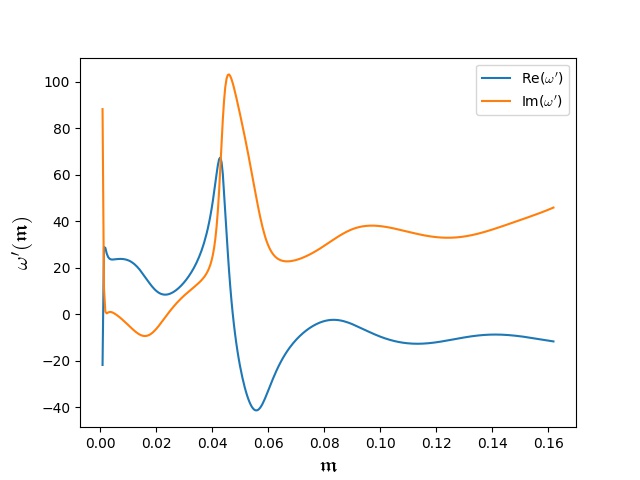}}
    \caption{(a) Convergence of a mode of a massive scalar field ($\Lambda=3$, $l=0$, $\nu=3$) as a function of $\bhm$. (b) Derivative of the same mode with respect to $\bhm$.}
    \label{MassiveModes}
\end{figure}

\begin{figure}[!ht]
    \centering
    \subfloat[$l=0$]{\includegraphics[width=0.5\textwidth]{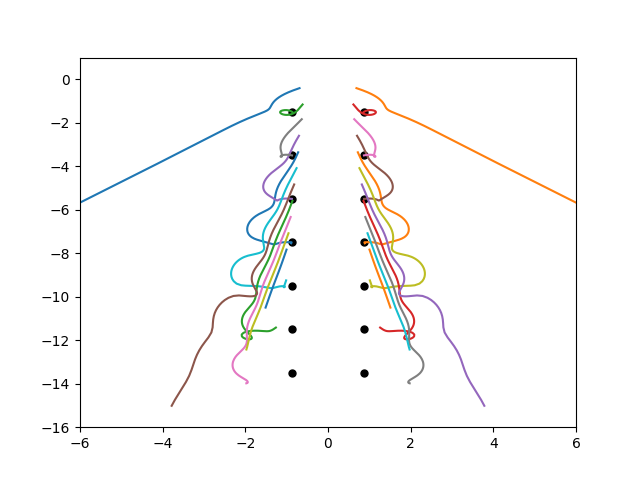}}
    \subfloat[$l=1$]{\includegraphics[width=0.5\textwidth]{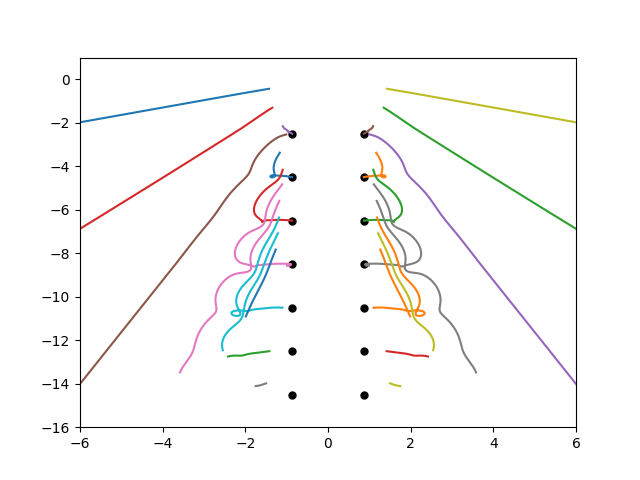}}
    \caption{Trails followed by the QNMs of a massive scalar field ($\Lambda=3$, $\nu=3$) for (A) $l=0$ and (B) $l=1$. The QNMs in the de Sitter limit are shown in black.}
    \label{SdSTrailsMassive}
\end{figure}

In Figure~\ref{MassiveModes}, we turn to massive scalar fields. We fix $l=0$, $\nu=3$, and plot the real and imaginary part of a QNM $\omega(\bhm)\in\QNM_l(3,\bhm,\nu)$ on SdS converging to a dS QNM as a function of $\bhm$ while fixing $\Lambda=3$. We also show the derivative of $\omega(\bhm)$ with respect to the black hole mass. Contrary to the case of the massless scalar field, the numerics indicate that the dependence $\omega(\bhm)$ on $\bhm$ is nontrivial to first order, i.e.\ $\omega'(0)\neq 0$. In Figure~\ref{SdSTrailsMassive}, we plot the trails followed by large mass ($\nu=3$) scalar field QNMs on SdS ($\Lambda=3$) as we vary $\bhm$. We clearly see modes which appear to diverge to infinity, conjecturally corresponding to modes converging to the Schwarzschild limit, along with modes which converge to the values in the dS limit.

\begin{prob}[Appearance of Schwarzschild QNMs]
  Fix $C>0$. Show that the set $\QNM(\Lambda,\bhm,\nu)\cap\{\Im\omega>\max(-C\bhm^{-1},-C\Re\omega)\}$ is close to $\bhm^{-1}$ times the QNM spectrum of a mass $1$ Schwarzschild black hole.
\end{prob}

\begin{rmk}[Larger subsets of $\C$]
\label{RmkNR}
  A description of $\QNM(\Lambda,\bhm,\nu)$ in $\Im\omega\geq-C\bhm^{-\alpha}$ even for small $\alpha>0$ is likely considerably more subtle due to the accumulation of dS QNMs along the negative imaginary axis. This is related to the fact that the Schwarzschild resolvent does not extend meromorphically to the complex plane, but at best to the logarithmic cover of $\C\setminus i(-\infty,0]$, cf.\ \cite{SaBarretoZworskiResonances,DonningerSchlagSofferSchwarzschild,TataruDecayAsympFlat,HintzPrice}.
\end{rmk}

\subsubsection{Dual States}

As explained in \S\ref{dualStateProcedure}, we extended the procedure \cite{JansenMathematica} to also compute dual states. Plots of these numerical states are shown in Figure \ref{duals}. However, these are \emph{distributions} rather than smooth functions; in the case of pure dS state for the massless field, one can show almost all dual states are sums of differentiated $\delta$-distributions supported on the cosmological horizon, as discussed in \cite{HintzXiedS}. Thus, we only check for weak convergence of our numerical dual states by integrating against various test functions. The results are shown in Tables \ref{l0dual} and \ref{l1dual}.

\begin{figure}[!ht]
    \centering
    \subfloat[]{\includegraphics[width=0.5\textwidth]{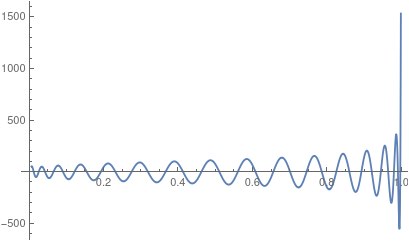}}
    \subfloat[]{\includegraphics[width=0.5\textwidth]{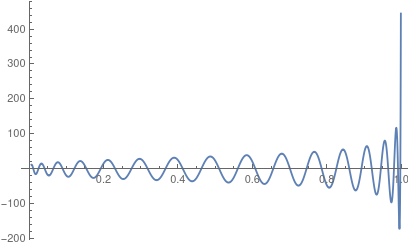}}
    \caption{Fix $\Lambda=3$, $\nu=0$. (A) Numerical dual resonant state that converges to the $l=0,\omega=-2i$ dual resonant state ($-2\delta(r-1)-\delta'(r-1)$) in dS space. (B) Numerical dual resonant state that converges to the $l=1,\omega=-3i$ dual resonant state ($2\delta'(r-1)+\delta''(r-1)$) in dS space.}
    \label{duals}
\end{figure}

\begin{table}[!ht]
	\centering
	\begin{tabular}{|c|c|c|c|c|c|}
		\hline\rowcolor[gray]{0.9}[\tabcolsep]
		$\bhm$ & 1 & $r-r_\cC$ & $(r-r_\cC)^2/2!$ & $(r-r_\cC)^3/3!$ & $(r-r_\cC)^4/4!$\\\hline
		0.1000 & -0.786 & 1.0 & -0.338 & 0.0331 & -0.004\\\hline
		0.0500 & 4.2237 & 1.0 & -1.747 & 0.3052 & -0.049\\\hline
		0.0250 & -3.160 & 1.0 & 0.3107 & -0.060 & 0.0107\\\hline
		0.0100 & -2.271 & 1.0 & 0.0706 & -0.014 & 0.0026\\\hline
		0.0050 & -2.120 & 1.0 & 0.0311 & -0.006 & 0.0012\\\hline
		0.0025 & -2.056 & 1.0 & 0.0146 & -0.003 & 0.0005\\\hline
		0 & -2 & 1 & 0 & 0 & 0\\\hline
	\end{tabular}
	\caption{Fix $\Lambda=3$, $\nu=0$. Stated are integrals (with respect to the integration measure $\dd r$) of the numerical dual state at $l=0,\omega\approx-2i$ (normalized so that all entries in the third column are $1$) against various test functions for various black hole masses; here $r_\cC=r_\cC(\bhm)$. The exact value is given for pure dS ($\bhm=0$) using the results in \cite{HintzXiedS}: the dS dual state is $-2\delta(r-1)-\delta'(r-1)$. We observe convergence to the values in pure dS space.}
	\label{l0dual}
\end{table}

\begin{table}[!ht]
	\centering
	\begin{tabular}{|c|c|c|c|c|c|}
		\hline\rowcolor[gray]{0.9}[\tabcolsep]
		$\bhm$ & 1 & $r-r_\cC$ & $(r-r_\cC)^2/2!$ & $(r-r_\cC)^3/3!$ & $(r-r_\cC)^4/4!$\\\hline
		0.1000 & 0.4495 & -1.339 & 1.0 & -0.206 & -0.004\\\hline
		0.0500 & 6.0294 & -7.200 & 1.0 & 0.5644 & -0.050\\\hline
		0.0250 & 1.3234 & -3.076 & 1.0 & 0.0935 & -0.009\\\hline
		0.0100 & 0.4364 & -2.352 & 1.0 & 0.0273 & -0.003\\\hline
		0.0050 & 0.2085 & -2.168 & 1.0 & 0.0125 & -0.001\\\hline
		0.0025 & 0.1013 & -2.078 & 1.0 & 0.0042 & -9.952\\\hline
		0 & 0 & -2 & 1 & 0 & 0\\\hline
	\end{tabular}
	\caption{Integrals (with respect to $\dd r$) of the radial part of the numerical dual state at $l=1,\omega=-3i$ (normalized so that all entries in the fourth column are $1$) against various test functions for various black hole masses; here $r_\cC=r_\cC(\bhm)$. The exact value is given for pure dS ($\bhm=0$) using the results in \cite{HintzXiedS}: the radial dependence of the dS dual state is $2\delta'(r-1)+\delta''(r-1)$. We observe convergence to the values in pure dS space. The penultimate entry in the last column is likely a numerical artefact.}
	\label{l1dual}
\end{table}

\subsection{Naive perturbation theory calculation}
\label{perturbationCalc}

This section is heuristic, i.e.\ we do not carefully state function spaces etc. Suppose that the frequency of a specific mode $\omega(\bhm)$ has an expansion for small masses
\[
  \omega(\bhm)=\omega(0)+\bhm\omega'(0)+\cO(\bhm^2).
\]
Suppose also that the corresponding mode solution can be expanded as
\[
  u_\bhm(r,\theta,\phi) = u_0(r,\theta,\phi)+\bhm\left.\partial_\bhm u_\bhm(r,\theta,\phi)\right|_{\bhm=0}+\cO(\bhm^2).
\]
Note that the (formal) expansion of the spectral family in \eqref{EqISdSSpec} around $\bhm=0$ reads
\[
  P_{\bhm,\Lambda,\nu}(\omega(\bhm))=P_{0,\Lambda,\nu}(\omega(0))+\bhm\left.\partial_\bhm P_{\bhm,\Lambda,\nu}(\omega(0))\right|_{\bhm=0}+\bhm \omega'(0)P'_{0,\Lambda,\nu}(\omega(0))+\cO(\bhm^2).
\]
Then to first order we have
\begin{align*}
    0 &=P_{\bhm,\Lambda,\nu}(\omega(\bhm))u_\bhm(r,\theta,\phi) \\
    & =  P_{0,\Lambda,\nu}(\omega(0))u_0(r,\theta,\phi) \\
    &\qquad + \bhm\Bigl(\partial_\bhm P_{\bhm,\Lambda,\nu}(\omega(0))|_{\bhm=0}u_0(r,\theta,\phi) + \omega'(0)P'_{0,\Lambda,\nu}(\omega(0))u_0(r,\theta,\phi) \\
    &\qquad\hspace{15em} + P_{0,\Lambda,\nu}(\omega(0))\left.\partial_\bhm u_\bhm(r,\theta,\phi)\right|_{\bhm=0}\Bigr) + \cO(\bhm^2).
\end{align*}
The zeroth order term vanishes. Then taking the inner product with the de~Sitter dual state $u_0^*$ gives
\begin{equation}
    \omega'(0)=-\frac{\braket{\left.\partial_\bhm P_{\bhm,\Lambda,\nu}(\omega(0))\right|_{\bhm=0}u_0,u^*_0}}{\braket{P'_{0,\Lambda,\nu}(\omega(0))u_0,u^*_0}}.
\end{equation}
(The term involving $\pa_\bhm u_\bhm|_{\bhm=0}$ vanishes upon integration by parts.) In suitable coordinates, and defining $\alpha_\bhm$ as in \eqref{EqCMalpha}, $P_{\bhm,\Lambda,\nu}$ takes the form given by \eqref{EqCMP}. We can then substitute for $u_0$ and $u_0^*$ the explicit expressions found in \cite{HintzXiedS} to find that $\omega'(0)=0$; that is, to first order there is no correction to the QNM frequencies for a massless scalar field. (For example, for the $l=1$, $\omega(0)=-i$ state, we have $u_0=r$ and $u_0^*=\delta(r-1)$.)

\begin{prob}[Asymptotic expansion]
\label{ProbNRPert}
  Justify the above perturbative calculation for $\nu=0$, and prove a detailed asymptotic expansion of $\omega_\bhm$ as $\bhm\searrow 0$ for general $\nu$.
\end{prob}

\bibliographystyle{alpha}
\newcommand{\etalchar}[1]{$^{#1}$}

\end{document}